\definecolor{DarkGray}{rgb}{0.66, 0.66, 0.66}
\definecolor{DarkPowderBlue}{rgb}{0.0, 0.2, 0.6}
\definecolor{fluorescentyellow}{rgb}{0.8, 1.0, 0.0}
\Crefname{algocf}{Algorithm}{Algorithms}
\crefname{algocfline}{line}{lines}
\Crefname{invariant}{Invariant}{Invariants}
\Crefname{claim}{Claim}{Claims}
\Crefname{subclaim}{Subclaim}{Subclaims}
\theoremstyle{plain}
\newtheorem{theorem}{Theorem}
\newtheorem{lemma}[theorem]{Lemma}
\newtheorem{obs}[theorem]{Observation}
\newtheorem{prop}[theorem]{Proposition}
\newtheorem{cor}[theorem]{Corollary}
\theoremstyle{plain}
\theoremstyle{definition}
\newtheorem{defn}[theorem]{\protect\definitionname}
\theoremstyle{plain}
\theoremstyle{remark}
\newenvironment{proof}[1][\protect\proofname]{\par
	\normalfont\topsep6\p@\@plus6\p@\relax
	\trivlist
	\itemindent\parindent
	\item[\hskip\labelsep\scshape #1]\ignorespaces
}{%
	\endtrivlist\@endpefalse
}
\providecommand{\proofname}{Proof}
\newcommand*\samethanks[1][\value{footnote}]{\footnotemark[#1]}
\providecommand{\claimname}{Claim}
\providecommand{\definitionname}{Definition}
\newcommand{\nf}{\nicefrac}
\newcommand{\EE}{\mathbb{E}}
\newcommand{\RR}{\mathbb{R}}
\newcommand{\ones}{\mathbbm{1}}
\newcommand{\cA}{\mathcal{A}}
\newcommand{\cB}{\mathcal{B}}
\newcommand{\cC}{\mathcal{C}}
\newcommand{\cF}{\mathcal{F}}
\newcommand{\cI}{\mathcal{I}}
\newcommand{\cK}{\mathcal{K}}
\newcommand{\cS}{\mathcal{S}}
\newcommand{\cU}{\mathcal{U}}
\newcommand{\cX}{\mathcal{X}}
\newcommand{\cost}{\operatorname{cost}}
\newcommand\all{\operatorname{all}}
\newcommand{\eps}{\varepsilon}
\newcommand{\sse}{\subseteq}
\newcommand{\btau}{\pmb{\tau}}
\newcommand{\bmu}{\pmb{\mu}}
\newcommand{\be}{{\bm e}}
\newcommand{\bp}{{\bm p}}
\newcommand{\bq}{{\bm q}}
\newcommand{\bx}{{\bm x}}
\newcommand{\by}{{\bm y}}
\newcommand{\RT}{Random Thresholds\xspace}
\newcommand{\intervals}{\mathcal{I}}
\newcommand{\cuts}{\cK}
\newcommand{\Univ}{U}
\newcommand{\Set}{V}
\newcommand{\eat}[1]{}
\newcommand{\kmed}{$k$-medians\xspace}
\newcommand{\kmeans}{$k$-means\xspace}
\newcommand{\supp}{\operatorname{supp}}
\newcommand{\poly}{\operatorname{poly}}
\newcommand{\opt}{\ensuremath{\text{OPT}}\xspace}
\begin{document}

\title{The Price of Explainability for Clustering\thanks{Part of this work was done while visiting the \emph{Data-Driven Decision
Processes} semester program at the Simons Institute for the Theory of Computing.}}

\author{Anupam Gupta\thanks{Carnegie Mellon University. \texttt{anupamg@andrew.cmu.edu,
    mpittu@andrew.cmu.edu, rachely@andrew.cmu.edu}} \and Madhusudhan
  Reddy Pittu\samethanks[2]
  \and 
Ola Svensson\thanks{EPFL, Lausanne. \texttt{ola.svensson@epfl.ch}}
\and Rachel Yuan\samethanks[2]}

\maketitle
\begin{abstract}
  Given a set of points in $d$-dimensional space, an explainable
  clustering is one where the clusters are specified by a
  tree of axis-aligned threshold cuts. Dasgupta et al.\ (ICML
  2020) posed the question of the \emph{price of explainability}: the
  worst-case ratio between the cost of the best explainable
  clusterings to that of the best clusterings. 

  We show that the price of explainability for $k$-medians is at most
  $1+H_{k-1}$; in fact, we show that the popular Random Thresholds
  algorithm has \emph{exactly} this price of explainability, matching
  the known lower bound constructions. We complement our tight
  analysis of this particular algorithm by constructing instances
  where the price of explainability (using \emph{any} algorithm) is at
  least $(1-o(1)) \ln k$, showing that our result is best possible, up
  to lower-order terms.
  We  also improve the price of explainability for the
  $k$-means problem  to $O(k \ln \ln k)$ from the
  previous $O(k \ln k)$, considerably closing the gap to the lower
  bounds of $\Omega(k)$. 
  
  Finally, we study  
  the algorithmic question of finding the best explainable clustering:  We show that explainable $k$-medians and $k$-means cannot be approximated  better than  $O(\ln k)$,
  under standard complexity-theoretic conjectures. This essentially settles the approximability of explainable $k$-medians and leaves open the intriguing possibility to get significantly better approximation algorithms for $k$-means than its price of explainability.

\end{abstract}

\thispagestyle{empty}
\newpage{}
\setcounter{page}{1}

\section{Introduction}
\label{sec:introduction}

Clustering is a central topic in optimization, machine learning, and algorithm design, with $k$-medians and $k$-means being two of the most prominent examples.  
In recent years, mainly motivated by the impressive but still mysterious advances in machine learning,  there has been an increased interest in the transparency and in the explainability of solutions. 
In the context of clustering, this was formalized in a highly influential paper by Dasgupta et al.~\cite{DFMR20}.

To motivate the concept of explainability, consider the task of clustering $n$ points in $\mathbb{R}^d$ into $k$ clusters. If we solve $k$-means, the clusters are in general given by a Voronoi diagram where each cluster/cell is defined by the intersection of hyperplanes. %
Each cluster may be defined using up to $k-1$ hyperplanes, each one of them possibly depending on all $d$ dimensions with arbitrary coefficients. 
Since the dimensions typically correspond to features (e.g., ``age'', ``weight'', and ``height'' are natural features in a dataset of people), arbitrary linear combinations of these features may be difficult to interpret. 
To achieve more explainable solutions, we may need to restrict our algorithms to find clusters with simpler descriptions.

The model in~\cite{DFMR20} achieves explainability in an elegant way resembling the classical notion of decision trees in theoretical computer science. 
Specifically, a clustering is called \emph{explainable} if it is given by a decision tree, where each internal node splits data points with a \emph{threshold cut} in a single dimension (feature), and each of the $k$ leaves corresponds to a unique cluster. This leads to more explainable solutions already in two dimensions (see, e.g., \Cref{fig:optimal_vs_tree}); %
the benefit is even more clear in higher dimensions. Indeed, the binary tree structure gives an easy sequential procedure for classifying points, and since each threshold cut is axis-aligned, there is no linear combinations of features. Moreover, the total number of dimensions/features used to describe the clustering is at most $k-1$, independent of $d$, which is attractive for high-dimensional data\footnote{We remark that dimensionality reduction for $k$-median and $k$-means show that one can reduce the dimension of the data points to $O(\log(k)/\epsilon^2)$~\cite{BecchettiBC0S19,MakarychevMR19}. However, those techniques take arbitrary linear combinations of the original dimensions and therefore destroy explainability.}. %

\definecolor{cluster0color}{RGB}{162,201,221}
\definecolor{cluster1color}{RGB}{50,158,43}
\definecolor{cluster2color}{RGB}{233,176,102}
\definecolor{cluster3color}{RGB}{106,61,154}
\definecolor{cluster4color}{RGB}{165,83,37}

\begin{figure*}[t]
    \centering
    \subfloat[Optimal $5$-means clusters]{
         \includegraphics[width=.33\textwidth]{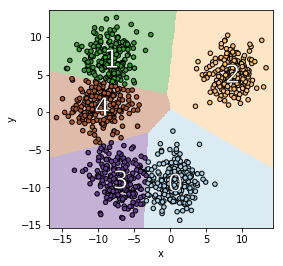}
    \label{fig:optimal_clusters}}
    \hfill
     \subfloat[Tree based $5$-means clusters]{
         \includegraphics[width=.33\textwidth]{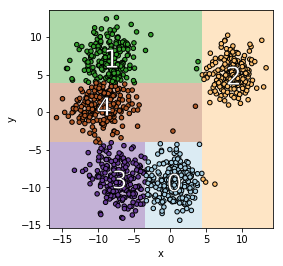}
      \label{fig:tree_clusters}
     }
     \hfill
     \subfloat[Threshold tree]{
            \resizebox{.25\textwidth}{!}{%
            \begin{tikzpicture}
            [inner/.style={shape=rectangle, rounded corners, draw, align=center, top color=white, bottom color=gray!40, scale=.85},
            leaf/.style={shape=rectangle, rounded corners, draw, align=center, scale=1},
            level 1/.style={sibling distance=4mm},
            level 2/.style={sibling distance=3mm},
            level 3/.style={sibling distance=2mm},
            level 4/.style={sibling distance=2mm},
            level distance=8mm]
            \Tree
            [.\node[inner]{$x \leq 4.5$};
                [.\node[inner]{$y \leq -4$};
                    [.\node[inner]{$x \leq -3.5$};
                        \node[leaf, top color=cluster3color!60, bottom color=cluster3color!80]{\textbf{3}};
                        \node[leaf, top color=cluster0color!60, bottom color=cluster0color!80]{\textbf{0}};    
                    ]
                    [.\node[inner]{$y \leq 4$};
                        \node[leaf, top color=cluster4color!60, bottom color=cluster4color!80]{\textbf{4}};
                        \node[leaf, top color=cluster1color!60, bottom color=cluster1color!80]{\textbf{1}};    
                    ]
                ]
                \node[leaf, top color=cluster2color!60, bottom color=cluster2color!80]{\textbf{2}};
            ]
            \node[scale=2] at (-2.5,-3.5) {$ $};
            \end{tikzpicture}
            } \label{fig:decision_tree}
     }
    \caption{
    	Example from~\cite{DFMR20}. The optimal $5$-means clustering (left) uses combinations of both features. The explainable clustering (middle) uses axis-aligned rectangles summarized by the threshold tree (right).}
    \label{fig:optimal_vs_tree}
\end{figure*}

Explainability is thus a very desirable and appealing property, but the best explainable clustering may have cost much higher than the cost of the best unrestricted clusterings.  This tradeoff is captured by the \emph{price of explainability}:  the loss in cost/quality if we restrict ourselves to explainable clusterings.

In their original paper, Dasgupta et al.~\cite{DFMR20} gave a greedy algorithm which takes an arbitrary ``reference'' clustering and produces an explainable clustering from it. It repeatedly adds threshold cuts which separate the centers of the reference clustering until the threshold tree has one leaf for each center of the reference clustering. Since only the points separated in the threshold tree from their closest reference center suffer an increase in cost, %
their algorithm repeatedly selects a threshold cut that separates the
fewest points from their closest reference center.  They proved that
it outputs an explainable clustering with cost $O(k)$ times higher for the case of \kmed, and $O(k^2)$ times higher for the case of \kmeans.  They also show a lower bound of $\Omega(\ln k)$ for both of these problems.

Since the greedy algorithm's analysis is tight, %
an alternative strategy was independently proposed by~\cite{GJPS21,MakarychevS21,EMN21}:  %
take  random cuts instead!
The strategy is especially elegant in the case of \kmed (the distribution of cuts is more complex than uniform in the case of \kmeans): 
\begin{quote}
    Repeatedly select threshold cuts \emph{uniformly at random} among those that separate centers of the reference clustering. 
\end{quote}
We refer to this as the \RT algorithm (see \S\ref{sec:rand-threshold} for a formal description). 
While the algorithm is easy to describe, its performance guarantee has remained an intriguing question. 
There are simple instances in which it increases the cost by a factor of $1+H_{k-1}$, where $H_{k-1} = \nf11 + \nf12 + \nf13 + \ldots + \nf1{(k-1)}$ is the $(k-1)^{th}$ harmonic number (see \S\ref{sec:tight}), and this was conjectured to be the worst case for the \RT algorithm~\cite{GJPS21}. 

On a high level, a difficulty in analyzing the \RT algorithm is that it may take prohibitively expensive cuts with a small probability. 
To avoid this and other difficulties, the results in~\cite{GJPS21,MakarychevS21,EMN21} considered  more complex variants that intuitively forbid such expensive cuts. 
Specifically, \cite{GJPS21} gave a variant that outputs a threshold tree whose expected cost increases by at most a $O(\ln^2 k)$ factor, and both~\cite{MakarychevS21,EMN21} obtain a  better performance guarantee of $O(\ln k \ln \ln k)$ for their variants of the \RT algorithm. 
These results give an exponential improvement over that in~\cite {DasguptaFM22} but fail to settle the price of explainability, and they leave open the conjectured performance of the natural \RT algorithm.

Our main results on the price of explainability are (a)~to settle this
conjecture in the affirmative (i.e., to give a \emph{tight} analysis
of the \RT algorithm), and (b)~to show that its price of
explainability of $1+H_{k-1} = (1+o(1)) \ln k$ is not only
asymptotically correct, but also \emph{tight up to lower order terms}:
we cannot do much better regardless of the algorithm.

\begin{restatable}[Upper bound for \kmed]{theorem}{HkThm}
  \label{thm:hk-upper}
  The price of explainability for \kmed  is at
  most \( 1 + H_{k-1}\).
  Specifically, given any reference \kmed clustering, the \emph{Random Thresholds} algorithm outputs an
  explainable clustering with expected cost at most $1 + H_{k-1}$
  times the cost of the reference clustering.
\end{restatable}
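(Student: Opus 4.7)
The plan is to prove the per-point bound $\EE[\dist(x,\pi(x))]\le (1+H_{k-1})\,\dist(x,c)$ for every input point $x$ with reference center $c$, where $\pi(x)$ denotes the center in $x$'s leaf of the random threshold tree; summing over $x$ then yields the theorem. By the triangle inequality, $\dist(x,\pi(x))\le \dist(x,c)+\mathbf{1}[\pi(x)\neq c]\cdot \dist(c,\pi(x))$, so the goal reduces to bounding the ``misrouting term''
\[
\EE\bigl[\mathbf{1}[\pi(x)\neq c]\cdot \dist(c,\pi(x))\bigr] \;\le\; H_{k-1}\cdot \dist(x,c).
\]

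\textbf{Inductive framing.} I would prove this displayed bound by induction on the number of reference centers $|S_v|$ at a node $v$ of the random threshold tree (where $S_v$ is the set of reference centers routed into $v$'s subtree), maintaining the invariant that for every point $x$ currently colocated with its reference $c\in S_v$, the expected future misrouting cost of $x$ from $v$ onward is at most $H_{|S_v|-1}\cdot \dist(x,c)$. The base case $|S_v|=1$ is vacuous. For the inductive step I condition on the random cut $(i,\theta)$ at $v$. Either the cut keeps $x$ with $c$ in a child with $|S'|<|S_v|$ centers (the hypothesis then gives $H_{|S'|-1}\,\dist(x,c)$), or it separates them, in which case $x$ lands in the child subtree not containing $c$ and I write $\dist(c,\pi(x)) \le \dist(c,c'_{(i,\theta)})+\dist(c'_{(i,\theta)},\pi(x))$, where $c'_{(i,\theta)}$ is the center on $x$'s side of the cut whose $i$-th coordinate is closest to $\theta$. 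The second term is controlled by reapplying the inductive hypothesis inside that child subtree with $x$ now treated as having ``effective reference'' $c'_{(i,\theta)}$; a triangle-inequality bookkeeping step turns this into a contribution proportional to $\dist(x,c)+\dist(c,c'_{(i,\theta)})$.

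\textbf{Core calculation.} The inductive step then reduces to a single-node analytic inequality. Writing the cut distribution as ``dimension $j$ chosen with probability $L_j/\ell_v$ (where $L_j$ is the $j$-th coordinate range of $S_v$ and $\ell_v=\sum_j L_j$), then threshold $\theta$ uniform on that range,'' the expected one-step damage for $x$ becomes a sum over dimensions of an integral $\tfrac{1}{\ell_v}\int_{c_i}^{x_i} \dist(c,c'_{(i,\theta)})\,d\theta$, while the expected inductive contribution from the child containing $x$ is a weighted combination of $H_{|S|-1}$ terms depending on the size $a(i,\theta)$ of $x$'s side of the cut. The key fact needed is a harmonic identity of the form $H_{k-1} = \tfrac{1}{k-1}\sum_{a=1}^{k-1}(1+H_{a-1})$, or a weighted variant thereof: since $\theta \mapsto \dist(c,c'_{(i,\theta)})$ is piecewise linear with breakpoints exactly at the centers' $i$-th coordinates, pairing each integration subinterval with the corresponding $H_{a-1}$ should telescope to exactly $H_{k-1}\cdot \dist(x,c)$.

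\textbf{Main obstacle.} The principal difficulty --- and what distinguishes \RT from the modified schemes of \cite{GJPS21,MakarychevS21,EMN21} --- is that \RT does not discard cuts that cause large one-step damage, so the analysis cannot afford to throw any ``bad'' cuts away. Matching the tight $H_{k-1}$ bound therefore requires exact cancellation between the one-step cost and the inductive term, exploiting both the careful choice of witness center $c'_{(i,\theta)}$ and a precise (non-slack) harmonic identity. I expect this integral manipulation to be the delicate technical core of the proof, and the place where the previous analyses lose a $\log\log k$ or constant factor; any looseness in either the triangle-inequality bound or the witness choice would cost us the tight leading constant.
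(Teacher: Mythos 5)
Your high-level setup (linearity of expectations, reduce to a single point at the origin, then bound the expected distance to the final center) matches the paper's reduction, but the inductive argument for the reduced problem has a genuine gap in the separation step, and it diverges from both of the paper's two proofs (the exponential-clocks argument of \S\ref{sec:expon-clocks-proof}, which only gives $(1+o(1))\ln k$, and the pseudo-derivative/lifting argument of \S\ref{sec:tight}, which gives the sharp $1+H_{k-1}$).

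The problem is the step where the cut separates $x$ from $c$: writing $\dist(c,\pi(x)) \le \dist(c,c')+\dist(c',\pi(x))$ for a witness $c'$ on $x$'s side and then re-applying the induction hypothesis to the pair $(x,c')$ double-counts. Take $k=3$, the axis-aligned instance with centers $\be_1$, $M\be_2$, $M\be_3$ ($M$ large), $x$ at the origin, $c=\be_1$. If the first cut is in dimension $1$ (probability $\tfrac{1}{1+2M}$) the witness must be $c'\in\{M\be_2,M\be_3\}$, and your bound gives $\dist(c,c')+H_1\dist(x,c') = (1+M)+M = 1+2M$, while the true $\dist(c,\pi(x))$ is $1+M$ for either survivor; the slack comes entirely from the triangle inequality, which is off by a factor close to $3$ when $\pi(x)\ne c'$. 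If the cut is in dimension $2$ or $3$ the induction gives $H_1\cdot\dist(x,c)=1$. Summing, your bound tends to $\tfrac{(1+2M)+2M}{1+2M}\to 2$ as $M\to\infty$, whereas the invariant you need to maintain is $H_2=3/2$. So the induction fails to close already at $k=3$, and the extra $\approx H_{k-2}/(k-1)$ compounds across levels for larger $k$.

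There is a clean inductive proof along the lines you sketch, but only for the axis-aligned case, and it uses a sharper estimate than triangle-inequality recursion: after the cut that removes $c$, one bounds the expected distance to the last survivor of $\{d_2\be_2,\ldots,d_k\be_k\}$ by the \emph{average} $\tfrac{d_2+\cdots+d_k}{k-1}$ (because farther centers are stochastically less likely to survive), not by $(1+H_{k-2})$ times the smallest remaining distance. This estimate, discussed in the paper's technical overview, is exactly tight and makes the harmonic identity work; but the underlying monotonicity fails once centers are not on separate axes, which is why the actual proof of the theorem is forced into the cut-metric embedding, the algebraic view of $f(U)$ as a function of the $z_S$'s, and the pseudo-derivative/lifting machinery. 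Your ``core calculation'' hopes for exact cancellation that the triangle inequality through an intermediate $c'$ cannot deliver, even in the friendliest special case.
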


\begin{restatable}[Lower Bound for \kmed]{theorem}{LBPoE}
  \label{thm:kmed-lower}
  There exist instances of \kmed for which any explainable clustering
  has cost at least $(1-o(1)) \ln k$ times the cost of the optimal
  \kmed clustering. 
\end{restatable}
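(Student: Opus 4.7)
The plan is to construct, for every $k$, an instance of \kmed in $\RR^{k-1}$ whose optimum cost is a normalized value $\opt$ but for which every threshold tree has cost at least $(1-o(1))\ln k \cdot \opt$. Morally the instance should mirror the worst case of the Random Thresholds analysis in \Cref{thm:hk-upper}: the upper bound $1+H_{k-1}$ is achieved by \emph{some} algorithm on \emph{some} instance, and we want to show that no algorithm can avoid it except by lower-order terms. So one should hunt for an instance on which even the best explainable clustering is forced to imitate the harmonic-sum behavior of Random Thresholds.

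For the construction, I would take $k$ reference centers $c_k=0$ and $c_i = M e_i$ for $i \in [k-1]$, with $M$ large. At each $c_i$ for $i \in [k-1]$ place a small ``anchor'' cluster of weight $w_i$ located exactly at $c_i$; at $c_k$ place a much larger cluster whose mass is distributed isotropically along every axis, for example a fraction $\alpha_i$ of its mass perturbed to $\delta e_i$ for each $i \in [k-1]$. Choosing the weights $w_i$ and the fractions $\alpha_i$ so that (a) the optimum using centers $c_1,\dots,c_k$ pays a total cost of $\opt$, and (b) any axis-aligned threshold cut isolating one center among the $s$ currently under consideration necessarily misplaces mass worth at least $\Omega(\opt/s)$, is the crux of the design. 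The central property to engineer is a scale-invariance: after any peel-off cut, the residual instance on the larger side should look like a smaller copy of the same type, so that the cost chain telescopes through the tree.

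Given the construction, the lower bound proceeds by analyzing an arbitrary threshold tree $T$. Walk from the root to any leaf, and observe that the cuts along this path refine a sequence of center subsets $[k] = S_0 \supseteq S_1 \supseteq \cdots \supseteq S_\ell = \{c\}$; since the tree has $k$ leaves, at least $k-1$ strict refinements occur along the tree, and at a node whose subset has size $s$ the cost charged must be at least $\opt/s$ by the per-cut lower bound from the construction. A careful amortization that assigns each internal node's cost to the leaf corresponding to the ``peeled-off'' center then yields a total tree cost of at least $\opt \cdot \sum_{s=2}^{k} 1/s = H_{k-1} \cdot \opt = (1-o(1))\ln k \cdot \opt$.

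The main obstacle is two-fold. First, I must show that every cut at a node with $s$ surviving centers pays $\Omega(\opt/s)$ regardless of the cut dimension or threshold --- in particular, the construction must rule out ``clever'' cuts that peel off several centers simultaneously at bargain cost, or reuse a previously split dimension to save cost; this is where the isotropy of the perturbations on the $c_k$ cluster is essential. Second, and more delicate, is matching the constant $1$ in front of $\ln k$: a naive construction yields only an $\Omega(\ln k)$ bound with a small absolute constant, and pushing to $(1-o(1))\ln k$ forces the weights, perturbation magnitudes, and the inductive charging argument to all be essentially optimal so that the worst sequence of peels really produces the full harmonic sum $H_{k-1}$.
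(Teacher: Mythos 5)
There is a genuine gap, and it is fatal: your construction does not force any cost on the best explainable clustering. You place the $k$ centers on distinct axes ($c_k=\bm 0$, $c_i=M e_i$), with anchors \emph{exactly at} each $c_i$ and a large origin cluster with mass perturbed to $\delta e_i$. But then for every $i$ the threshold cut $x_i \le \theta$ with $\delta < \theta < M$ peels off $c_i$ (together with its colocated anchor) from everything else \emph{without separating any data point from its closest center}. A caterpillar tree of $k-1$ such cuts separates all centers at total extra cost essentially zero, regardless of how you choose the weights $w_i$ and fractions $\alpha_i$. The ``isotropy'' of the origin cluster only constrains cuts whose threshold is below $\delta$, which a smart explainer never needs. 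More broadly, you are conflating two different things: the axis-aligned instance of \S4.1 shows that the Random Thresholds \emph{algorithm} has expected cost $\approx H_{k-1}\cdot\opt$, but the \emph{best} explainable tree for that instance simply cuts off the close center first and pays $\approx\opt$. A worst-case instance for RT is not a worst-case instance for the price of explainability, and your claimed per-cut lower bound ``any cut at a node with $s$ surviving centers pays $\Omega(\opt/s)$'' simply does not hold when centers live on disjoint axes.

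The paper's construction sidesteps exactly this by making centers \emph{share} coordinate values: take a set system $([d],\cS=\{S_1,\dots,S_k\})$, put the special center $\bmu_0$ at the origin, put $\bmu_i$ at the characteristic vector of $S_i$, place data points at $\{e_j\}_{j\in[d]}$ (closest to $\bmu_0$), and add colocated points at the centers to force the explainable tree to have one leaf per center. Now any threshold cut separating $\bmu_0$ from $\bmu_i$ must use some dimension $j\in S_i$, so the collection of cuts separating $\bmu_0$ from the rest is a hitting set of $\cS$; each such cut sends $e_j$ to a leaf whose center is some $\bmu_q$ with $j\in S_q$, incurring cost $\approx s$ instead of $1$. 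Thus the explainable cost is $\ge d + h(s-2-o(1))$ where $h$ is the hitting-set number, while the reference cost is $\le d$. A probabilistic set system with $hs/k\ge(1-o(1))\ln k$ (Lemma 3.2) then finishes the proof, and the same reduction combined with Feige's hardness gives Theorem 1.5. The argument is global, via the hitting-set structure, rather than a level-by-level amortization over the tree; if you want to pursue a construction in the spirit of yours, the key missing ingredient is shared coordinates among centers so that every dimension you might cut is ``in use'' by many center pairs.
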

These results resolve the performance of the \RT algorithm and the price of explainability  for \kmed. 

For \kmeans, we are unable to settle the price of explainability completely, but we make significant progress in closing the gap between known upper and lower bounds.  
Here, the best upper bound before our work was %
$O(k \ln k)$~\cite{EMN21} (see also~\cite{CharikarH22} for better guarantees when the input is low-dimensional).
Moreover, we know instances where any single threshold cut
increases the cost of the clustering by a factor $\Omega(k)$ (see,
e.g., \cite{GJPS21}), and hence the price of explainability of \kmeans is at least $\Omega(k)$. 

It is tempting to guess that the $O(k \ln k)$ guarantee in~\cite{EMN21} is tight, for the following reason. 
The first lower bound  $\Omega(\ln k)$  for \kmeans  in~\cite{DFMR20} is obtained by arguing that (i) a single threshold cut increases the cost by at least that of the reference clustering and (ii) a threshold tree has height $\Omega(\ln k)$, and so the total cost increases by a constant $\Omega(\ln k)$ times. 
Since we have examples where any single cut increases the cost by $\Omega(k)$, %
it is reasonable to %
hope for more complex instances to combine the two sources of hardness, and %
lose a $\Omega(k) \cdot \Omega(\ln k)$ factor.
However, we prove that this is \emph{not} the case and give an improved upper bound: %

\begin{restatable}[Upper bound for \kmeans]{theorem}{Kmeans}
  \label{thm:kmeans-main}
  The price of explainability for \kmeans is at
  most $O(k \ln \ln k)$. Specifically, given any reference
  \kmeans clustering, there
  exists an algorithm that outputs an explainable clustering with
  expected cost at most $O(k \ln \ln k)$ times the reference cost.
\end{restatable}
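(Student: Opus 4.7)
The plan is to follow the random coordinate cuts framework of Makarychev--Shan and Esfandiari--Mirrokni--Narayanan (whose naive analysis yields $O(k \ln k)$) and replace the loose ``cost per level $\times$ tree depth'' accounting with a sharper two-regime analysis that saves a factor of $\ln k / \ln \ln k$. Starting from a reference \kmeans clustering with centers $c_1,\ldots,c_k$, at each internal node of the threshold tree with surviving centers $C' \subseteq \{c_1,\ldots,c_k\}$ and point set $S$, I would sample a coordinate $i$ and threshold $\theta$ from a distribution that weights each candidate cut proportionally to the \emph{squared} length of the separation interval in coordinate $i$ -- the natural \kmeans analog of the Random Thresholds distribution used for \kmed in \Cref{thm:hk-upper}.

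The core new idea is to analyze, for each point $p$, its root-to-leaf path in two regimes. Classify the cut at level $\ell$ as \emph{cheap} for $p$ if the squared distance from $p$ to the cut hyperplane is at most a constant times $\|p - c_p\|^2$, and \emph{expensive} otherwise. Cheap cuts contribute a telescoping sum dominated by $O(k) \cdot \|p - c_p\|^2$ across all cheap levels combined -- essentially the ``$O(k)$ per single cut'' bound that already underlies the $\Omega(k)$ lower bound for \kmeans. The crucial claim is then that the expected number of expensive levels along $p$'s path is only $O(\ln \ln k)$; multiplied by the per-level cost bound this gives $O(k \ln \ln k) \cdot \|p-c_p\|^2$ in expectation, and summing over $p$ yields the theorem.

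To control expensive levels I would introduce a potential $\Phi_\ell(p)$ equal to the squared spread of $C'$ in the most recently separating coordinate, normalized by $\|p - c_p\|^2$. After a preprocessing step that snaps coordinates to scales between the minimum per-point reference cost and $\poly(k)$ times that quantity, we have $\Phi_0(p) \le \poly(k)$. The squared-length weighting concentrates the random cut near the $L_2$-centroid of $C'$ along the chosen coordinate, and so an \emph{expensive} event requires the random threshold to fall in a narrow window that certifies $\Phi$ has shrunk by more than a constant factor. Once $\Phi_\ell(p) < 1$, only cheap cuts are possible; hence expensive events occur at most $\log_2 \log_2 \poly(k) = O(\log \log k)$ times in expectation.

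The main obstacle will be establishing the $O(\log \log k)$ bound cleanly: the potential $\Phi_\ell(p)$ is not monotone, and the accounting must carefully handle levels at which $p$ is separated not from $c_p$ itself but from some other center that is ultimately irrelevant to $p$'s subtree (these cuts should be chargeable entirely to the cheap regime). A secondary difficulty is making the initial bound $\Phi_0(p) \le \poly(k)$ genuinely coordinate-free, which likely requires a coordinate-bucketing preprocessing argument that is free of additional logarithmic factors. Combining the cheap-regime telescoping bound with the expensive-regime count will give the advertised $O(k \ln \ln k)$ price of explainability.
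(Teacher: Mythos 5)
Your high-level intuition is right --- the improvement over $O(k\ln k)$ must come from a two-regime analysis that separates an $O(k)$ contribution from an $O(\ln\ln k)$ contribution rather than paying $O(k)$ per level over $\Theta(\ln k)$ levels --- but the specific mechanism you propose is quite different from the paper's, and I don't think it goes through.

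The paper's win-win is keyed to the \emph{stretch} $s(V) = \max\limits_{\text{far pairs}} \|\bx-\by\|_2^2 / d_2(\bx,\by)$ of a node's point set, not to a per-point cheap/expensive classification. At a node, if the stretch is high (at least $|V|/\ln^2|V|$) the algorithm issues a \emph{solo cut} conditioned on separating a far pair of maximum stretch; the key \Cref{clm:stretch-separation} (stretch-vs-separation) shows that a high-stretch cut splits the point set into pieces of expected balance $\gtrsim s(V)/\ln(|V|/s(V))$, which for solo nodes is $|V|/\poly\log$. This forces the recursion on solo nodes to telescope: cost-at-$v$ of $O(\EE[\sigma(v)]\cdot\ln\ln|V|)$ plus a recursive $O((|V|-\EE[\sigma(v)])\ln\ln|V|)$ sums by induction to $O(|V|\ln\ln|V|)\cdot\|\bp^*\|^2$. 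If the stretch is low, the algorithm issues \emph{bulk cuts} --- repeated rejection-filtered draws from $D_2$ until all far pairs are split --- analyzed by Wald's equation together with a ``diameter halves each bulk round, so any fixed cut is relevant for only $O(\ln k)$ rounds'' argument, giving $O(k)\cdot\|\bp^*\|^2$ total. The $\ln\ln$ in the theorem comes entirely from the solo branch, and the $k$ from the bulk branch.

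Your proposal misses this stretch mechanism and substitutes two claims that I don't think can be made to work as stated. First, your ``cheap cuts telescope to $O(k)$'' step is not a telescoping argument in any natural sense: the increase in cost when $p$ is separated from $c_p$ is governed by the diameter of the surviving box, not by the distance from $p$ to the cut hyperplane, so a ``cheap'' cut in your sense can still be catastrophically costly (that is precisely what drives the $\Omega(k)$ lower bound). The paper controls this with the bulk-cut stopping time and the log-levels argument, not with a per-cut telescope. Second, your $O(\ln\ln k)$ bound on expensive levels relies on $D_2$ ``concentrating near the $L_2$-centroid of $C'$ along the chosen coordinate''; this is not a property of $D_2$, which samples an interval with probability proportional to its squared length and then a point inside it with a tent-shaped density vanishing at both endpoints --- nothing localizes the cut near a centroid, and the potential $\Phi_\ell(p)$ you define on the ``most recently separating coordinate'' can jump up and down as the active coordinate changes (you note the non-monotonicity yourself). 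Without a monotone, geometrically decaying potential the $\log\log\poly(k)$ count doesn't follow. The missing ingredient, in short, is a quantitative link between per-cut expense and per-cut \emph{balance}; that is exactly what the stretch notion and \Cref{clm:stretch-separation} supply, and your framework has no analogue of it.
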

Hence the price of explainability for \kmeans lies between $\Omega(k)$ and $O(k \ln \ln k)$. 
We leave the tight answer as an intriguing open problem. In
particular, we conjecture that the lower bound is tight and that it is
achieved by the \kmeans variant of the \RT algorithm. (The ideas for
\kmeans also extend to %give asymptotically tight bounds for any
$\ell_p$ norms for $p \geq 2$; details will appear in a forthcoming version.)

Our final contribution is to \emph{study the approximability of
  explainable clustering}.  So far, the literature has mostly focused
on settling the price of
explainability~\cite{DFMR20,LM21,GJPS21,MakarychevS21,EMN21,CharikarH22}
and its behavior in a bi-criteria setting~\cite{MakarychevS22} where
the explainable clustering is allowed to form more than $k$ clusters.
These algorithms give upper bounds on the approximability of
explainable clustering since they are all efficient, and the cost of
an optimal unconstrained clustering is a valid lower bound on the best
explainable one. Recent work of \cite{BandyapadhyayFG22,Laber22}
asked the question: \emph{how well can we approximate the best explainable clustering?} %
They showed that the problem is APX-hard, but left open the question
of whether the problem can be approximated better. Resolving this
natural question positively would have the advantage of finding good
explainable clusterings for those instances that do admit such
clusterings, which is often the experience for more practical
instances.
Our result shows a surprising hardness for the \kmed and \kmeans
problem. 
\begin{restatable}[Approximability]{theorem}{Approx}
  The explainable \kmed and \kmeans problems are hard 
  to approximate  better than $(\nf12-o(1)) \ln k$, unless P=NP. %
\label{thm:approximability}
\end{restatable}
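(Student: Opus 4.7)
The plan is to prove Theorem~\ref{thm:approximability} by a gap-preserving reduction from the Set Cover problem. By the celebrated results of Feige and of Dinur--Steurer, assuming $P \neq NP$ it is NP-hard to distinguish between Set Cover instances with universe $U$ of size $N$ whose optimum cover has size at most $\beta$ from those whose optimum cover has size at least $(1-o(1))\ln N \cdot \beta$. I will transform such an instance into an explainable \kmed (respectively \kmeans) instance with $k = \Theta(N^2)$ centers; since $\ln k = (2-o(1))\ln N$, the $(1-o(1))\ln N$ hardness for Set Cover translates into a $(\tfrac{1}{2}-o(1))\ln k$ hardness for explainable clustering, which is exactly the claimed factor.

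The construction I have in mind places the point set in $\mathbb{R}^{|U|}$, with one coordinate per element of $U$. For each set $S \in \mathcal{S}$ I will introduce a \emph{gadget}: a representative center together with a small collection of witness points whose coordinates are chosen so that (i) the gadget can be isolated from the rest of the tree using a threshold cut along \emph{any} coordinate $e \in S$ at negligible cost, and (ii) isolating the gadget using a cut along a coordinate $e \notin S$ forces a substantial penalty. By including $\Theta(N)$ slightly perturbed copies of each set-gadget I push the number of clusters up to $k = \Theta(N^2)$ while preserving the role of the relevant ``gadget-isolating'' cuts. The core structural claim will be that every explainable clustering implicitly selects, at the top levels of its threshold tree, a set of coordinate directions whose indices form a cover of $U$, and that any tree that fails to cover some element $e \in U$ pays a large price in at least one gadget that contains~$e$.

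Once this structural claim is established, the cost of the best explainable clustering will be, up to $(1\pm o(1))$ factors, proportional to the size of the smallest set cover implicit in the tree. YES instances of Set Cover then yield a tree of cost $\Theta(\beta)$ (times the common scale factor), while NO instances force every tree to use at least $(1-o(1))\ln N\cdot \beta$ distinct coordinate directions and therefore incur that factor more cost. The \kmeans variant follows from the same construction after rescaling coordinates by a $\sqrt{\,\cdot\,}$ factor so that squared distances preserve the same multiplicative gap.

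The main obstacle I anticipate is ruling out ``cheating'' decision trees that do not respect the natural set-by-set structure. Because an explainable clustering is an arbitrary threshold tree, it may interleave cuts across many gadgets simultaneously, and potentially reuse one cut to assist in isolating several gadgets at once. Preventing this will require careful geometric design---for instance, placing gadgets far apart along auxiliary ``separator'' dimensions, and calibrating point multiplicities so that each gadget must be isolated essentially independently---together with a robust charging scheme that assigns each ``element-covering'' cut to at most one gadget's separation. Making these pieces fit together while preserving the tight $(\tfrac{1}{2}-o(1))\ln k$ factor (and in particular not losing a further constant in translating the structural claim into a numerical lower bound on cost) will be the principal technical work.
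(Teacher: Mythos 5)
Your high-level idea---reduce from Set Cover and argue that the cuts used to isolate a designated region of the threshold tree must encode a cover---is exactly the paper's idea, but the construction you sketch is substantially more complicated, and I think some of the extra machinery is either unnecessary or actively working against you.

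The paper's construction is much leaner. It phrases the problem as hitting set on $([d],\cS)$ with $k$ sets. It places one center $\bmu_0$ at the origin, one center $\bmu_i$ at the characteristic vector of $S_i$ for each $i\in[k]$, a single point at each basis vector $\be_j$ for $j\in[d]$, and $M\gg\max(d,k)$ colocated points at each center location. The colocated points force any near-optimal threshold tree to give every center its own leaf, and the only way to separate $\bmu_0$ from the other centers along the threshold tree is by cuts along dimensions that together hit every $S_i$; the additional cost is then exactly proportional to the number of such cuts. This yields $d + h(s-2-o(1)) \le \mathrm{OPT} \le d + h(s-2)$ for $h$ the hitting set optimum, with $k+1$ clusters in total. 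The $(\nicefrac12 - o(1))\ln k$ factor then arises because the yes-case cost is $\approx 2d$ rather than $\approx d$ (the unavoidable additive $d$ term eats a factor of two), not because of any inflation in the number of clusters. Crucially, there is no need for gadget copies, perturbations, or auxiliary separator dimensions: because the only points not at a center are the $d$ isolated points $\be_j$, and any cut along dimension $j$ separating $\bmu_0$ from some $\bmu_i$ also separates $\be_j$ and costs it $s-1$ instead of $1$, the charging is automatic.

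Your plan inflates $k$ to $\Theta(N^2)$ by duplicating gadgets, and you rely on this inflation to manufacture the $\nicefrac12$ factor via $\ln k = (2-o(1))\ln N$. This is both wasteful and logically backwards: inflating $k$ only weakens the statement, so you should want $k$ as close to $N$ as possible. If your gadget-based construction really did preserve the Set Cover gap with no additive loss, you would get $(1-o(1))\ln k$ hardness with $k\approx N$, which would be a \emph{stronger} result than the theorem claims---but the reason the paper cannot avoid the factor $\nicefrac12$ is precisely that every non-colocated point costs $\ge 1$ in any clustering, forcing an additive term in both cases. Your sketch gives no indication that your gadgets avoid this, and your anticipated obstacle---preventing a tree from ``reusing one cut to assist in isolating several gadgets at once''---is not actually an obstacle in the paper's construction (a reused cut simply means the corresponding element hits several sets, which only helps the hitting set and hence the clustering). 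The ``robust charging scheme'' and ``separator dimensions'' you defer to are where your argument is missing; without them you cannot conclude the structural claim, so this is a genuine gap rather than a complete alternate route. Finally, for \kmeans the paper does not rescale: since every coordinate of every point and every center lies in $\{0,1\}$ (up to the $o(1)$ perturbation of explainable centers), $\ell_1$ distances coincide with squared $\ell_2$ distances, so the exact same bounds carry over with no further work.
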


These results show that we cannot approximate \kmed much better than
its price of explainability (unless P=NP); the approximability for
\kmeans remains tantalizingly open. %

\subsection{Outline and Technical Overview}
\label{sec:our-techniques}

\paragraph{Upper bounding the performance of the \RT algorithm.}
Our main result is the tight analysis (up to lower order terms) of the price of explainability for \kmed.
The upper bound of $1+H_{k-1} = (1+o(1)) \ln k $ is given by a tight analysis of the natural \RT algorithm.
We now sketch the main ingredients of this analysis. We start with two easy but useful observations: 
(i) by linearity of expectations, it is sufficient to bound the expected cost of a single point, 
and (ii) by translation, this point can be assumed to be the origin.
We thus reduce the problem to that of analyzing the expected distance
from the origin to the last remaining center (i.e., the center in the
same leaf of the threshold tree as the origin).
We call this process  the \emph{Closest Point Process} and define it formally in \S\ref{sec:expon-clocks-proof}.

\paragraph{Algorithm has no better guarantee than $1+H_{k-1}$.} For
this discussion, let us make a simplifying assumption that is
\emph{not without loss of generality}: the $k$ centers are located on
separate axes, so that center $i$ is at $\be_i \cdot d_i$, with  $d_1 \leq d_2 \leq \ldots \leq d_k$, hence the closest center is at a distance $d_1$.
As cuts are selected uniformly at random, the first cut removes some
center $i_1$ with probability $\nicefrac{d_{i_1}}{\sum_{j}
  d_j}$. Conditioned on that, the second cut removes center $i_2$ with
probability $\nicefrac{d_{i_2}}{\sum_{j\neq i_1}  d_j}$, and so on.
In other words, at each step, a center $i$ is separated from the
origin with probability proportional to its distance $d_i$.
For the further special case when $d_2 = d_3 = \ldots = d_k = D$, the
expected distance to the last remaining center is:
\begin{align*}
  \Pr[1\mbox{ is last center}] & \cdot d_1  + \left(1- \Pr[1 \mbox{ is last center}] \right) D \leq  d_1  + \left(1- \Pr[1 \mbox{ is last center}] \right)\cdot D  \\[2mm]
   &  \textstyle =  d_1 + \big( 1- \big(1 - \frac{d_1}{(k-1)D +1}\big) \big(1 - \frac{d_1}{(k-2)D +1}\big)\cdots  \big(1 - \frac{d_1}{D +1}\big)\big)\cdot D\,.
\end{align*}
This is an increasing function of $D$ and tends to
$(1+ H_{k-1})\cdot d_1$ when $D/d_1 \to \infty$, which shows that the
\RT algorithm cannot be better than the conjectured factor of
$1+H_{k-1}$ (see also \S\ref{sec:tight-analysis} for a formal
description).

\paragraph{Inductive argument and reduction to worst-case instances.}
But can this special setting really be the worst-case? Perhaps surprisingly, we prove that this is the case.
An inductive argument %
can help remove the assumption that $d_2 = d_3 = \ldots = d_k$:
Since $(1+H_{k-1})\cdot d_1$ is the right answer for $k \leq 2$, we
can try to proceed inductively on the number of centers to analyze
\[
  \sum_{i=1}^k \Pr[\mbox{first cut removes center $i$}] \cdot \EE[\mbox{expected cost of process with centers $[k] \setminus \{i\}$}]\,.
\]
Since each sub-instance in the sum has $k-1$ centers, we can use the
induction hypothesis to bound every term except $i = 1$ in the sum by
$(1+H_{k-2})\cdot d_1$. To bound the cost of the instance with centers
$[k]\setminus \{1\}$, we could proceed based on the following natural
observation: the farther away a center is, the smaller probability it has to
be the last remaining center, since it is more likely to be
cut/removed at each step). This would mean that the expected cost of
 the process with centers $[k] \setminus \{1\}$ is at most
$\frac{d_2 + d_3 + \ldots + d_k}{k-1}$.  And substituting, we get that
the expected distance to the last center is at most
\[
  \Pr[\mbox{first cut removes center $1$}] \cdot \frac{d_2 + d_3 + \ldots + d_k}{k-1}  + \left( 1- \Pr[\mbox{first cut removes center $1$}]\right) (1+H_{k-2}) \cdot d_1\,,  
\]
which is at most $(1+H_{k-1})\cdot d_1$ using that $\Pr[\mbox{first cut removes center $1$}] = \frac{d_1}{d_1 + d_2 + \ldots + d_k}$.

Several research groups found the above inductive proof for the
separate-axis special case, and it was one of the main motivations for
the conjectured performance of the \RT algorithm.  To prove it for the
general case, it ``only'' remains to remove the assumption that each
center is located on a separate axis.  This assumption, however, turns
out to be highly non-trivial to overcome. One indication of this difficulty is that, in the general
case, there are arbitrary correlations between centers: whether center
$i$ is removed impacts the probability that $j$ is removed.  This
causes most natural monotonicity conditions not to hold anymore.  For
example, when centers are arbitrarily located, a far center can be
more likely to be the last one than a closer one.  We overcome these
difficulties in a technical proof that manages to show that the
worst-case is as above. In this proof, we write the points as a conic
combination of cuts, view the cost as a
function of this embedding, and naturally try to bound its derivative.
This is where the technical challenges appear: since the derivative is
also not ``well-behaved'' we define a better-behaved upper bound
called the ``pseudo-derivative'', and show that this pseudo-derivative
is maximized when all points are at the same distance $D$ from the
origin (even when they are not along separate axes). We then bound the
pseudo-derivative for the non-separate-axis uniform case.  This is the
technically most challenging part of the paper, and we present it in
\S\ref{sec:tight}.

\paragraph{A Simpler proof via Competing Exponential Clocks.}
Interestingly, we can present not just one but two proofs of the
correct $(1+o(1)) \ln k$ bound: we give an alternative simpler
proof %
which takes the viewpoint of \emph{competing exponential clocks}
(previously used, e.g., for the multiway cut problem~\cite{GeHYZ11,BuchbinderNS18,SharmaV14}).  In the separate-axis case, it boils down to sampling an
exponential random variable $Z_i$ with rate $d_i$ for each center
$i$. Two well-known properties of the exponential distribution are
that (i) the probability that $i$ ``rings first'' is proportional to
its rate, i.e., $\Pr[ Z_i \leq \min_{j \neq i} Z_j] = d_i/\sum_j d_j$
and that (ii) the distribution is memoryless
$\Pr[Z_i \geq s+t \mid Z_i\geq t] = \Pr[Z_i \geq s]$.  This implies that
taking cuts in the order of the random variables $\{Z_i\}_{i\in [k]}$
(until one center remains) is identical to the Closest Point
Process. We now analyze the competing exponential clocks as follows.
For a center $i\in [k]$ with $i\geq 2$, let $Q_i$ be the probability
that $i$ is the last center among the faraway centers
$[k] \setminus \{1\}$.  Conditioning on this, $i$ is the last center
and we pay a distance of $d_i$ instead of $d_1$ if $Z_1 \leq Z_i$.
Now for the probability of $Z_1 \leq Z_i$ to be maximized, $Z_i$
should be as large as possible \emph{in the event when $i$ is last
  among $[k] \setminus \{1\}$}. So we can upper bound the contribution
of center $i$ by considering the upper quantile of the exponential
distribution of $Z_i$ with total probability mass $Q_i$. Now standard
calculations show that the total contribution of center $i$ to the
cost is $d_1\left(Q_i - Q_i \ln(Q_i)\right)$. We thus get the upper
bound
\[
    \underbrace{d_1}_{\text{contribution of close center $1$}} + \underbrace{d_1 \sum_{i=2}^k  \left(Q_i - Q_i \ln(Q_i)\right)}_{\text{contribution of far centers}} = d_1(2 + \ln(k-1))\,,
\]
where used that the entropy $\sum_{i=2}^k - Q_i \ln(Q_i)$ is at most
$\ln(k-1)$. What is particularly nice about this viewpoint is that the
analysis does not use the assumption of centers being on separate
axes.  Indeed, we can define exponential random variables for each cut
(as we did in our first proof), and the whole machinery goes through.
A small complication arises due to cuts that separate the closest
center $1$ along with other points from the origin, but we can give a
less precise but still tight (up to lower order terms) bound.  Apart
from achieving the factor $(1+o(1)) \ln k$, the arguments are also
arguably cleaner and easier than even the prior non-tight analyses of
the \RT algorithm. We present these arguments in
\S\ref{sec:expon-clocks-proof}.

\paragraph{Lower-Bounding the Price of Explainability.} 
Recall that the  $\Omega(\ln k)$ on the price of explainability for \kmed~\cite{DFMR20} is based on the following idea
\begin{enumerate}[nosep]
    \item Select $k$ centers uniformly at random from
      a hypercube $\{0,1\}^d$, and %
    \item Add a $1$-ball around each center with $d$ points, one per
      dimension, giving $dk$ points..
\end{enumerate}
The optimal unconstrained clustering has cost $dk$, so how expensive
is the best explainable clustering? Any pair of centers expect to
differ in $\nf{d}2$ coordinates, and so by concentration, their
distance $\approx d/2$ whp. Furthermore, in a sub-instance with $k'$
centers, any cut separates $k'$ points from their closest center, and
these incur cost $\approx d/2$. As the threshold tree has a height of
at least $\log_2 k$, the total cost of any explainable clustering can
now be seen to be at least $\approx (dk/2)\log_2 k$.  While
asymptotically tight, the above symmetric construction does not lead
to stronger lower bounds than $\frac12 \log_2 k$. We instead use an
asymmetric construction to achieve our tight lower bound of
$(1-o(1))\ln k$, and it gives us hardness of approximability too!
\begin{enumerate}[nosep]
    \item Place a special center at the origin, and take a $1$-ball around it giving $d$ points.
    \item The remaining centers are located at the characteristic
      vectors of some carefully chosen subsets of $\{1, \ldots,
      d\}$, and 
    \item Finally, add many points colocated with the centers which
      force any good threshold tree to have one leaf per center.
\end{enumerate}
Now the only way to separate a center from the origin is to employ a
threshold cut along a dimension, which corresponds to an element in
the set corresponding to that center. Our threshold cuts must thus
form a \emph{hitting set} of the set system corresponding to the
non-special centers. Furthermore, the number of points separated from
their closest center is equal to the size of this hitting set.  This
tight connection allows us to apply the known results for the hitting
set problem, and we get a $(1-o(1))\ln k$ lower bound on the price of
explainability for \kmed. 
In addition, the connection together with
Feige's landmark paper~\cite{Feige98} implies our hardness of
approximation results. (Interestingly enough,
\cite{BandyapadhyayFG22} give a very similar construction, but with
different parameters, which only gives them NP-hardness.) We remark that the hardness result for \kmeans
follows from that of \kmed since all points and centers are located on
the hypercube, and thus the $\ell_1$-distances equal the squared
$\ell_2$-distances.  We present the reduction from hitting set and its
implications in~\S\ref{sec:tight-lower-bounds}.

\paragraph{Improvements for \kmeans.} Our final result is an
$O(k \ln \ln k)$ price of explainability for \kmeans.  We observe
that there are two ways to achieve the weaker $O(k \ln k)$ bound. The
first transforms the \kmeans instance into \kmed, but this distorts
distances by at most $k$ using the Cauchy-Schwarz inequality; then we
lose another $O(\ln k)$ using our analyses above. Another follows the
approach of \cite{EMN21}, of finding cuts that have a good
cost-to-balance ratio. Both these approaches are tight, but we show
that they cannot be tight at the same time! I.e., if we lose a factor
of $\Omega(k)$ due to Cauchy-Schwarz, then the cuts partition the
instance into parts that are a constant factor smaller, and the loss becomes a
geometric sum that sums to $O(k)$. A quantitative version of this
tradeoff gives our result; the details appear in \S\ref{sec:kmeans}.

\paragraph{Outline.} We present the simpler exponential clocks-based
proof for \kmed in \S\ref{sec:expon-clocks-proof}, followed by the
matching hardness in \S\ref{sec:tight-lower-bounds}. The result for
\kmeans is in \S\ref{sec:kmeans}, followed by the tight $1+H_{k-1}$
bound for \kmed in \S\ref{sec:tight}.

\subsection{Further Related Work}
\label{sec:related-work}

We now discuss some of the related results beyond those mentioned
above. Some works consider the effect of the dimension $d$ of the
price of explainability. Laber and Murtinho~\cite{LM21} showed an
$O(d \ln k)$ price of explainability for \kmed, which was improved by Esfandiari et
al.~\cite{EMN21} to $O(\min\{d \ln^2 d, \ln k\ln\ln k\})$. Charikar and
Hu~\cite{CharikarH22} showed that the price of explainability is at most $k^{1-2/d} \poly\log k$ for
\kmeans, and a lower bound tight up to poly-logarithmic
terms. Esfandiari et al.~\cite{EMN21} also gave a lower bound of
$\Omega(d)$ for \kmed. Frost et al.~\cite{FrostMR20} posed the question of getting better
guarantees using more than $k$ clusters; Makarychev and
Shan~\cite{MakarychevS22} showed how to open $(1+\delta)k$ centers and
get a guarantee of $O(\nf1\delta \cdot \ln^2 k \ln\ln k)$ for
\kmeans.

The algorithmic problem has received much less attention. Bandyapadhyay et
al.~\cite{BandyapadhyayFG22} gave algorithms that find the best \kmed and \kmeans
clusterings in time $n^{2d}\cdot (dn)^{O(1)}$.  They also showed
NP-hardness, and $W[2]$-hardness of finding the best explainable
clustering; interestingly, their hardness construction is also based
on the hitting set problem and is very similar to ours, but they use a
different setting of parameters and hence only infer an NP-hardness.
Laber~\cite{Laber22} gave an APX-hardness based on a reduction from finding
vertex covers in triangle-free graphs. Our result showing a logarithmic hardness
essentially settles the
question for \kmed. 

Both the \kmed and \kmeans problems have been studied extensively in
the unconstrained setting (i.e., without the explainability
requirement), both for geometric spaces (see,
e.g.,~\cite{Cohen-AddadKM16,FriggstadRS16,Cohen-AddadEMN22,Cohen-AddadSL22}) and
general metric spaces (see, e.g.,~\cite{Cohen-Addad0LS23,ByrkaPRST17}).  The
techniques and algorithms for those settings seem orthogonal to those
used for our problems.

\subsection{Preliminaries and Notation}
\label{sec:preliminaries}

Given points $\cX=\{\bm x^1,\dots,\bm x^n\}\subseteq \RR^d$, a
\emph{clustering} $\cC$ of $\cX$ is a partition of $\cX$ into
\emph{clusters} $\{C^1,\dots,C^k\}$. %
Each cluster $C^i$ is assigned a center $\bmu^i$
(giving \emph{distinct} centers
$\cU=\{\bm \mu^1, \dots ,\bm \mu^k \}\subseteq \RR^d$). %
Let $\pi(\bm x)$ be the center $\bm \mu^j
\in \cU$ corresponding to the cluster $C^j$ containing $x$, and 
define the $q$-norm cost of a clustering $\cC$ with centers $\cU$ as
\begin{align}
  \cost_q\left(\pi,\cU \right) = \sum_{\bm x\in \cX
  }\|\bm x - \pi(\bm x) \|_q^q. \label{eq:cost}
\end{align}
The \kmed and \kmeans costs of a clustering are simply the minimum
values for the parameters $q=1$ and $2$, minimized over all possible
centers $\cU$.

\emph{Threshold Cuts and Trees.} We call a hyperplane of the form $x_i\leq
\theta$ a \emph{threshold cut}, and represent it as $(i,\theta)$. A
\emph{threshold tree} $T$ is a binary tree with each non-leaf node $u$
corresponding to a threshold cut $(i_u,\theta_u)$. Define
$B_u\subseteq \RR^d$ as the region corresponding to node $u \in T$,
where $B_r:=\RR^d$ for  $r$ being the root of $T$; if nodes $l(u)$ and $r(u)$ are the left and right children of node $u$, then
\begin{align*}
B_{l(u)}:=B_{u}\cap \{\bm x \mid x_{i_u}\leq \theta_u \} \qquad
           \text{and} \qquad
B_{r(u)}:=B_{u}\cap \{\bm x \mid x_{i_u}> \theta_u \}. 
\end{align*}

\emph{Explainable Clusterings.}  Given points $\cX$ and a threshold
tree $T$, the clustering $\cC_T$ of $\cX$ \emph{explainable} by the
threshold tree $T$ is the partition of $\cX$ induced by the regions
corresponding to leaves in $T$, i.e., each leaf $\ell$ of $T$
generates a cluster $C^\ell := \cX\cap B_\ell$ of $\cC_T$. A
clustering $\cC$ of $\cX$ is said to be an \emph{explainable
  clustering} if there exists a threshold tree $T$ such that
$\cC=\cC_T$.

For a set of centers $\cU$, a threshold tree $T$ \emph{separates}
$\cU$ if each of the regions corresponding to leaves in $T$ contains
exactly one center in $\cU$. Let $\mu^\ell$ denote the unique center
in the singleton set $\cU \cap \cal B_\ell$ for leaf $\ell$ in $T$. For any
set of points $\cX$, centers $\cU$, and a threshold tree $T$ that
separates $\cU$, each leaf in $T$ corresponds to a cluster $C^{\ell}$
in the clustering $\cC_T$, and also to a center $\mu^\ell$. Such a
tree induces an assignment $\pi_T: \cX \to \cU$ from points to
centers. With this, we can define
\begin{align}
  \cost_q\left(T \right) = \cost_q(\pi_T, \cU) = \sum_{\bm x\in \cX }\|\bm x - \pi_T(\bm x) \|_q^q.
\end{align}

\section{Explainable \kmed via Exponential Clocks}
\label{sec:expon-clocks-proof}

We now give a bound of $(1+o(1))\, \ln k$ on the price of
explainability for \kmed. This is slightly weaker than the bound of
$1 + H_{k-1} \approx \ln k + O(1)$ promised in \Cref{thm:hk-upper},
but the proof is simpler and more illuminating. (We give the proof of
the tight bound in \S\ref{sec:tight}.)

\subsection{The Random Threshold Algorithm and the Closest Point Process}
\label{sec:rand-threshold}

Let us first formalize the \RT algorithm: given a reference clustering
for point set $\cX$ which opens centers $\cU$ and maps the data points
to centers using $\pi: \cX \to \cU$, we construct a threshold tree $T$
randomly as follows. For simplicity, let $\cX \subseteq  [a,b]^d$ for some
$a,b\in \RR$. We start with the trivial threshold tree with the root
corresponding to all of $\RR^d$. Now while the leaves of $T$ do not give
us a separating partition for $\cU$, we pick a dimension $i \in [d]$
and a value $\theta \in [a,b]$ independently and uniformly at
random. For each leaf $u$ of $T$, if this threshold cut separates at
least one pair of centers which share the region $B_u$, partition the
leaf using the threshold cut. It is easy to see that as long as all
the centers in $\cU$ are distinct, this process outputs a threshold
tree that separates $\cU$. The main question is: what is the cost of
the resulting explainable clustering $\cC_T$, in expectation?

Since the algorithm does not depend on the data points $\cX$, and it
is invariant under translations and scaling, we can use linearity of
expectations and focus on the following simpler problem:

\begin{defn}[Closest Point Process]
  Given a set of $k$ points $\Univ \sse \RR^d$, let
  $\bm p^* := \arg\min_{\bm p \in \Univ} \| \bm p \|_1$ be the point in $\Univ$ closest to
  the origin. Assume $\|\bm p^*\|_1 = 1$. Run the \RT algorithm to create
  a random threshold tree $T$ that separates this point set $\Univ$.
  Consider leaf node $u \in T$ whose corresponding region
  $B_u \sse \RR^d$ contains the origin, and let $\widehat{\bm p}$ be the
  unique point of $\Univ$ in this region $B_u$.  Define
  \begin{gather}
    f(\Univ) := \EE [\; \| \widehat{\bm p} \|_1 \;]. \label{eq:eff}
  \end{gather}
  Finally, define $\alpha(k) := \max_{\Univ: |\Univ|=k} f(\Univ)$. 
\end{defn}

\begin{lemma}[Focus on Closest Point]%
\label{lem:single-point}
  Given a reference clustering $\pi: \cX \to \cU$, the expected cost
  of the explainable clustering produced by the \RT algorithm is
  \[ \EE[ \cost(\pi_T, \cU) ] \leq \alpha(|\cU|) \cdot \cost(\pi,\cU)\,. \]
  Therefore, the price of explainability is at most $\alpha(k)$.
\end{lemma}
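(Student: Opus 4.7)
The plan is to decompose the cost by linearity of expectation into a sum over data points, and for each data point reduce the question to the Closest Point Process by translating that point to the origin. Write
\[ \EE[\cost(\pi_T,\cU)] \;=\; \sum_{\bm x \in \cX} \EE[\|\bm x - \pi_T(\bm x)\|_1]\,; \]
it then suffices to prove the per-point bound $\EE[\|\bm x - \pi_T(\bm x)\|_1] \le \alpha(|\cU|) \cdot \|\bm x - \pi(\bm x)\|_1$ for an arbitrary $\bm x \in \cX$. Fix such an $\bm x$ and translate the whole configuration by $-\bm x$, so that $\bm x$ sits at the origin and the centers become $\Univ_{\bm x} := \{\bm \mu - \bm x : \bm \mu \in \cU\}$. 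The center $\pi_T(\bm x)$ is by definition the unique center in the leaf containing $\bm x$, so after translation it is the unique element of $\Univ_{\bm x}$ in the leaf containing the origin, matching the quantity $\widehat{\bm p}$ in the Closest Point Process.

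Next, I would argue that the random threshold tree produced by RT on $(\cX,\cU)$, viewed through its action on $\Univ_{\bm x}$, is distributed exactly as the Closest Point Process on $\Univ_{\bm x}$. At each round, RT samples $(i,\theta)$ uniformly from $[d]\times [a,b]$ and retains the cut only if it separates an unresolved pair of centers in some leaf. Conditional on being retained, the cut is distributed proportionally to the $\theta$-Lebesgue measure of the intervals realizing each combinatorial cut; this conditional distribution is manifestly translation-invariant and uniform-scale-invariant (numerator and denominator scale together), and it is independent of $\cX$, which enters only through the bounding box. Consequently, $\EE[\|\bm x - \pi_T(\bm x)\|_1]$ equals $\EE[\|\widehat{\bm p}\|_1]$ for the Closest Point Process run on $\Univ_{\bm x}$.

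To finish, invoke scale-invariance once more. Let $\bm p^* := \arg\min_{\bm p \in \Univ_{\bm x}} \|\bm p\|_1$. Since rescaling $\Univ_{\bm x}$ by $1/\|\bm p^*\|_1$ leaves the cut-selection distribution unchanged but divides every output distance by $\|\bm p^*\|_1$, we obtain
\[ \EE[\|\widehat{\bm p}\|_1] \;=\; \|\bm p^*\|_1 \cdot f\bigl(\Univ_{\bm x}/\|\bm p^*\|_1\bigr) \;\le\; \alpha(|\cU|) \cdot \|\bm p^*\|_1 \;\le\; \alpha(|\cU|) \cdot \|\bm x - \pi(\bm x)\|_1, \]
where the first inequality uses the definition of $\alpha$ applied to the normalized set (whose closest point has unit $\ell_1$-distance) and the last uses $\pi(\bm x) - \bm x \in \Univ_{\bm x}$ together with the definition of $\bm p^*$. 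Summing over all $\bm x \in \cX$ yields the lemma.

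The main obstacle is the middle step: carefully checking that the induced distribution on how $\Univ_{\bm x}$ gets separated is independent of the data points $\cX$ and of the translation by $-\bm x$. Although the \emph{absolute} per-round probability that an effective cut is drawn depends on $b-a$ and hence on $\cX$, the \emph{conditional} distribution given that a cut is retained is a ratio of two $\theta$-measures that scale and shift together, and is therefore affine-invariant. Since the tree is built by iteratively drawing from this conditional distribution until $\cU$ is separated, the induced law on trees that separate $\Univ_{\bm x}$ depends only on the relative positions of the centers, as claimed.
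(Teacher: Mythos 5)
Your proof is correct and follows essentially the same route as the paper's: decompose the cost by linearity of expectation, translate each data point $\bm x$ to the origin, invoke the translation- and scale-invariance of the \RT algorithm to reduce to the Closest Point Process, and bound the per-point expected cost by $\alpha(|\cU|)\cdot\|\bm p^*\|_1 \le \alpha(|\cU|)\cdot\|\bm x - \pi(\bm x)\|_1$. The only difference is one of detail: the paper's one-line appeal to ``translation and scaling invariance'' is expanded by you into the observation that the \emph{conditional} law of a retained cut is a ratio of Lebesgue measures that translates and scales with $\cU$, and you correctly note that $\pi(\bm x)$ need not be the nearest center, so the final inequality is $\|\bm p^*\|_1 \le \|\bm x - \pi(\bm x)\|_1$ rather than an equality.
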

Given this reduction (which we prove in~\Cref{app:clocks}), the main result of this section is:
\begin{theorem}[Exponential Clocks]
  \label{thm:clocks}
  For any set $\Univ$ with $k$ points, $f(\Univ) \leq (1+o(1))\, \ln k$. 
\end{theorem}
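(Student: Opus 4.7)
The plan is to analyze the Closest Point Process via \emph{competing exponential clocks}, so that the clean separate-axis argument sketched in \S\ref{sec:our-techniques} extends to arbitrary configurations. I would identify the random cuts with arrivals of a Poisson point process on the cut space $[d]\times\mathbb{R}$: each infinitesimal cut $c$ carries an independent exponential clock, and cuts are revealed in the order their clocks ring. For each point $\bp\in\Univ$, let $\mathrm{sep}(\bp)$ denote the set of threshold cuts separating $\bp$ from the origin (on each coordinate $i$ this is a segment of length $|p_i|$, so $|\mathrm{sep}(\bp)|=\|\bp\|_1$), and let $Z_\bp$ be the first time a cut of $\mathrm{sep}(\bp)$ fires, so marginally $Z_\bp \sim \text{Exp}(\|\bp\|_1)$. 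Writing $d_i = \|\bp_i\|_1$ with $d_1 = 1 \leq d_2 \leq \cdots \leq d_k$, I would start from the decomposition
\[
  f(\Univ) \;=\; d_1 \;+\; \sum_{i=2}^{k}(d_i - d_1)\,\Pr[\widehat{\bp} = \bp_i],
\]
reducing the task to bounding the ``excess cost'' that each far point can contribute.

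The key technical step is a single-term bound. Setting $Q_i := \Pr[i = \argmax_{j\neq 1} Z_j]$ so that $\sum_{i\geq 2} Q_i = 1$, I would prove
\[
  (d_i - d_1)\,\Pr[\widehat{\bp} = \bp_i] \;\leq\; d_1\bigl(Q_i - Q_i \ln Q_i\bigr)
\]
up to lower-order error. This is cleanest in the separate-axis case, where $Z_1,\dots,Z_k$ are independent and $\widehat{\bp} = \argmax_\bp Z_\bp$: conditional on $\{i = \argmax_{j\neq 1}Z_j\}$ (probability $Q_i$), the event $\widehat{\bp}=\bp_i$ coincides with $\{Z_1 \leq Z_i\}$, whose conditional probability is monotone in $Z_i$. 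A Hardy--Littlewood-style rearrangement then lets me replace the conditioning event by the upper $Q_i$-quantile $\{Z_i \geq -\ln Q_i / d_i\}$ of $Z_i$'s unconditional distribution without decreasing the expectation, after which the explicit integral $\int_{-\ln Q_i/d_i}^{\infty}(1-e^{-d_1 y})\,d_i e^{-d_i y}\,dy$ delivers the claimed bound (exact in the limit $d_i \gg d_1$, with only lower-order slack for finite $d_i$). Summing the lemma over $i\geq 2$ and applying the entropy bound $\sum_{i\geq 2}(-Q_i\ln Q_i) \leq \ln(k-1)$ yields
\[
  f(\Univ) \;\leq\; d_1\bigl(1 + 1 + \ln(k-1)\bigr) \;=\; (1+o(1))\ln k,
\]
using $d_1 = 1$.

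The main obstacle is bridging the separate-axis picture with the general one, since in general the $Z_\bp$'s are \emph{correlated}: a single cut in $\mathrm{sep}(\bp_1)\cap\mathrm{sep}(\bp_i)$ fires the clocks of $\bp_1$ and $\bp_i$ simultaneously, and moreover cuts that would leave the origin's leaf with no remaining center are skipped by the algorithm, so the clean identity $\widehat{\bp}=\argmax_\bp Z_\bp$ breaks. My plan for this is to isolate the cuts lying in $\mathrm{sep}(\bp_1)$ and treat them separately: each such cut either removes $\bp_1$ together with some $\bp_i$'s (in which case the cost is already accounted for by the $(d_i - d_1)$-excess in the decomposition) or is inactive in the origin's leaf and may be ignored there. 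A careful accounting of this correlation shows the resulting slack is only an additive $O(d_1)$, which is absorbed into the $1+o(1)$ factor; this is consistent with the present statement being weaker than the tight $1 + H_{k-1}$ of \Cref{thm:hk-upper} precisely by such lower-order terms.
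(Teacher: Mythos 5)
Your high-level plan -- model the process with competing exponential clocks, condition on which far point survives among $\Univ\setminus\{\bp^*\}$, push the probability mass of that conditioning to an upper quantile via rearrangement, integrate, and finish with an entropy bound -- is exactly the strategy the paper uses, and your separate-axis calculation, including the $e^x\geq 1+x$ step and the final entropy bound $\sum_{i\geq 2}(-Q_i\ln Q_i)\leq \ln(k-1)$, is correct.

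However, your treatment of the general case has a genuine gap. The paper does \emph{not} work with a single per-point probability $Q_i = \Pr[i = \arg\max_{j\neq 1}Z_j]$; that object does not behave well once cuts remove several points at once and cuts that would eliminate the whole surviving leaf must be \emph{skipped}. What the paper does instead is (i)~union-bound over the cuts $T\in\cF_{\bp}:=\{T \mid \bp^*\in T,\ \bp\notin T\}$ that could knock out $\bp^*$ before $\bp$, and define, for each such cut separately, a conditional probability $Q_T(\bp) := \Pr[\bp\ \text{last in}\ \Univ\setminus T]$; and (ii)~use a nontrivial \textbf{Monotonicity Lemma} (\Cref{cor:last-last}: ``$\bp$ last in $\Set$'' implies ``$\bp$ last in $\Set\setminus T$'') to pass from the event ``$\bp$ last in $\Univ$'' to ``$\bp$ last in $\Univ\setminus T$''. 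That lemma handles exactly the ``skipped cuts'' pathology you gesture at, and it is what makes the event on the right-hand side \emph{independent} of $X_T$, which is the entire reason the rearrangement-to-quantile argument is valid. The entropy bound is then applied per cut $T$ (using $\sum_{\bp\in\Univ\setminus T}Q_T(\bp)=1$ and $\sum_{T\ni\bp^*}z_T = \|\bp^*\|_1 = 1$), not per point. Finally, cuts that separate $\bp$ and $\bp^*$ simultaneously are not argued away with an ``additive $O(d_1)$ slack''; they are excluded by construction from both $\cF_{\bp}$ and the rate $\ell(\bp) = \sum_{S:\bp\in S,\ \bp^*\notin S} z_S$, and the resulting mismatch $\ell(\bp)+z_T \geq \|\bp\|_1-1$ is controlled by introducing the threshold $\gamma$ and the factor $\gamma/(\gamma-1)$, which is then optimized. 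In short: the ideas ``isolate cuts in $\mathrm{sep}(\bp_1)$'' and ``slack is $O(d_1)$'' are the sketchy placeholders where, in the real proof, the Monotonicity Lemma and the per-cut conditioning $Q_T(\bp)$ have to go, and neither follows from what you have written. Your proposal would need to supply both to close the argument.
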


\subsection{The Exponential Clocks Viewpoint: the Last Point}
\label{sec:exp_clocks}
We now focus on bounding the value $f(\Univ)$ for any point set
$\Univ \in \RR^d$. We first impose some structure, just for the sake of
analysis. Since $\ell_1$ metrics can be written as a non-negative sum
of cut metrics (see, e.g.,~\cite{DL97}), again using the
data-obliviousness and translation-invariance of the algorithm we can
assume the following without loss of generality (see
\Cref{sec:cut-metrics}). %
\begin{enumerate}[nolistsep]
\item there are $d = 2^k$ dimensions (one for each subset $S \sse \Univ$
  of the points), and
\item the instance is specified by non-negative values
  $\{z_S\}_{S \sse \Univ}$ such that for each point $\bm p \in \Univ$, it lies at
  location \[ \bm p_S := z_S \mathbf{1}(\bm p \in S). \]
  Hence the distance of a point $\bm p$ is $\|\bm p\|_1 = \sum_S z_S
  \mathbf{1}(\bm p \in S) = \sum_{S: \bm p \in S} z_S$.
\end{enumerate}

Given this structure and the focus on $f(\Univ)$, we need to analyze the
following process:
\begin{quote}
  \emph{The Last Point Process.} Start with some set $\Set \sse \Univ$, and
  empty sequence $\cS \gets \langle\rangle$. At each step, pick a set
  $S \not\in \cS$ with probability
  $\frac{z_S}{\sum_{T \not\in \cS} z_T}$ and add it to the end of
  $\cS$. If $|\Set \setminus S| \neq \emptyset$, set
  $\Set \gets \Set \setminus S$. When all remaining sets $S \not\in \cS$
  have $z_S = 0$, stop and output the current $\Set$, a singleton set we
  call $\Set_{\text{final}}$.
\end{quote}
An inductive argument shows that if we start with $\Set = \Univ$, the final
set $\Set_{\text{final}}$ has the same distribution as the set of points
in the region containing the origin in the \RT
algorithm. Specifically, the first cut is taken with probability $\frac{z_S}{\sum_{T \not\in \cS} z_T}$ and the process inductively proceeds; the  process is thus  identical to the Closest Point Process, and so   $\Set_{\text{final}}$ contains a single point
$\widehat{\bm p} \in \Univ$ when the process stops with
$f(\Univ) = \EE[ \|\widehat{\bm p}\|_1]$.

To analyze this, we change the perspective slightly further, and
recast the process in terms of ``exponential clocks''. Define
independent exponential random variables $X_{S}\sim \exp(z_{S})$ for
each set $S \sse 2^\Univ$ such that $z_{S} > 0$. Since exponential random
variables $\{Y_i \sim \exp(r_i)\}$ have the memorylessness property, and the property that
$\Pr[ Y_j = \min_i \{Y_i\} ] = \frac{r_j}{\sum_i r_i}$, we see the
sets in the same order $\cS$ as in the last-center process
above. Moreover, this
order depends only on the set $\Univ$, and is independent of the starting
set $\Set \sse \Univ$.

Now consider the Last Point Process starting with different sets $\Set
\sse \Univ$
(and not just the entire point set $\Univ$): naturally, the identity of
the final point $\widehat{\bm p}$ changes. However, it turns out we can
make the following claim. Define the event \emph{point $\bm p \in \Univ$ is
  last in $\Set$} if starting with the set $\Set$ results in
$\Set_{\text{final}} = \{\bm p\}$. It turns out that being last in this
process has a nice ``monotone'' property. (We defer the proof to~\Cref{app:last-last}.)

\begin{restatable}[Monotonicity]{lemma}{LastPointLemma}
  \label{cor:last-last}
  For any  sets $T,\Set$ such that $T\subseteq \Set$, and any point $\bm p\in
  V\setminus T$, we have
  \begin{align*}
    \textnormal{``$\bm p$ is last in $\Set$''}\Rightarrow \textnormal{``$\bm p$ is last in $\Set\backslash T$''}.
  \end{align*}
\end{restatable}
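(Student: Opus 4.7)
The plan is to couple the two processes by using a single realization of the exponential clocks $\{X_S\}$, so that both the $\Set$-process and the $(\Set\setminus T)$-process work through the same ordering $\cS=(S_1,S_2,\ldots)$ of sets. First I would reduce to the case $|T|=1$ by induction on $|T|$: once the statement holds for removing a single element, we can peel off the elements of $T$ one at a time, applying the single-element version at each step (noting that $\bm p \in \Set \setminus T$ is preserved throughout).

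For the base case $T=\{q\}$ with $q\in \Set\setminus\{\bm p\}$, let $U^{(i)}$ and $U'^{(i)}$ denote the states after processing $S_1,\dots,S_i$ in the $\Set$-process and the $(\Set\setminus\{q\})$-process, respectively. The heart of the argument is the invariant
\[
    U'^{(i)} \in \bigl\{\, U^{(i)},\ U^{(i)}\setminus\{q\}\, \bigr\} \qquad \text{for every } i\geq 0,
\]
which I would prove by induction on $i$. Granting this, since $U^{(\infty)}=\{\bm p\}$ by hypothesis and $q\neq\bm p$, we immediately obtain $U'^{(\infty)}=\{\bm p\}$, i.e.\ $\bm p$ is last in $\Set\setminus\{q\}$.

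The inductive step is a case analysis on whether $U^{(i-1)}\subseteq S_i$ (the ``skip'' condition for the $\Set$-process) and whether $q\in S_i$. When $U'^{(i-1)}=U^{(i-1)}$ the two processes act identically on step $i$ and the invariant is trivially preserved. When $U'^{(i-1)}=U^{(i-1)}\setminus\{q\}$ with $q\in U^{(i-1)}$, the subcases $U^{(i-1)}\subseteq S_i$ and $q \in S_i$ fall out by direct set manipulation. The main obstacle, and the only place the hypothesis ``$\bm p$ is last in $\Set$'' genuinely enters, is the subcase $U^{(i-1)}\not\subseteq S_i$ with $q\notin S_i$: here we need $U'^{(i-1)}\setminus S_i\neq\emptyset$ so that the $\Set'$-process also performs the removal, and this could fail only when $U^{(i-1)}\setminus S_i=\{q\}$. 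But that pathology would force $U^{(i)}=\{q\}$, a singleton from which the process never escapes, contradicting $U^{(\infty)}=\{\bm p\}$. Hence $|U^{(i-1)}\setminus S_i|\geq 2$, the $\Set'$-process performs the same removal, and the invariant is preserved with $U'^{(i)}=U^{(i)}\setminus\{q\}$, completing the induction.
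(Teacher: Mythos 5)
Your proof is correct and uses essentially the same coupling argument as the paper, so let me just point out the small structural differences. The paper's key step is its \Cref{lem:p_is_last}, which proves the coupling identity $(U\setminus T)^r = U^r\setminus T$ for \emph{arbitrary} $T$ in a single induction on $r$, under the side condition $U^r\setminus T\neq\emptyset$; the Monotonicity Lemma then follows at once by taking $r=2^k$. You instead peel $T$ off one element at a time, which is harmless but not needed, and your disjunctive invariant $U'^{(i)}\in\{U^{(i)},\,U^{(i)}\setminus\{q\}\}$ is in fact equivalent to the paper's equality $(U\setminus\{q\})^{(i)}=U^{(i)}\setminus\{q\}$: since $q$ can never appear in the $(\Set\setminus\{q\})$-process, only one of the two options is ever possible. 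The most visible difference is how the pathological step with $U^{(i-1)}\setminus S_i=\{q\}$ is dispatched: you rule it out by appealing to the global hypothesis that $\bm p$ is last in $\Set$, whereas the paper observes that at that step $U^{t+1}\setminus T=\emptyset$, so the side condition in its coupling lemma is violated and there is nothing to prove. The paper's bookkeeping is marginally cleaner in that it yields a hypothesis-free coupling statement, but the underlying case analysis is the same and both arguments are sound.
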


\subsection{Bounding the Expected Cost}
\label{sec:clocks-cost}

By the definition of our process, we know that
\begin{align}
	f(\Univ) &=\sum_{\bm p\in \Univ}\|\bm p\|_1\cdot \Pr[\textnormal{$\bm p$ is last in $\Univ$}]
	\leq \gamma +\sum_{\bm p: \|\bm p\|_1>\gamma}\|\bm p\|_1\cdot
   \Pr[\textnormal{$\bm p$ is last in $\Univ$}] \label{eq:1}
\end{align}
for any $\gamma$. (We choose $\gamma >1$, which ensures that $\bm p\neq \bm p^*$.)
We now bound~(\ref{eq:1}) as follows.
Observe that whenever $\bm p$ is last in $\Univ$, the following is true. There must exist a cut $T$ that removes the closest point $\bm p^*$ before any cut removes  $\bm p$, i.e.,  $\bm p^* \in T, \bm p \notin T$. This implies  $X_T\leq X_S$ for all
  sets $S$ such that $\bm p\in S, \bm p^*\notin S $, which can be written as
  \[
  X_T \leq X_{\bm p}\mbox{, where  }X_{\bm p}:= \min_{S: \bm p\in S,\bm p^*\notin S} X_S\,.
  \]
  Second, by the Monotonicity \Cref{cor:last-last}, we have that $\bm p$ is last in $\Univ$ implies that $\bm p$ is last in $\Univ \backslash T$.
Defining $\cF_{\bm p} := \{ T
\mid \bm p^* \in T, \bm p \not \in T\}$ to be all those cuts that could remove $\bm p^*$ before $\bm p$,  therefore yields the upper bound
\begin{align}
	f(\Univ) &\leq \gamma + \sum_{\bm p: \|\bm p\|_1>\gamma}
                          \|\bm p\|_1\cdot \Pr[\exists T \in \cF_{\bm p}
                          \textnormal{ such that } X_{T}\leq X_{\bm p}
                          \bigwedge \bm p \textnormal{ is last in } \Univ\backslash T] \notag \\
	&\leq  \gamma + \sum_{\bm p: \|\bm p\|_1>\gamma }\|\bm p\|_1\sum_{T \in \cF_{\bm p}}\Pr[ X_{T}\leq X_{\bm p} \bigwedge  \bm p \textnormal{ is last in } \Univ\backslash T]\,. \tag{union bound}
\end{align}
We  upper bound the contribution of a fixed point $\bm p$ to the above expression. By the law of total probability, $\Pr[ X_{T}\leq X_{\bm p} \bigwedge  \bm p \textnormal{ is last in } \Univ\backslash T]$ equals
\begin{align}
     \int_{-\infty}^{\infty} \Pr[X_T\leq t \bigwedge \bm p \textnormal{ is last in }    \Univ\backslash T \mid X_p = t] f_{X_p} (t) dt\,,
\end{align}
where $f_{X_p} (t)$ denotes the probability density function of $X_p$.
The event $X_T\leq t$ is independent from the event ``$\bm p$ is last in
$\Univ\backslash T$'' because, $T$ does not cut any points in $\Univ\backslash
T$ and hence the value of $X_T$ is irrelevant to the process
restricted to points in $\Univ\backslash T$. We also know that $X_T$ and
$X_{\bm p}$ are independent. These observations can be used to rewrite the
above expression as
\begin{align}
     \int_{-\infty}^{\infty} \Pr[X_T\leq t]\cdot \Pr[ \bm p \textnormal{ is last in }    \Univ\backslash T \mid X_p = t] f_{X_p} (t) dt\,.
     \label{eq:beforecdfandpdf}
\end{align}
As $\Pr[X_T \leq t]$ is an increasing function of $t$, the above expression is maximized if the probability mass of the event  ``$\bm p$ is last in $\Univ \backslash T$'' is on large values of $t$. Formally, if we select $\btau$ to be threshold so that 
\begin{align}
    \int_{-\infty}^{\infty} \Pr[ \bm p \textnormal{ is last in }    \Univ\backslash T \mid X_p = t] f_{X_p} (t) dt = \int_{-\infty}^{\infty} \mathbf{1}[t\geq \btau] f_{X_p} (t)dt  = \int_{\btau}^{\infty}  f_{X_p} (t)dt 
    \label{eq:tau-selection}
\end{align}
then \eqref{eq:beforecdfandpdf} is upper bounded by 
\begin{gather}
  \int_{\btau}^{\infty} \Pr[X_T\leq t] f_{X_p} (t) dt \,. \label{eq:ubd}
\end{gather}
To understand this expression, recall that $X_T$ is an exponential random variable with rate $z_T$. Further, the random variable $X_{p}$ is the minimum of exponentials, and
hence is itself exponentially distributed with rate %
${\ell}(\bm p) = \sum_{S: \bm p \in S, \bm p^* \not \in S} z_S$. In other words, $\Pr[X_T \leq t] = 1- e^{- z_T \cdot t}$ and $f_{X_p}(t) = {\ell}(\bm p) e^{-{\ell}(\bm p) t}$ for $t\geq 0$.  This gives us that the choice of $\btau$ that satisfies the identity~\eqref{eq:tau-selection} is 
\[
    \btau = \frac{-\ln Q_T(\bm p)}{{\ell}(\bm p)}, \mbox{ where  $Q_T(\bm p)$ is the probability that $\bm p$ is last in $\Univ\backslash T$.}
\]
The integral~(\ref{eq:ubd}) can be upper-bounded by standard calculations:
\begin{align*}
    \int_{\btau}^{\infty} \Pr[X_T\leq t] f_{X_p} (t) dt & = 	 \int_{\btau}^\infty  (1-e^{-z_{T}t})\cdot {\ell}(\bm p)\cdot e^{-{\ell}(\bm p)t} dt \\
	&= Q_T(\bm p)- \frac{{\ell}(\bm p)}{{\ell}(\bm p)+z_T}e^{-({\ell}(\bm p)+z_T)\cdot \btau}\\
	& \leq Q_{T}(\bm p)\left (1-\frac{{\ell}(\bm p)}{{\ell}(\bm p)+z_T}\cdot \left(1+\frac{z_T \ln Q_T(\bm p)}{{\ell}(\bm p)} \right)\right )\\
	&= Q_T(\bm p)\cdot z_T\left ( \frac{1-\ln Q_T(\bm p)}{{\ell}(\bm p)+z_T} \right ).
\end{align*}
Substituting in this upper bound, we have 
\begin{align*}
  f(\Univ)
  &\leq \gamma + \sum_{\bm p: \|\bm p\|_1>\gamma
    }\|\bm p\|_1\sum_{T \in \cF_{\bm p}}Q_T(\bm p)\cdot z_T\left ( \frac{1-\ln Q_T(\bm p)}{{\ell}(\bm p)+z_T} \right ) \\
  & \leq \gamma + \sum_{\bm p: \|\bm p\|_1>\gamma
    }\frac{\|\bm p\|_1}{\|\bm p\|_1-1}\sum_{T \in \cF_{\bm p}}Q_T(\bm p)\cdot z_T\left
    ({1-\ln Q_T(\bm p)} \right ), 
  \intertext{where we use that ${\ell}(\bm p) + z_T \geq
  \|\bm p\|_1-1$. Indeed ${\ell}(\bm p) =\sum_{S: \bm p \in S, \bm p^* \not \in S} z_S \geq \|\bm p\|_1 - \|\bm p^* \|_1 \geq \|\bm p\|_1 - 1$. Using the fact that $\nicefrac{x}{x-1}$ is a decreasing function and then replacing the summation over all $\bm p:\|\bm p\|_1> \gamma $ to all $\bm p\neq \bm p^*$, and exchanging the summations gives } 
  & \leq \gamma + \frac{\gamma}{\gamma-1}\sum_{T\ni \bm p^*} z_T
    \sum_{\bm p\in \Univ\backslash T}Q_T(\bm p) (1 - \ln Q_T(\bm p)).
\end{align*}
Observe that for any cut $T$, we have
$\sum_{\bm p \in \Univ \setminus T} Q_T(\bm p) = 1$, and the sum is over at most  $k-1$
points. As the entropy $\sum_{\bm p\in \Univ\backslash T}Q_T(\bm p) (- \ln Q_T(\bm p))$ of $|\Univ\backslash T|\leq k-1$ outcomes is at most $\ln(k-1)$, the inner sum is at most $1+
\ln(k-1)$. Finally, using that $\sum_{T \ni \bm p^*} z_T = \|\bm p^*\|_1 = 1$,
we get
\begin{align*}
	f(\Univ)  &\leq \gamma +\frac{\gamma}{\gamma
                                -1}(1+\ln(k-1)) 
                              \leq \ln(k-1)+2\sqrt{1+\ln(k-1)}+2
\end{align*}
by optimizing over $\gamma$. This proves~\Cref{thm:clocks}, and gives
us an asymptotically optimal bound on the price of explainability. In
the next section we show that the bound is, in fact, tight up to
lower-order terms.

\section{Lower Bounds on the Price of Explainability}
\label{sec:tight-lower-bounds}

In this section, we prove  a tight lower bound on the price of explainability (up to lower order terms), and a lower bound on the approximability of explainable clustering. Both results are obtained via a reduction from the classic \emph{hitting set problem}: given a set system $([d], \cS)$, where $[d] = \{1, \ldots, d\}$ denotes the ground set and $\cS = \{S_1, S_2, \ldots, S_k\}$ is a family of $k$ subsets of $[d]$, the task is to find the smallest subset $H \subseteq [d]$ that hits every subset in $\cS$, i.e., $H\cap S_i \neq \emptyset$  for all $S_i \in \cS$. We further say that a hitting set instance $([d], \cS)$ is \emph{$s$-uniform if all subsets of $\cS$ are of the same size $s$}. We now first present the reduction from $s$-uniform hitting set instances to explainable clustering, and we then analyze its implications.

\paragraph{Reducing hitting set to explainable clustering.}
Given an $s$-uniform hitting set instance $([d], \cS = \{S_1, S_2, \ldots, S_k\})$,  define the following data set $\cX$
  in $\{0,1\}^d$ and reference solution $\cU$:
  \begin{enumerate}
  \item The reference clustering has $k+1$ centers $\cU := \{\bmu_0, \bmu_1, \ldots, \bmu_k\}$, where $\bmu_0$ is at the origin, and each other
    $\bmu_i \in \{0,1\}^d$ is the characteristic vector of the set
    $S_i$.
  \item The data set $\cX$ consists of one point at each of the
    locations $\{\mathbf{e}_i\}_{i \in [d]}$, and
    $M = \poly(d,k) \gg \max\{d,k\}$ ``colocated'' points at each of the $k+1$
    locations in $\cU$, giving $|\cX| = d + M\cdot (k+1)$.
  \end{enumerate} 
The cost of this reference clustering $\cU$ with $k+1$ centers is at most $d$, since all the $d$ non-colocated points can be assigned to the center $\bmu_0$.
We proceed to analyze the cost of an optimal \emph{explainable} clustering with $k+1$ centers.
\begin{lemma}
    Let $h$ be the size of an optimal solution to the hitting set instance $([d], \cS)$ and let $\opt$ be the cost of an optimal $(k+1)$-median explainable clustering of the data set $\cX$. Then
    \[
       d +  h(s-2-o(1))  \leq \opt \leq d  + h(s-2)\,.
    \]
    Moreover, the same bounds hold for the optimal $(k+1)$-means explainable clustering.
    \label{lemma:hitting-set-reduction}
\end{lemma}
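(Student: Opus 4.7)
My plan is to prove the two inequalities separately, both hinging on which coordinates appear along the root-to-$\bmu_0$-leaf path of the threshold tree. The preliminary observation is that any explainable $(k+1)$-clustering of cost less than $M/2$ must place the $k+1$ centers into distinct leaves: since the centers are distinct $0/1$-vectors, any two are at $\ell_1$-distance (equivalently, squared-$\ell_2$-distance) at least $1$, so merging any two would cost at least $M/2$ from their colocated copies. Since the upper bound we shall establish is $d + h(s-2) \ll M = \poly(d,k)$, we may restrict attention to trees that separate all $k+1$ centers, for both \kmed and \kmeans.

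For the upper bound, take an optimal hitting set $H = \{i_1, \ldots, i_h\}$ and build the tree recursively: at each node where $\bmu_0$ is still present, cut on the next unused $i_\ell \in H$ at threshold $1/2$, sending $\bmu_0$ and all $\bmu_j$ with $i_\ell \notin S_j$ (together with the points $\mathbf{e}_j$ for $j \neq i_\ell$) to the $x_{i_\ell} = 0$ side, and the remaining centers together with $\mathbf{e}_{i_\ell}$ to the $x_{i_\ell} = 1$ side. Recurse on the $\bmu_0$-side with $H \setminus \{i_\ell\}$ (which still hits every surviving subset) and separate the centers on the other side using arbitrary further cuts. Choosing each leaf's contained $\bmu_j$ as its cluster center, each $\mathbf{e}_i$ with $i \notin H$ stays in $\bmu_0$'s leaf and contributes $1$, and each $\mathbf{e}_i$ with $i \in H$ reaches a leaf whose center $\bmu_j$ satisfies $i \in S_j$ (since all centers on that side of the cut on $i$ do) and contributes $\|\mathbf{e}_i - \bmu_j\|_1 = s - 1$; the total is $(d-h) + h(s-1) = d + h(s-2)$, and the identical value upper-bounds the \kmeans cost since squared $\ell_2$ equals $\ell_1$ on $\{0,1\}^d$.

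For the lower bound, fix any explainable $(k+1)$-clustering whose tree $T$ separates all centers; let $H_0 \subseteq [d]$ be the coordinates used along the unique root-to-$\bmu_0$-leaf path of $T$. I first show that $H_0$ is a hitting set: for each $j \in [k]$, the cut on $\bmu_0$'s path that splits $\bmu_j$ from $\bmu_0$ must use a coordinate on which they disagree, and the only such coordinates lie in $S_j$; hence $|H_0| \geq h$. I then classify the $d$ points $\mathbf{e}_i$: if $i \notin H_0$ then $\mathbf{e}_i$ agrees with $\bmu_0$ on every coordinate tested along the path, so $\mathbf{e}_i$ lands in $\bmu_0$'s leaf; if $i \in H_0$ then a cut on coordinate $i$ along the path sends $\mathbf{e}_i$ to the $x_i = 1$ side, where every surviving center $\bmu_j$ satisfies $i \in S_j$, so $\mathbf{e}_i$'s final leaf contains such a $\bmu_j$.

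To convert this structure into a cost lower bound, I handle the two objectives separately. For \kmed, the coordinate-wise median of each leaf, dominated by its $M$ colocated copies, equals the leaf's contained center exactly, giving cost precisely $(d - |H_0|) + |H_0|(s-1) = d + |H_0|(s-2) \geq d + h(s-2)$ with no $o(1)$ loss. For \kmeans I expand around the contained center $\bmu_j$: the centroid identity $\sum_{\text{pts}} \|\text{pt} - c\|_2^2 = \sum_{\text{pts}} \|\text{pt} - \bmu_j\|_2^2 - N \|c - \bmu_j\|_2^2$ together with the bound $\|c - \bmu_j\|_2^2 = O(s |H_0|^2 / M^2)$ shows that summing over all leaves, the total cost differs from the \kmed value by at most $O(d/M + s|H_0|^2/M)$. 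Taking $M = \poly(d,k,s)$ sufficiently large (say $(dks)^{10}$) makes this correction $o(1)$ in absolute terms, and since $h \geq 1$ we can rewrite it as $h \cdot o(1)$, yielding $\opt \geq d + h(s-2-o(1))$. The main insight is the root-to-$\bmu_0$-path/hitting-set correspondence; the mechanical obstacle is the \kmeans centroid bookkeeping, but it disappears once $M$ is taken large.
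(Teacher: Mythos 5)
Your proof is correct and follows the same hitting-set-correspondence approach as the paper's: the coordinates used along the root-to-$\bmu_0$-leaf path form a hitting set of $\cS$, and each $\be_i$ separated from $\bmu_0$ ends up in a leaf whose center $\bmu_j$ has $i\in S_j$, contributing cost $\approx s-1$. Your bookkeeping is slightly more explicit than the paper's --- you note that for \kmed the coordinate-wise median is pinned to $\bmu_j$ exactly by the $M$ colocated copies (so no $o(1)$ loss there), and you quantify the \kmeans centroid shift via the bias--variance identity before absorbing it into $h\cdot o(1)$ --- but the underlying argument is the same.
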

\begin{proof}
    We present the proof for $(k+1)$-median and then observe that all the distance calculations also hold for $(k+1)$-means since all the points of $\cX$ have binary coordinates and, as we will show, the centers will be (arbitrarily close) to such coordinates as well.  Note that $\|\mathbf{p}-\mathbf{q}\|_1 = \|\mathbf{p}-\mathbf{q}\|_2^2$ if $\mathbf{p},\mathbf{q}\in \{0,1\}^d$. We now proceed with the analysis for $(k+1)$-median. 

    The $M$ points colocated with each of the reference centers $\bmu_i$
  ensure that the best explainable clustering separates each of the
  centers $\bmu_i$. Separating a center $\bmu_i$ from $\bmu_0$ using
  a threshold cut means choosing some dimension $j \in S_i$ and a value
  $\theta \in (0,1)$, which in turn also separates the data point $\be_j$
  from $\bmu_0$. Since $M \gg k$, the center for the final
  cluster containing $\be_j$ is located at some location very close
  the reference center in it, and hence this data point now incurs
  cost $s-(1+o(1)))$  instead of $1$. Here we used the fact that  each set has size $s$ and the term $o(1)$ accounts for the potential small  difference in the locations of the centers in the explainable clustering compared to those in the reference clustering.  
  The above observations imply that 
  \begin{itemize}
    \item  the collection of threshold cuts that separate $\bmu_0$ from other centers must form a hitting set for the set system $([k], \cS)$; and  
    \item if this hitting set has size
  $h'$, the cost of the explainable clustering is at least
  $h'(s-(1+ o(1))+(d-h') = d+h'(s-2-o(1))$.
    \end{itemize}
    We thus have $\opt \geq d+h(s-2-o(1))$ since $h$ is the smallest size of a hitting set. 

    For the upper bound $\opt \leq d+ h(s-2)$, let $H =\{i_1, i_2,\ldots, i_h\} \subseteq [d]$ be an optimal hitting set of size $h$. 
    Starting with the reference clustering $\cU$, build a threshold tree by  adding the threshold cuts  along dimensions $i_1, i_2, \ldots, i_h$ with thresholds $1/2$. 
    Specifically, the cut along dimension $i_1$ is at the root of the tree and the remaining cuts are recursively added to the subinstance that contain the reference center $\bmu_0$. 
    After adding these cuts we have separated $\bmu_0$ from all other centers, since $H$ is a hitting set. 
    Furthermore, the only points in $\cX$ that are separated from their closest center in $\cU$ are $\be_{i_1}, \be_{i_2}, \ldots, \be_{i_h}$. 
    Note that the tree may still contain centers $\bmu_i,\bmu_j$ with $i,j \geq 1$ that are yet not separated. 
    But they can be separated without incurring any additional cost since, in their subinstance, all points of $\cX$  are colocated with the centers (or have already been separated from their closest center $\bmu_0$). 
    Hence, we can build a threshold tree that has one leaf per center in $\cU$ and the only points of $\cX$ that are separated from their closest center are $\be_{i_1}, \be_{i_2}, \ldots, \be_{i_h}$. Each of these separated points $\be_{i_j}$ has cost at most $s-1$ instead of $1$ since the hitting set instance was $s$-uniform and the final center $\bmu_q$ that $\be_{i_j}$ is assigned to correspond to a set $S_q$ that contains $i_j$, and so $\|\be_{i_j} - \bmu_q\|_1 = s-1$. 
    Hence, the total cost of the clustering is at most $h \cdot (s-1) + d - h = d + h(s-2)$, which completes the proof of the lemma.
\end{proof}

Having described our reduction, we now proceed to its implications.

\paragraph{Price of explainability for $k$-median.} 
As aforementioned, the cost of the reference clustering $\cU$ is at most $d$. Furthermore, \Cref{lemma:hitting-set-reduction} says that the optimal $(k+1)$-median explainable clustering costs at least $h(s-2-o(1)) + d$, where $h$ is the size of an optimal hitting set of $([d], \cS)$. 
   It thus suffices to construct a set system $\cS$ having large
  $h(s-2-o(1)) \approx hs$. For example, letting $d = |\cS| = k$ and defining $\cS$ based on the Hadamard
  code would give us $s = k/2$ and a hitting set of size $\log_2 k$,
  and hence a lower bound of $\approx \frac12 \log_2 k$. A better
  guarantee follows using a probabilistic construction (selecting uniformly at random sets), whose proof  we
  defer to the appendix.
  \begin{restatable}[Hitting Set Lemma]{lemma}{SetSystem}
    \label{lem:random-construction}
    For large enough $k$, there exist  set systems $([k], \cS)$ with
    $k$ sets of size $s$ each, such that the minimum hitting set
    satisfies $h(s-2-o(1))/k \geq \ln k - O(\ln \ln k)$.
  \end{restatable}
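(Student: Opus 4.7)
The plan is to apply the probabilistic method: sample each of the $k$ sets $S_1, \ldots, S_k$ independently and uniformly from the $s$-subsets of $[k]$, with the calibrated choice $s := \lceil k/\ln k \rceil$, and show that with probability tending to $1$ the minimum hitting set has size exceeding $h := \lceil (\ln k)(\ln k - 4\ln\ln k) \rceil$. The features of this choice that will be essential are $hs/k \geq \ln k - 4\ln\ln k$ and $h \leq \ln^2 k$.

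For a fixed $T \subseteq [k]$ with $|T| = h$, I would bound the probability that a random $s$-subset $S_i$ avoids $T$:
\[
\Pr[T \cap S_i = \emptyset] \;=\; \binom{k-h}{s} \Big/ \binom{k}{s} \;=\; \prod_{j=0}^{s-1}\left(1 - \frac{h}{k-j}\right).
\]
Since $h, s = o(k)$, standard estimates should yield $\Pr[T \cap S_i = \emptyset] \geq e^{-(1+O(1/\ln k))\, hs/k} = \Omega\bigl((\ln k)^4/k\bigr)$, where the $\Omega$ absorbs a constant factor arising from the $(1+O(1/\ln k))$ blow-up of $hs/k = \Theta(\ln k)$. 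Independence across the $k$ random sets then gives
\[
\Pr[T \text{ is a hitting set}] \;\leq\; \bigl(1 - \Omega((\ln k)^4/k)\bigr)^k \;\leq\; \exp\bigl(-\Omega((\ln k)^4)\bigr),
\]
and a union bound over the $\binom{k}{h} \leq (ek/h)^h$ candidates $T$, using $h \ln(ek/h) \leq \ln^2 k \cdot \ln k = O(\ln^3 k)$, yields
\[
\Pr[\exists\, T \subseteq [k],\ |T| = h,\ T \text{ is a hitting set}] \;\leq\; \exp\bigl(O(\ln^3 k) - \Omega((\ln k)^4)\bigr) \;\to\; 0,
\]
so a realization with minimum hitting set $h^* > h$ exists for all sufficiently large $k$.

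Finally, for any such realization,
\[
\frac{h^*(s - 2 - o(1))}{k} \;\geq\; \frac{hs}{k} - (2 + o(1)) \frac{h}{k} \;\geq\; (\ln k - 4\ln\ln k) - O(\ln^2 k / k) \;=\; \ln k - O(\ln\ln k),
\]
as required. The main technical obstacle will be tracking the lower-order terms carefully enough to obtain $\ln k - O(\ln \ln k)$ rather than just $\Omega(\ln k)$: the loss $c := \ln k - hs/k$ must be kept at $\Theta(\ln \ln k)$, which via the union bound forces $h\ln(ek/h) \lesssim e^c = (\ln k)^{O(1)}$, and hence $h$ polylogarithmic in $k$. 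Conveniently this same smallness of $h$ makes $2h/k = o(1)$, so the ``$-2$'' correction is absorbed into the $O(\ln\ln k)$ slack; most of the work thus lies in verifying that the probability and entropy estimates go through with the required multiplicative precision.
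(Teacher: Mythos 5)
Your proof is correct, and it takes a cleaner route than the paper's in a meaningful way. The paper samples sets by including each element of $[k]$ independently with probability $p := \frac{2\ln 2k}{k^{1/3}}$, landing in the regime $s \approx k^{2/3}\ln k$, $h \approx k^{1/3}$; because set sizes are then binomial random variables rather than fixed, the paper must handle a separate bad event $\cB_1$ via a Chernoff bound to certify that no set is too small, and then combine this with the union-bound event $\cB_2$. Your choice of sampling uniform $s$-subsets directly with $s = \lceil k/\ln k\rceil$ eliminates the set-size concentration step entirely --- every set has size exactly $s$ by construction --- so only the hitting-set union bound remains, and the hypergeometric ratio $\binom{k-h}{s}/\binom{k}{s}$ plays the role of the paper's $(1-p)^h$ estimate. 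The parameter regime is also at the opposite end (polylogarithmic $h$, near-linear $s$ rather than polynomial $h$, sublinear $s$), but both sit in the same continuous family where $hs/k \approx \ln k - \Theta(\ln\ln k)$ and $\binom{k}{h}$ is dominated by $\exp(-\Omega(k\cdot\Pr[\text{miss}]))$. Your choice has the further convenience that $h/k = O(\ln^2 k/k)$ makes the ``$-2$'' correction in $h(s-2-o(1))/k$ vanish into the $O(\ln\ln k)$ slack with no further fuss.

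One small point to tighten when writing this up: when bounding $\Pr[T\cap S_i=\emptyset]$ from below you need an \emph{upper} bound on $\sum_{j<s}\frac{h}{k-j}\le \frac{sh}{k-s}$, and therefore an upper bound on $sh/k$ as well; with the ceilings in $s$ and $h$ this gives $sh/k \le \ln k - 4\ln\ln k + O(1/\ln k)$, so after multiplying by the $(1+O(1/\ln k))$ correction you get $\ln k - 4\ln\ln k + O(1)$ rather than $+o(1)$ in the exponent, which is exactly what absorbs into your $\Omega((\ln k)^4/k)$. Everything goes through, but it is worth being explicit that the $O(1)$ constant in the exponent is what the $\Omega(\cdot)$ is swallowing.
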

    Combining the above lemma with our reduction shows that  the price of explainability is at least  $(1-o(1)) \ln k$,
  giving the proof of \Cref{thm:kmed-lower}.
  \LBPoE*

\paragraph{Hardness of approximation}

    Our reduction from the hitting set problem to explainable clustering immediately leads to a hardness result as well. Feige, in his landmark paper~\cite{Feige98}, proved that it is hard to distinguish whether an $s$-uniform hitting set instance $([d], \cS)$~\footnote{We remark that the result in~\cite{Feige98} is stated in the terminology of set cover. The instances constructed there has a ground set of size $n$ and a family of $m$ subsets. Furthermore, they can be assumed to be regular: each element is contained in $s$ subsets and each subset is of size $\ell$. Now in the yes case, there is a set cover so that each element is covered by exactly one set. By the regularity, this implies that the set cover has size $n/\ell = m/s$ in the yes case. Here we used that  $n\cdot s= m\cdot \ell$. In the no case however, any set cover is  at least a factor $(1-o(1))\ln n$ larger. Now in the terminology of hitting set, this is an hitting set instance with $d=m$ elements and a family $\cS$ of $k= n$ many sets, each of size $s$,  with the stated yes case and no case. }  
\begin{itemize}[nosep]
  \item \emph{(yes case:)} has a hitting set of $d/s$ elements; or
  \item \emph{(no case:)} any hitting set has size at least $(1-o(1))\ln(k)\cdot d/s$, where $k = |\cS|$.
\end{itemize}

Here, ``hard'' means that there is no polynomial-time algorithm can distinguish between these two cases unless $P=NP$; it was under the stronger assumption in Feige's original paper~\cite{Feige98} and then subsequently improved to hold under the weakest possible assumption $P \neq NP$~\cite{DinurS14,Moshkovitz15}.

Our reduction runs in polynomial time so the above hardness together with \Cref{lemma:hitting-set-reduction} implies the following.  Assuming $P\neq NP$, there is no polynomial-time algorithm that, given a data set $\cX \subseteq \mathbb{R}^d$, distinguishes whether
\begin{itemize}[nosep]
  \item \emph{(yes case:)} there is an explainable clustering with $k+1$ clusters of cost at most $2d$; or 
  \item \emph{(no case:)} any such clustering has cost at least $(1-o(1))\ln(k) d$.
\end{itemize}
As any approximation algorithm with better guarantee than $(\nf12-o(1))\ln(k)$ would allow us to distinguish between the two cases, we have the following hardness of approximation result for explainable clustering.
\Approx*

The above hardness result settles the approximability of explainable \kmed  up to small constants: it is the same as its price of explainability! For \kmeans, the situation is different. Our hardness of approximation result is far from the lower bound $\Omega(k$) on its price of explainability. 
We conjecture that there is no such hardness result matching $\Omega(k)$ and, in contrast to \kmed, that there are significantly better approximation algorithms for explainable \kmeans than its price of explainability.

\section{Explainable $k$-means clustering}
\label{sec:kmeans}

We now prove our improved bound on the price of explainability of the \kmeans
problem, which improves on the previous bound of $O(k \ln k)$. Our
main result is the following: 

\begin{theorem}
  \label{thm:k-means_poe}
  Given a data set $\cX$ and a base clustering with centers $\cU$ and
  map $\pi$, we can output a random threshold tree $T$ separating
  $\cU$ such that
  \begin{align*}
    \EE[\cost_2(T)]\leq O(k\ln\ln k)\cdot \cost_2(\pi,\cU).
  \end{align*}
\end{theorem}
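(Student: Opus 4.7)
The plan is to build the threshold tree $T$ top-down, using a cut-selection rule that adapts to the geometry of the subinstance at each node and interpolates between the two competing analyses that each give $O(k\ln k)$. By translation-invariance and linearity of expectation (exactly as in \Cref{sec:expon-clocks-proof}), it suffices to bound the expected squared distance $\|x-\hat\mu(x)\|_2^2$ from a single data point $x$ to the center $\hat\mu(x)$ at its final leaf. For each internal node $v$ of $T$ with remaining centers $\cU_v$ of size $k_v$, let the axis-aligned bounding box of $\cU_v$ have widths $w_1^v,\ldots,w_d^v$, and set $W_1(v)=\sum_i w_i^v$ and $W_2^2(v)=\sum_i (w_i^v)^2$. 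A point separated at $v$ has $k$-means cost at most $W_2^2(v)$ and $k$-median cost at most $W_1(v)$; the Cauchy--Schwarz ratio $\sigma_v := W_1(v)^2/W_2^2(v)\leq k_v-1$ quantifies the overhead incurred when converting between the two norms.

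I would fix a threshold parameter $\tau$ (ultimately tuned around $\Theta(\ln k/\ln\ln k)$) and use two cut types at each node $v$. If $\sigma_v \leq k_v/\tau$ (\emph{cheap regime}), draw a random threshold cut with probability proportional to $w_i^v$, as in Random Thresholds. Since the Cauchy--Schwarz overhead is small, piggybacking on the exponential-clocks analysis of \Cref{thm:clocks} yields at most an $O(\sigma_v\ln k_v)=O((k_v/\tau)\ln k_v)$ factor for the cost contribution of this node. If instead $\sigma_v > k_v/\tau$ (\emph{balanced regime}), the $\ell_2^2$-mass is concentrated on a small collection of ``heavy'' dimensions (those maximizing $w_i^v$); sample a random threshold cut on one of these with weight $w_i^v$. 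The $k$-means cost of a separated point is bounded by $O(k_v)$ times the reference cost of the subinstance; crucially, I would argue that the cut splits $\cU_v$ into parts each of size at most $(1-\Omega(1))k_v$, because the concentration of width in heavy dimensions forces a constant-fraction center spread along one of them.

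Amortizing along any root-to-leaf path, each balanced-regime node shrinks $k_v$ by a constant factor, so there are only $O(\ln k)$ of them on any path, and their $O(k_v)$ per-node costs telescope geometrically to $O(k)$ times the overall reference cost. The cheap-regime nodes aggregate to $O((k/\tau)\ln k)$ times the reference cost via the telescoping guarantee of \Cref{thm:clocks}. Choosing $\tau=\Theta(\ln k/\ln\ln k)$ balances both regimes at $O(k\ln\ln k)$, establishing the theorem. The main obstacle, as I see it, is the balanced-regime claim: verifying that a random cut along a heavy dimension is automatically balanced — a priori, centers could cluster at one extreme of a wide dimension, defeating this. I expect to resolve it by sampling jointly over all heavy dimensions weighted by $w_i^v$, using that their collective weight is $\Omega(W_1(v))$ to guarantee a constant-fraction split with constant probability. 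A secondary challenge is ensuring that the cheap-regime piggyback on \Cref{thm:clocks} respects the Cauchy--Schwarz conversion uniformly across nodes, without inadvertently picking up extra logarithmic factors that would spoil the $\ln\ln k$ target.
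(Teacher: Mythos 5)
Your high-level architecture is the same as the paper's — a two-regime decomposition, a win-win between a ``Cauchy--Schwarz / one-shot cost'' term and a ``balanced split'' term, and a telescoping argument — but the invariant you pick to govern which regime a node is in does not control what you need it to, and I believe this breaks the proof.

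The quantity you define is a \emph{bounding-box} ratio, $\sigma_v = W_1(v)^2/W_2^2(v)$ with $W_1,W_2$ built from the per-dimension widths $w_i^v$ of the bounding box of $\cU_v$. Your balanced-regime claim asserts that when $\sigma_v$ is large, a width-proportional threshold cut is $\Omega(1)$-balanced with constant probability. This is false. Consider $k_v$ centers in $d = k_v-1$ dimensions: $k_v-1$ of them at $\epsilon e_1, 2\epsilon e_1, \ldots, (k_v-1)\epsilon e_1$ for tiny $\epsilon$, and one ``far'' center at $(1,1,\ldots,1)$. Then every $w_i^v = 1$, so $W_1 = W_2^2 = k_v-1$ and $\sigma_v = k_v-1$ is as large as it can be, yet essentially every threshold cut — sampled from any width-weighted distribution, including jointly over all ``heavy'' dimensions — isolates the single far center and produces a $(1,k_v-1)$ split. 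The bounding box does not see that all the interior structure is crowded near one corner, so $\sigma_v$ being large carries no information about balance. (You flagged this exact worry in your last paragraph; the point is that your proposed fix, sampling jointly over heavy dimensions, fails on the same example.)

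The paper avoids this by working with a much finer, \emph{pair-based} invariant: for each far pair $\bm x, \bm y$, the stretch $s_{\bm x\bm y}(S) = \|\bm x - \bm y\|_2^2 / d_2(\bm x, \bm y)$, where the pseudo-distance $d_2$ is the sum of squared lengths of the consecutive subintervals into which the projections of \emph{other} centers chop the segment $[x_i, y_i]$ in each dimension. Large stretch literally means many centers project between $\bm x$ and $\bm y$, which is exactly what forces a cut between them to be balanced (\Cref{clm:stretch-separation}). On the example above the far pair has stretch $\approx 1$, so the paper correctly routes it to the bulk-cut regime. A second, subtler issue with your cheap regime: the paper does not literally ``piggyback'' the $k$-median exponential-clocks bound; instead it samples from the non-uniform distribution $D_2$ (p.d.f.\ proportional to distance to the nearest center within each interval), and the $O(k/\ln k)$ per-level bound in the bulk regime comes from $s(v) \le k_v/\ln^2 k_v$ combined with a logarithmic bound (\Cref{lem:log-levels}) on the number of levels at which any single cut can remain in the support of $D_2'$. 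Reproducing that aggregation with only a node-level $\sigma_v$ and the $\ell_1$ analysis in hand would, as you suspect, pick up extra factors. To repair your proposal you would need to replace $\sigma_v$ with a pairwise, interval-chop-aware quantity — at which point you have essentially rediscovered the paper's stretch.
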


At a high level, the approach is similar to that for \kmed: we give an
algorithm to separate a given set of centers, but since we are dealing
with squared Euclidean distances, we choose cuts from a non-uniform
distribution over dimensions and coordinate values. However, since a
single cut can increase the cost by a factor of $\Omega(k)$ we have to
be careful not to lose another factor of $\Omega(\ln k)$ due to the
recursion. Here we use a win-win analysis: we define a quantity called
the \emph{stretch} of a pair of points and argue that the loss due to
a single cut is just the stretch: moreover, we show that if stretch is
large, the recursive problems are relatively balanced and the loss in
the recursion is a geometric sum, adding up to $\approx O(k)$. On the
other hand, if the stretch is low, we lose less-than-the-worst-case in
each round (although we now need to take a collection of ``bulk''
cuts).

\subsection{The Closest Point Process Again}

Recall that we proved the performance of the \RT algorithm for \kmed
by reducing to the perspective of a single data point and analyzing
the expected increase in its cost. We can also define the closest
point process for any $\ell_q$ norm and for any algorithm $\cA$
separating point sets $\Univ$ that is invariant under translations and
scaling, as follows:

\begin{defn}[$\ell_q$-Norm Closest Point Process]
  Given a set of $k$ points $\Univ \sse \RR^d$, let
  ${\bm p}^* := \arg\min_{\bm{p} \in \Univ} \| \bm p \|_q$ be the point in $\Univ$ closest to
  the origin according to the $\ell_q$ metric. Assume $\|\bm p^*\|_q =
  1$. Run the algorithm $\cA$ to create a random threshold tree $T$
  that separates this point set $\Univ$. Consider leaf node $u \in T$
  whose corresponding region $B_u \sse \RR^d$ contains the origin, and
  let $\widehat{\bm p}$ be the unique point of $\Univ$ in this region $B_u$.
  Define 
  \begin{gather}
    f_{q,\cA}(\Univ) := \EE [\; \| \widehat{\bm p} \|_q^q \;]. \label{eq:eff-two}
  \end{gather}
  Finally, define $\alpha_{q,\cA}(k) := \max\limits_{\Univ: |\Univ|=k} f_{q,\cA}(\Univ)$. 
\end{defn}

A proof identical to that of \Cref{lem:single-point} shows that the
price of explainability for $\ell_q$-norm clustering is at most
$\alpha_{q,\cA}(k)$. In the rest of this section, we give some
terminology and then an algorithm $\cA$ which separates the input
point set $\Univ \sse \RR^d$; we then bound the resulting value of
$f_{q,\cA}(\Univ)$.

\subsection{Terminology}
\label{sec:terminology-kmeans}

\begin{wrapfigure}{R}{0.5\textwidth}
\centering
\begin{tikzpicture}[scale=1.5]
    \draw [<->, thick] (0,2) node (yaxis) [above] {2} |- (3,0) node (xaxis) [right] {1};
  
    \coordinate (x) at (0.8, 0);
    \coordinate (y) at (2.3, 0);
    \coordinate (a) at (0.5, 1.2);
    \coordinate (b) at (1.3, 1.8);
    \coordinate (c) at (1.8, 0.6);
    \coordinate (d) at (2.8, 0.8);
    
    \draw[dashed] (xaxis -| x) node[below] {$x$};
    \draw[dashed] (xaxis -| y) node[below] {$y$};
    
    \draw[dashed] (xaxis -| b) node[below] {$v^4_1$};
    \draw[dashed] (b) -- (1.3,0);
    
    \draw[dashed] (xaxis -| c) node[below] {$v^2_1$};
    \draw[dashed] (c) -- (1.8,0);
    
    \draw[dashed] (xaxis -| a) node[below] {$v^1_1$};
    \draw[dashed] (a) -- (0.5,0);
    
    \draw[dashed] (xaxis -| d) node[below] {$v^3_1$};
    \draw[dashed] (d) -- (2.8,0);
    
    \fill[black] (a) circle (1pt) ;     
    \fill[black] (b) circle (1pt) ;     
    \fill[black] (c) circle (1pt) ;     
    \fill[black] (d) circle (1pt) ;     
    \draw [fill=white] (x) circle (1pt) ;     
    \draw [fill=white] (y) circle (1pt) ;     
     
    \draw (a) node [above]{\(v^1\)};
    \draw (b) node [above]{\(v^4\)};
    \draw (c) node [above]{\(v^2\)};
    \draw (d) node [above]{\(v^3\)};
    \draw (x) node [above]{};
    \draw (y) node [above]{};
\end{tikzpicture}
    \caption{Intervals defined by projections.}
    \label{fig:intervals-on-dim}
\end{wrapfigure}
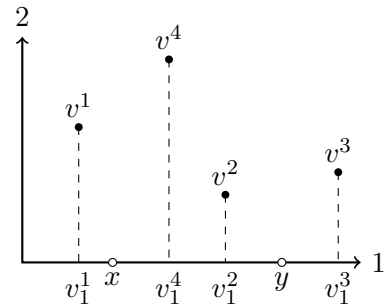

We use similar terminology as in~\cite{GJPS21}.
Given a set $\Univ \sse \RR^d$ of points, and a dimension $i \in [d]$,
let $\ell_i := \min_{v \in \Univ} v_i$ and
$u_i:= \max_{v \in \Univ} v_i$ be the leftmost and rightmost
coordinates of points. Given two values $x,y \in \RR$, let
\(\intervals_i(x, y) \) be the set of consecutive intervals along the
\(i\)-th dimension delimited by the coordinates \(x\) and \(y\)
themselves and the projections of points in $\Univ$ that lie between
\(x\) and~\(y\). For example, consider the \(2\)-dimensional point set
$\Univ$ shown in \Cref{fig:intervals-on-dim} (the same example was given in~\cite{GJPS21}).  On the horizontal axis,
two coordinate values \(x\) and \(y\) are marked along with the
projections of the points: \(\intervals_1(x,y) \) consists of the
three consecutive intervals \( [x, v^4_1], [v^4_1, v^2_1]\),
and~\([v^2_1, y]\). 

By the definition of \(\intervals_i(x, y ) \), we have
\(|x - y| = \sum_{[a,b] \in \intervals_i(x, y )} |b - a| \).  Let
\[ \intervals_{\all} := \cup_{i\in [d]} \left\{ (i, [a,b]) \mid [a,b]
    \in \intervals_i(\ell_i, u_i) \right\} \] denote the collection of
all dimension-interval pairs which are delimited by the projections of
the points onto the respective dimensions; for brevity, define $\intervals_i :=
\intervals_i(\ell_i, u_i)$. Define
\[L_2 := \sum_{(i,[a,b]) \in \intervals_{\all}} |b-a|^2. \]

A key definition is that of the \emph{pseudo-distance}: for points
\(\bm x, \bm y \in \RR^d \), let \[\intervals(\bm x, \bm y ) = \bigcup_{i\in[d]}
  \{ (i, [a,b]) \mid [a,b] \in \intervals_i(x_i,y_i ) \}. \] 
We then define the pseudo-distance between \(\bm x\) and \(\bm y\) as 
\[d_2(\bm x, \bm y ) = \sum_{(i, [a,b]) \in \intervals(x, y )} |b-a|^2.\]
It follows that \(\|\bm x-\bm y\|^2_2 \geq d_2(\bm x, \bm y ) \geq
\frac{1}{|\Univ \cup \{\bm x, \bm y\}|-1} \cdot \|\bm x- \bm y\|_2^2 \). %

We define a distribution $D_2$ 
as follows: first 
select a dimension--interval pair \((i,[a,b]) \in \intervals_{\all}\)
with probability  \( |b-a|^2/L_2 \), and then pick \(\theta \in [a, b] \) randomly such that the p.d.f.\  is
\begin{gather}
  P_{a,b}(\theta) := \frac{4}{(b-a)^2} \min(\theta - a, b - \theta)
  \quad = \quad
  \frac{4}{(b-a)^2} \min_{\bm v \in \Univ} \{|\theta-v_i|\}. \label{eq:pab-theta}
\end{gather}
Often we refer to the above concepts not for the entire
point set $\Univ$ but for some subset $\Set \sse \Univ$; in those cases we refer
to the partition $\cI_{\text{all}}(\Set)$, the sum $L_2(\Set)$,  or the distribution $D_2(\Set)$, etc.
Finally, for subset $\Set \sse \Univ$ of points we define:
\begin{enumerate}[label=(\roman*)]
\item Let $\Delta(\Set) := \max_{\bm x,\bm y \in \Set} \|\bx-\by\|^2$ be the
  squared diameter of point set $\Set$. %
\item Call a pair of points $\bm x,\bm y \in \Set$ \emph{far} if
  $\| \bm x-\bm y\|_2^2 \geq \Delta(\Set)/2$, and \emph{close} if
  $\| \bx -\by\|_2^2 < \Delta(\Set)/k^4$.
\item Define the \emph{stretch} of a pair $\bx,\by \in \Set$ to be
  $s_{\bx \by}(\Set) := \| \bx-\by\|_2^2/d_2(\bx,\by)$. Define the \emph{stretch of the
  set} $s(\Set)$ to be the maximum stretch of any \emph{far pair} in $\Set$. 
\end{enumerate}

\subsection{The Algorithm}
\label{sec:kmeans-algorithm}

The process to construct the threshold tree $T$ for \kmeans is
slightly more complex than for \kmed: as before we start off with the
root representing the entire point set $\RR^d$. Now, given a node $v$
representing some box $B_v$ (giving us a subset
$\Univ_v := \Univ \cap B_v$), define the distribution
$D_2(v) := D_2(\Univ_v)$. Now consider the stretch
$s(v) := s(\Univ_v)$ of the set of points.
\begin{enumerate}
\item \emph{(Solo Cuts)} If $s(v) \geq \smash{\frac{|\Univ_v|}{\ln^2 |\Univ_v|}}$, let
  $\bm p^*, \bm q^*$ be a pair of far points in $\Univ_v$ of stretch $s(v)$, and pick a
  threshold cut $(i,\theta) \sim D_2(v)$, conditioned on separating
  the pair $\bm p^*, \bm q^*$. This partitions the box $B_v$ into two
  boxes, and recurse on both.
\item \emph{(Bulk Cuts)} Else if $s(v)$ is smaller than the quantity
  above. In this case, repeatedly sample cuts from $D_2(v)$
  conditioned on not separating any close pairs of points in
  $\Univ_v$, until all pairs of far points in $\Univ_v$ are
  separated. Apply all these cuts in sequence, partitioning $B_v$ into
  potentially multiple pieces; recurse on each of them.
\end{enumerate}
The process stops when each leaf of $T$ contains a single point from
$\Univ$. For the analysis below, we consider a \emph{compressed
  threshold tree} $T'$ which is a tree with branching at least two: if
we perform a solo cut at node $v$, we call it a \emph{solo node}, and
it has two children corresponding to the two parts obtained by this
cut. If we perform bulk cuts at node $v$, it has potentially multiple
children (one for each smaller box obtained by applying these cuts),
and we call it a \emph{bulk node}. 

In the following, let $S(T')$ and $B(T')$ denote the
solo and bulk nodes in $T'$. 
For any node $v \in T'$, let $L_2(v) :=
L_2(\Univ_v)$, and similarly for other parameters defined above.

\subsection{The Expected Cost Increase}

We will bound the cost for the solo cuts and bulk cuts
separately. First we give some preliminary lemmas, then we bound the
cost due to bulk cuts, and finally the cost for solo cuts.

\begin{lemma}[Separation Probability]
  \label{lem:cost_lowerbound}
  For any subset of points $\Set \subseteq \Univ$ and a point
  $\bm p \in \Set$,
  \[ \Pr_{(i,\theta)\sim D_2(\Set)}[ (i,\theta) \text{ separates the origin from
    } \bm p] \leq \frac{2\|\bm p\|_2^2}{L_2(\Set)}\,.\]
\end{lemma}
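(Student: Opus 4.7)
The plan is to prove the bound dimension-by-dimension: I will show that, for each coordinate $i$, the probability that $(j,\theta)\sim D_2(\Set)$ has $j=i$ and $\theta$ separates the origin from $\bm p$ is at most $2 p_i^2 / L_2(\Set)$; summing over $i$ then yields the claimed bound $2\|\bm p\|_2^2 / L_2(\Set)$.

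Fix a dimension $i$, and assume without loss of generality that $p_i > 0$ (the case $p_i < 0$ is symmetric and $p_i = 0$ is trivial). The set of separating values of $\theta$ in dimension $i$ is the interval $I_i := [\max(0,\ell_i), p_i]$, whose length is at most $p_i$. The crucial structural observation is that since $\bm p \in \Set$, the value $p_i$ is itself the projection of a point of $\Set$, and hence $p_i$ is an endpoint of every interval in $\intervals_i(\Set)$ that contains it. Therefore the intervals of $\intervals_i(\Set)$ that meet $I_i$ partition into a collection of intervals $[a',b'] \subseteq I_i$ that lie entirely inside $I_i$ (with total length $L$), together with at most one ``boundary'' interval $[a,b]$ that straddles the other endpoint $0$ (with overlap length $\delta$ inside $I_i$). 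These quantities satisfy $L + \delta = |I_i| \le p_i$.

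The contribution of a fully contained interval $[a',b']$ to the separation probability is at most $(b'-a')^2 / L_2(\Set)$, since the conditional mass of $P_{a',b'}$ on $[a',b']$ is $1$; summing over all such intervals and using $\sum (b'-a')^2 \le (\sum (b'-a')) \cdot \max (b'-a') \le L^2$ bounds their joint contribution by $L^2 / L_2(\Set)$. For the single boundary interval $[a,b]$, I will exploit the triangular shape of $P_{a,b}(\theta) = \tfrac{4}{(b-a)^2}\min(\theta-a, b-\theta)$ together with the one-sided estimate $\min(\theta-a, b-\theta) \le b-\theta$ on the overlap $[0,b]$; integrating yields $\int_0^b P_{a,b} \cdot \tfrac{(b-a)^2}{L_2(\Set)} \le 2\delta^2 / L_2(\Set)$. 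Combining via the elementary inequality $L^2 + 2\delta^2 \le 2(L+\delta)^2$ gives a total per-dimension contribution of at most $2|I_i|^2 / L_2(\Set) \le 2 p_i^2 / L_2(\Set)$.

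The main subtlety — and the reason one must be careful — is the case in which $0$ is not a projection of any point in $\Set$: then a single interval $[a,b]$ may contain all of $I_i$, and $b-a$ can be much larger than $p_i$, so the naive bound ``contribution $\le (b-a)^2/L_2(\Set)$'' on the boundary interval is useless. Using the triangular density together with the one-sided bound $\min(\theta-a, b-\theta) \le b-\theta$ is precisely the fix; it relies on the observation that $p_i$ must be a projection, which forces the upper endpoint of $I_i$ to coincide with the endpoint $b$ of this interval, and rules out having a second boundary interval straddling $p_i$.
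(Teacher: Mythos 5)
Your proof is correct, but it takes a genuinely different route from the paper's. The paper's proof is one clean calculation: it observes that for $[a,b]\in\intervals_i(\Set)$ the triangular density satisfies $\tfrac{(b-a)^2}{4}\,P_{a,b}(\theta)=\min(\theta-a,b-\theta)=\min_{v\in \Set}|\theta-v_i|\leq|\theta-p_i|$ (since $\bm p\in \Set$), and then sums the resulting bound over \emph{all} intervals and dimensions to get $\tfrac{4}{L_2(\Set)}\sum_i\int_{-\infty}^\infty|\theta-p_i|\cdot\ones[\theta\text{ between }0,p_i]\,d\theta=\tfrac{4}{L_2(\Set)}\sum_i p_i^2/2$. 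No case analysis is needed — the pointwise inequality $P_{a,b}(\theta)\leq \tfrac{4}{(b-a)^2}|\theta-p_i|$ handles interior and boundary intervals uniformly, and in particular automatically handles the case you single out as the main subtlety (where $0$ is not a projection and one interval is much wider than the separating range). Your decomposition into fully-contained intervals (bounded via $\sum(b'-a')^2\leq L^2$) plus a single boundary interval (bounded via the one-sided estimate $\min(\theta-a,b-\theta)\leq b-\theta$ giving $2\delta^2$), glued with $L^2+2\delta^2\leq 2(L+\delta)^2\leq 2p_i^2$, is valid and reaches the same constant, but it is structurally more complicated and requires separately verifying that $p_i$ is always an interval endpoint (so there is no second boundary interval at the $p_i$ end). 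The paper's approach is worth internalizing because it extracts the cleaner structural fact — the triangular density \emph{is} the nearest-projection distance — which also underlies other parts of \S\ref{sec:kmeans}.
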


\begin{proof}
  The probability that a random cut sampled from $D_2(\Set)$ separates the origin $\bm 0$ and $\bm p$ is: 
  \begin{align}
    &\sum_{i \in [d]} \sum_{[a,b] \in \intervals_i} \frac{(a -
      b)^2}{L_2(\Set)}\int_a^b P_{a,b}(\theta) \cdot \mathbbm{1}[\theta
      \text{ is between } 0 \text{ and } p_i]\, d\theta \notag\\
    &= \frac{4}{L_2(\Set)}\sum_{i \in [d]} \sum_{[a,b] \in
      \intervals_i}
      \int_a^b \min(\theta-a, b-\theta) \cdot \mathbbm{1}[\theta
      \text{ is between } 0 \text{ and } p_i]\,d\theta \label{eq:3} \\ 
    & \leq \frac{4}{L_2(\Set)}\sum_{i \in [d]} \sum_{[a,b] \in
      \intervals_i} \int_a^b |\theta-p_i| \cdot
      \mathbbm{1}[\theta \text{ is between } 0 \text{ and }
      p_i]d\theta \label{eq:5} \\ 
    &= \frac{4}{L_2(\Set)}\sum_{i \in [d]} \int_{-\infty}^{\infty}
      |\theta-p_i| \cdot \mathbbm{1}[\theta \text{ is between } 0 \text{
      and } p_i]\,d\theta =  \frac{4}{L_2(\Set)}\sum_{i \in [d]}(p_i)^2/2 =
      \frac{2\|\bm p\|_2^2}{L_2(\Set)}.   \notag
  \end{align}
  Equality~(\ref{eq:3}) and inequality~(\ref{eq:5}) use the expressions
  for $P_{a,b}(\theta)$ given in~(\ref{eq:pab-theta}).  %
\end{proof}

Given a set $\Set$, suppose we sample cuts from distribution
$D_2(\Set)$ with an added rejection step if the cut separates some
pair of points in $\Set$ whose distance is at most $\Delta(\Set)/k^4$.
Formally, let $R(V)\subseteq \cI_{\text{all}}(\Set)$ be the subset of
intervals which are contained in projections of close centers in $V$
onto the coordinate axis. Let
\[ L_2'(V):= \sum_{(i,[a,b])\in \cI_{\text{all}}(\Set)\backslash
  R(V)}|b-a|^2.\] The distribution $D_2'(V)$ picks an interval $[a,b]$
in $\cI_{\text{all}}'(\Set):=\cI_{\text{all}}(\Set)\backslash R(V)$ with probability
$\frac{(b-a)^2}{L_2'(V)}$ and then a cut is chosen from this interval
with the same distribution as $P_{a,b}(\theta)$.
\begin{prop}
\label{cor:rejection_scaling}
For any subset of points $V$, we have $L_2'(V)\geq L_2(V)/2$.
\end{prop}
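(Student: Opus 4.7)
The plan is to bound the difference $L_2(V) - L_2'(V) = \sum_{(i,[a,b]) \in R(V)} (b-a)^2$ from above and to bound $L_2(V)$ from below in terms of the squared diameter $\Delta(V)$. Since $R(V)$ is defined as the set of elementary intervals contained in the projection of some close pair, I would bound the left-hand side by a union bound over the close pairs. For any pair $\bx,\by \in V$ and any dimension $i$, the elementary intervals in $\cI_i(V)$ contained in $[\min(x_i,y_i),\max(x_i,y_i)]$ have lengths summing to exactly $|x_i - y_i|$; since each such length is itself at most $|x_i - y_i|$, the elementary inequality $\sum_j t_j^2 \le (\max_j t_j)(\sum_j t_j)$ gives that the sum of their squared lengths is at most $(x_i-y_i)^2$. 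Summing over all $d$ dimensions, the total squared-length contribution of the pair $(\bx,\by)$ is at most $\|\bx-\by\|_2^2$.

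Next, I would apply this pair-by-pair estimate together with the definition of closeness. There are at most $\binom{|V|}{2} \le k^2/2$ close pairs, each satisfying $\|\bx-\by\|_2^2 < \Delta(V)/k^4$, so the union bound yields
\begin{align*}
  \sum_{(i,[a,b]) \in R(V)} (b-a)^2 \;\le\; \sum_{\text{close }(\bx,\by)} \|\bx-\by\|_2^2 \;<\; \frac{k^2}{2}\cdot \frac{\Delta(V)}{k^4} \;=\; \frac{\Delta(V)}{2k^2}.
\end{align*}

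For the matching lower bound on $L_2(V)$, I would use Cauchy--Schwarz dimension-by-dimension. In each dimension $i$, the interval $[\ell_i,u_i]$ is split into at most $|V|-1 \le k-1$ elementary pieces whose lengths sum to $u_i - \ell_i$, so $\sum_{[a,b]\in \cI_i} (b-a)^2 \ge (u_i - \ell_i)^2/(k-1)$. Summing over dimensions and using $\sum_i (u_i-\ell_i)^2 \ge \Delta(V)$ (take the dimension-wise differences of a diameter-realizing pair), this yields $L_2(V) \ge \Delta(V)/(k-1) \ge \Delta(V)/k$.

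Combining the two estimates gives $L_2(V) - L_2'(V) \le \Delta(V)/(2k^2) \le L_2(V)/(2k)$, so $L_2'(V) \ge (1 - 1/(2k))\, L_2(V) \ge L_2(V)/2$, as claimed. The only mildly delicate step is the per-pair squared-length bound (the elementary $\sum t_j^2 \le (\sum t_j)^2$ trick); once that is in place, the factor $k^{-4}$ in the definition of ``close'' leaves enough slack to absorb both the $\binom{k}{2}$ union bound and the $k$ loss in lower-bounding $L_2(V)$, with plenty to spare.
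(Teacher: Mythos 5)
Your proof is correct and follows essentially the same route as the paper: bound $L_2(V)-L_2'(V)$ by the sum of squared distances over close pairs, estimate that sum by $\binom{|V|}{2}\cdot\Delta(V)/k^4 \le \Delta(V)/(2k^2)$, and then use $L_2(V)\ge\Delta(V)/k$ to conclude. The paper states those three inequalities without justification; you have simply filled in the (correct) details of the per-pair squared-length bound and the Cauchy--Schwarz lower bound on $L_2(V)$.
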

\begin{proof}
  The sum of squared length of intervals in $R(V)$ is at most the
  total sum of squared distance between all pairs of close centers,
  which is at most
  \begin{align*}
    \binom{|V|}{2}\cdot \frac{\Delta(V)}{k^4}\leq \frac{\Delta(V)}{2k^2}\leq \frac{L_2(V)}{2k}.
  \end{align*}
  This implies that $L_2'(V)\geq L_2(V)(1-1/2k)\geq \frac{1}{2}\cdot L_2(V)$.
\end{proof}

\begin{lemma}[Expected Number of Cuts]
  \label{lem:num-bulk}
  For any node $\Set$, the expected number of cuts from $D_2'(\Set)$
  until all far pairs in $\Set$ %
  are
  separated is at most $24\ln|\Set|\cdot s(\Set) \cdot \frac{L_2(\Set)}{\Delta(\Set)}$.
\end{lemma}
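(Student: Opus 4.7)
The plan is to first compute the probability that a single draw from $D_2'(V)$ separates a fixed far pair, and then apply a union bound / coupon collector argument over the at most $\binom{|V|}{2}$ far pairs in $V$.

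\emph{Step 1: Per-pair separation probability under $D_2$.} For any pair of points $\bm{x}, \bm{y} \in V$, I will run through the same calculation as in \Cref{lem:cost_lowerbound}, but replacing the indicator ``$\theta$ between $0$ and $p_i$'' by ``$\theta$ between $x_i$ and $y_i$''. For a dimension $i$ and an interval $[a,b]\in \cI_i$, the cut $(i,\theta)$ separates $\bm x$ from $\bm y$ iff $(i,[a,b]) \in \cI(\bm x, \bm y)$, in which case $\theta$ lies in $[a,b]$ with probability $1$. Using $\int_a^b P_{a,b}(\theta)\,d\theta = 1$, one gets
\begin{align*}
\Pr_{(i,\theta) \sim D_2(V)}[(i,\theta) \text{ separates } \bm x, \bm y]
= \frac{1}{L_2(V)}\sum_{(i,[a,b]) \in \cI(\bm x, \bm y)} (b-a)^2
= \frac{d_2(\bm x, \bm y)}{L_2(V)}.
\end{align*}

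\emph{Step 2: Lower bound under $D_2'$ for far pairs.} Sampling from $D_2'(V)$ corresponds to restricting $D_2(V)$ to intervals outside $R(V)$. The probability that a cut from $D_2'(V)$ separates $\bm x, \bm y$ is
\begin{align*}
\Pr_{D_2'(V)}[\text{separates } \bm x, \bm y] = \frac{d_2(\bm x, \bm y) - \sum_{(i,[a,b]) \in \cI(\bm x, \bm y) \cap R(V)} (b-a)^2}{L_2'(V)} \geq \frac{d_2(\bm x, \bm y) - \Delta(V)/(2k^2)}{L_2(V)}\,,
\end{align*}
where I used the bound $\sum_{(i,[a,b]) \in R(V)} (b-a)^2 \leq \Delta(V)/(2k^2)$ from the proof of \Cref{cor:rejection_scaling} and $L_2'(V) \leq L_2(V)$. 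For a \emph{far} pair, $\|\bm x -\bm y\|_2^2 \geq \Delta(V)/2$; by definition of $s(V)$, this gives $d_2(\bm x, \bm y) \geq \|\bm x -\bm y\|_2^2/s(V) \geq \Delta(V)/(2 s(V))$. Since $s(V) \leq |V| \leq k$, the term $\Delta(V)/(2k^2)$ is at most $d_2(\bm x, \bm y)/k$, so for $k \geq 2$ we may absorb it to obtain
\begin{align*}
\Pr_{D_2'(V)}[\text{separates } \bm x, \bm y] \geq \frac{d_2(\bm x, \bm y)}{2 L_2(V)} \geq \frac{\Delta(V)}{4\, s(V)\, L_2(V)} =: p.
\end{align*}

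\emph{Step 3: Union bound over far pairs and tail-sum for the expectation.} Since the cuts are sampled i.i.d.\ from $D_2'(V)$, the probability that a specific far pair is still unseparated after $t$ draws is at most $(1-p)^t \leq e^{-pt}$. There are at most $\binom{|V|}{2} \leq |V|^2/2$ far pairs, so by a union bound, the probability that \emph{some} far pair survives $t$ cuts is at most $\tfrac{|V|^2}{2} e^{-pt}$. Writing $\tau$ for the number of cuts needed, I will use the tail-sum identity $\EE[\tau] = \sum_{t \geq 0} \Pr[\tau > t]$, split at $T := 3\ln|V|/p$, and bound each part: the first $T$ terms contribute at most $T$, while
\begin{align*}
\sum_{t \geq T}\Pr[\tau > t] \leq \sum_{t \geq T}\frac{|V|^2}{2} e^{-pt} \leq \frac{|V|^2}{2} \cdot \frac{e^{-pT}}{p} = \frac{1}{2|V|\, p} \leq \frac{T}{p}\cdot o(1).
\end{align*}
Adding the two contributions gives $\EE[\tau] \leq 4\ln |V|/p \leq 16 \ln |V| \cdot s(V) L_2(V)/\Delta(V) \leq 24 \ln|V| \cdot s(V) L_2(V)/\Delta(V)$, as required.

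The only delicate point is Step~2: controlling the loss in separation probability coming from intervals that are thrown out by the rejection step and that happen to lie in $\cI(\bm x, \bm y)$. The argument closes because $s(V) \leq k$ forces far-pair pseudo-distances to dominate the total squared length of rejected intervals by a factor of $k$, leaving only a $(1-1/k)$ multiplicative loss.
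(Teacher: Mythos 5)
Your proof follows the same strategy as the paper: lower-bound the per-cut separation probability of a far pair under $D_2'(\Set)$ by $\Delta(\Set)/(4\,s(\Set)\,L_2(\Set))$, then combine this with a union bound over the at most $\binom{|\Set|}{2}$ far pairs. Your Steps~1 and~2 reproduce exactly the paper's bound; the only divergence is that you convert the per-cut probability into an expectation via the tail-sum formula $\EE[\tau]=\sum_{t\geq 0}\Pr[\tau>t]$, whereas the paper argues that $M=12\ln|\Set|\cdot s(\Set)L_2(\Set)/\Delta(\Set)$ cuts succeed with probability at least $\nicefrac12$ and concludes $\EE[\tau]\leq 2M$; these are interchangeable.

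One slip to flag: the inequality $\sum_{t\geq T}e^{-pt}\leq e^{-pT}/p$ is backwards. The exact geometric sum is $e^{-p\lceil T\rceil}/(1-e^{-p})$, and since $1-e^{-p}\leq p$ one has $1/(1-e^{-p})\geq 1/p$, not $\leq$. The fix is cheap: $p=\Delta(\Set)/(4s(\Set)L_2(\Set))\leq\nicefrac12$ because $L_2(\Set)\geq d_2(\bm p,\bm q)\geq \Delta(\Set)/(2s(\Set))$ for a far pair $\bm p,\bm q$, so $1-e^{-p}\geq p-p^2/2\geq 3p/4$, giving $\sum_{t\geq T}e^{-pt}\leq \tfrac{4}{3}e^{-pT}/p$. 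The tail then contributes at most $\tfrac{2}{3p|V|}$, still $o(T)$, and the final bound $\EE[\tau]\leq 4\ln|\Set|/p=16\ln|\Set|\cdot s(\Set)L_2(\Set)/\Delta(\Set)\leq 24\ln|\Set|\cdot s(\Set)L_2(\Set)/\Delta(\Set)$ survives. (The expression $\tfrac{T}{p}\cdot o(1)$ at the end of Step~3 also appears to be a typo for $T\cdot o(1)$.)
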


\begin{proof}
  Consider a collection of $M:=3\ln|\Set|\cdot \frac{4s(\Set)\cdot L_2(\Set)}{\Delta(\Set)}$ cuts
  sampled from $D_2'(\Set)$, and consider two ``far'' points $\bm p, \bm q$,
  i.e., such that $\| \bm p - \bm q\|_2^2 > \nf{\Delta(\Set)}2$.  Then, the
  probability that any one cut separates the two is at least
  \begin{align*}
    \Pr_{(i, \theta) \sim D'_2(\Set)} [\mbox{$(i, \theta)$ separates $\bm
    p, \bm q$}] \geq \frac{d_2(\bm p, \bm q)-\binom{k}{2}\frac{\Delta(\Set)}{k^4}}{L_2(\Set)}
      &\geq \frac{\|\bm p - \bm q\|_2^2}{s(\Set) \cdot L_2(\Set)}-\frac{\Delta(\Set)}{2k^2\cdot L_2(\Set)} \\
      &\geq \frac{\Delta(\Set)}{2s(\Set)\cdot L_2(\Set)}- \frac{\Delta(\Set)}{2k^2\cdot L_2(\Set)}\\
      &\geq \frac{\Delta(\Set)}{4s(\Set) \cdot L_2(\Set)}\,.
  \end{align*}
  The last inequality in the above equation follows using
  $s(\Set)\leq k$ and $k\geq 2$. Hence the probability that the $M$
  cuts do not separate some pair at distance at least $\Delta(\Set)/2$
  can be upper-bound using a union bound by
  \begin{align*}
    {{|\Set|} \choose 2} \cdot \left( 1 -\frac{\Delta(\Set)}{4\cdot s(\Set) \cdot L_2(\Set)}\right)^{M} 
    \leq |\Set|^2 \left( \frac{1}{e}\right)^{(3\ln |\Set|) }  = 1/|\Set|\,.
  \end{align*}
  Hence, these $M$ cuts separate all pairs that have
  squared distance at least $\Delta(\Set)/2$ with probability at least
  $1-1/|\Set| \geq \nf12$. In turn, the expected number of cuts is at
  most $2M$.
\end{proof}

We can now start to bound the cost incurred due to bulk cuts.

\begin{lemma}[Logarithmic Number of Relevant Levels] 
For any cut $(i,\theta)$, we have
  \label{lem:log-levels}
  \[ \sum_{v \in B(T')} \ones_{[\bm 0 \in B_v]} \cdot \ones_{[(i,\theta)
    \in \supp(D_2'(v))]} \leq 4\ln k. \]  
\end{lemma}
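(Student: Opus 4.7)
Let $v_0 \supset v_1 \supset \cdots \supset v_m$ enumerate the bulk nodes on the path from the root of $T'$ to the leaf containing the origin, taken in order of descent, for which $(i,\theta) \in \supp(D_2'(v_j))$. I aim to show $m+1 \leq 4\ln k$ by combining two complementary observations.

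First, at each bulk node $v$ the bulk cuts separate every pair of far points in $V_v$ (squared distance at least $\Delta(v)/2$), so every descendant $v'$ of $v$ satisfies $\Delta(v') < \Delta(v)/2$. Applied inductively along $v_0,\dots,v_m$ this gives $\Delta(v_m) < \Delta(v_0)/2^m$. Second, the hypothesis $(i,\theta) \in \supp(D_2'(v_j))$ means that the unique interval $[a_j,b_j] \in \intervals_i(V_{v_j})$ containing $\theta$ is not in $R(V_{v_j})$; unwinding the definition, the delimiting pair $(p_j,q_j)\in V_{v_j}$ (the two points of $V_{v_j}$ whose $i$-coordinates equal $a_j$ and $b_j$) cannot be close, so $\|p_j-q_j\|_2^2 \geq \Delta(v_j)/k^4$.

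To combine the two, I would partition the chain $v_0,\dots,v_m$ into \emph{phases}, each a maximal contiguous segment sharing the same delimiting pair $(p,q)$. Within one phase $v_s,\dots,v_e$, both $p$ and $q$ remain in $V_{v_e}$, so $\Delta(v_e) \geq \|p-q\|_2^2 \geq \Delta(v_s)/k^4$ by non-rejection at $v_s$; together with the diameter halving, this bounds the phase length by $e - s \leq 4\log_2 k$. The crucial structural fact about phase transitions is that the interval $[a_j,b_j]$ can only grow (since $V_{v_{j+1}} \subseteq V_{v_j}$ cannot add projections to dimension $i$): at every transition at least one delimiter has been separated from the origin and the coordinate gap $|p_{j,i}-q_{j,i}|$ strictly increases, which feeds into $\Delta(v_{j+1}) \geq \|p_{j+1}-q_{j+1}\|_2^2 \geq (p_{j+1,i}-q_{j+1,i})^2$.

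The main obstacle is the accounting needed to close the argument with the tight constant: naively multiplying the per-phase length by the number of phases (at most $|V_{v_0}| \leq k$ delimiter losses) would give only $O(k\log k)$, far too loose. The clean way to reach $m+1 \leq 4\ln k$ is a potential-function argument combining $\log_2 \Delta(v_j)$ with a term tracking the monotone growth of the coordinate gap $|p_{j,i}-q_{j,i}|$, designed so that each element of the chain decreases the potential by a fixed constant, while the total range of the potential is $O(\log k)$. The technical heart of the proof is verifying that this potential is well-behaved across phase transitions, where the delimiting pair changes abruptly but the coordinate gap only grows, and tuning the constants so as to obtain the stated bound $4\ln k$ rather than a weaker constant.
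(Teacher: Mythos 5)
The paper's proof is a one-shot argument that sidesteps all of the bookkeeping you are setting up. Let $v_j$ be the \emph{first} bulk node on the chain where $(i,\theta)\in\supp(D_2'(v_j))$, and let $\bm p,\bm q\in \Univ_{v_j}$ be the \emph{closest} pair in $\Univ_{v_j}$ separated by $(i,\theta)$. The interval of $\cI_i(\Univ_{v_j})$ containing $\theta$ lies inside $[\min(p_i,q_i),\max(p_i,q_i)]$, so non-rejection forces $\|\bm p-\bm q\|_2^2\geq \Delta(v_j)/k^4$. Now the key observation you are missing: for any later bulk node $v_{j''}$ on the origin's path where $(i,\theta)$ is still in the support, any separated pair in $\Univ_{v_{j''}}\subseteq\Univ_{v_j}$ is also a separated pair in $\Univ_{v_j}$, hence has squared distance at least $\|\bm p-\bm q\|_2^2\geq\Delta(v_j)/k^4$ by minimality of $(\bm p,\bm q)$. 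This gives $\Delta(v_{j''})\geq\Delta(v_j)/k^4$, and with diameter halving at each bulk level you are done after $O(\log k)$ steps. There is nothing to track across levels: the quantity $\|\bm p-\bm q\|_2^2$ is pinned down once at $v_j$ and bounds $\Delta$ at every subsequent relevant level from below.

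Your proposal, by contrast, tracks the \emph{delimiting pair} of the interval containing $\theta$ — a pair that genuinely changes along the chain — and you correctly observe this leads to a phase decomposition whose naive product bound is $O(k\log k)$, far off. The potential-function fix you then sketch (combining $\log_2\Delta(v_j)$ with a term for the coordinate gap $b_j-a_j$) is never carried out, and you flag it yourself as the ``technical heart.'' It is not clear it closes: you have the upper bound $(b_j-a_j)^2\leq\Delta(v_j)$, but no lower bound on $b_j-a_j$ in terms of $\Delta(v_j)$ (non-rejection bounds $\|p_j-q_j\|_2^2$ from below, not its $i$-th coordinate), so the range of any potential of the form $\log\Delta - \log(b-a)$ is not obviously $O(\log k)$. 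This is a genuine gap; switching from the (changing) delimiting pair to the (fixed) closest separated pair at the first occurrence removes it entirely.

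(A minor side remark: with $\Delta$ halving each step, the honest bound from the paper's argument is $4\log_2 k + O(1)\approx 5.77\ln k$ rather than the stated $4\ln k$, but the constant is immaterial for its use in \Cref{lem:bulk_costinc}.)
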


\begin{proof}
  The bulk nodes in the compressed tree $T'$ that correspond to the part containing
  the origin $\bm 0$ lie on a root-leaf path; call these
  $v_1, v_2, \ldots, v_\ell$, with $v_1$ closest to the root. Our
  algorithm ensures that $\Delta(v_j) \leq
  \Delta(v_{j-1})/2$. Consider the lowest integer $j$ such that
  $(i,\theta)$ belongs to the support of $D_2'(v_j)$, and let
  $\bm p, \bm q \in \Univ \cap B_{v_j}$ be the closest pair of points in
  $B_{v_j}$ separated by $(i,\theta)$. The definition of the
  probability distribution $D_2'(v_j)$ ensures that
  $\|\bm p - \bm q\|_2^2 \geq \Delta(v_j)/k^4$. For
  $j' = j + 4 \ln k + 1$ we have that
  $\Delta(v_{j'}) < \Delta(v_j)/k^4$, and so there are no pairs of
  points separated by $(i,\theta)$---implying that this cut will no
  longer be in the support of $D_2(v_{j''})$ for $j'' \geq j'$.
\end{proof}

\begin{lemma}[Cost for Bulk Cuts]
\label{lem:bulk_costinc}
  The expected cost increase due to bulk cuts is at most $O(k)\cdot \| p^* \|_2^2$.
\end{lemma}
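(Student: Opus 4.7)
The plan is to bound the expected bulk-cut cost by summing, over the bulk nodes $v$ on the root-to-leaf path of $T'$ containing the origin, the expected increase $\EE[\Delta_v]$ in the squared distance from the origin to its closest remaining point caused by the cuts performed at $v$. For such a node $v$, write $\bm p_v^*$ for the closest point to the origin in $\Univ_v$ and $\bm q_v$ for the closest remaining point in the origin's region after all bulk cuts at $v$. The first step is a deterministic observation: if $\bm p_v^*$ is not separated from the origin by any of the cuts at $v$ then $\Delta_v = 0$; otherwise, since the bulk cuts ensure that every pair in $\Univ_v$ at squared distance exceeding $\Delta(v)/2$ is separated, both $\bm p_v^*, \bm q_v \in \Univ_v$ satisfy $\|\bm p_v^* - \bm q_v\|_2^2 \le \Delta(v)$, so $\Delta_v \le \Delta(v) + 2\|\bm p_v^*\|_2 \sqrt{\Delta(v)}$.

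Next I would estimate the probability of such a separation. Lemma~\ref{lem:num-bulk} bounds the expected number of bulk cuts at $v$ by $O(\ln|\Univ_v|\cdot s(v)\cdot L_2(v)/\Delta(v))$, and Lemma~\ref{lem:cost_lowerbound}, combined with Proposition~\ref{cor:rejection_scaling} to pass from $D_2(v)$ to $D_2'(v)$, bounds the probability that any single cut separates the origin from $\bm p_v^*$ by $O(\|\bm p_v^*\|_2^2/L_2(v))$. A union bound on the expected number of separating cuts then gives a separation probability of at most $O(\ln|\Univ_v|\cdot s(v)\cdot \|\bm p_v^*\|_2^2/\Delta(v))$. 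Invoking the defining bulk condition $s(v)\le |\Univ_v|/\ln^2|\Univ_v|$ and splitting on whether $\Delta(v) \gtrless \|\bm p_v^*\|_2^2$ (in the regime $\Delta(v) < \|\bm p_v^*\|_2^2$ the deterministic cost bound is already $O(\|\bm p_v^*\|_2^2)$, while in the regime $\Delta(v) \ge \|\bm p_v^*\|_2^2$ the small separation probability absorbs the $\Delta(v) + 2\|\bm p_v^*\|_2\sqrt{\Delta(v)}$ factor) yields a per-node estimate roughly of the form $\EE[\Delta_v] \le O(|\Univ_v|/\ln|\Univ_v|) \cdot \|\bm p_v^*\|_2^2$.

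The final step is to sum these per-node bounds over the bulk nodes on the origin's path to obtain $O(k) \cdot \|\bm p^*\|_2^2$. Two structural ingredients enable a telescoping: first, the geometric decay $\Delta(v_{j+1}) \le \Delta(v_j)/2$ along the path (a direct consequence of separating all far pairs); and second, Lemma~\ref{lem:log-levels}, which restricts each cut $(i,\theta)$ to the support of $D_2'(v)$ for at most $4\ln k$ bulk nodes on the path. The main obstacle I anticipate is that a naive summation gives $\sum_v (|\Univ_v|/\ln|\Univ_v|)\cdot \|\bm p_v^*\|_2^2$, which could be as large as $\Omega(k^2/\ln k) \cdot \|\bm p^*\|_2^2$ when $|\Univ_v|$ decreases by only one at each bulk step. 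To overcome this I would carry out an amortized potential-function argument that couples the evolving $\|\bm p_v^*\|_2^2$ and $\Delta(v)$, and charge contributions to individual cuts rather than nodes via Lemma~\ref{lem:log-levels}, so that the geometric decay of $\Delta(v)$ turns the total into a convergent geometric series that collapses to $O(k)\cdot \|\bm p^*\|_2^2$.
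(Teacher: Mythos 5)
Your setup differs from the paper's in a crucial way: you track the \emph{local} closest point $\bm p_v^*$ in each bulk node $v$ and bound the per-node increment $\EE[\Delta_v]$, whereas the paper keeps the \emph{global} closest point $\bm p^*$ as the reference throughout. This choice is exactly what creates the summation problem you then have to fight. The paper's $Y_t := \Delta(v)\cdot \ones_{[(i,\theta)\text{ seps }\bm 0, \bm p^*]}\cdot \ones_{[\{\bm 0,\bm p^*\}\sse B_v]}$ vanishes at every bulk node once $\bm p^*$ has been separated, so the entire quantity $\|\widehat{\bm p} - \bm p^*\|_2^2$ is charged to the single cut that first separates $\bm p^*$, and is bounded by $\Delta(v)$ at that node; the conversion to $\|\widehat{\bm p}\|_2^2$ happens once at the very end via the generalized triangle inequality. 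Because the paper never tracks a drifting $\|\bm p_v^*\|_2^2$, it never has to control the cross term $2\|\bm p_v^*\|_2\sqrt{\Delta(v)}$ or do a case split, and it never needs a potential function.

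The genuine gap in your proposal is the final amortization, which you do not carry out and which I do not believe can be rescued from the per-node closed-form bound $\EE[\Delta_v]\leq O(|\Univ_v|/\ln|\Univ_v|)\cdot\|\bm p_v^*\|_2^2$. The paper's route to removing the extra $\ln k$ factor is structural: it keeps the per-node bound as an \emph{integral over cuts} (after noting $(b-a)^2 P_{a,b}(\theta)\ones_{[\bm p^*\in B_v]}\leq 4|\theta-p_i^*|$, the $L_2(v)$ and $\Delta(v)$ dependences cancel), then exchanges the sum over nodes with the integral over cuts and applies \Cref{lem:log-levels} to show each cut contributes to at most $O(\ln k)$ bulk nodes. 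Your proposal already collapses the per-node quantity to a scalar $|\Univ_v|/\ln|\Univ_v|\cdot\|\bm p_v^*\|_2^2$ before summing, so there is no cut-level object left to which \Cref{lem:log-levels} can be applied; and the geometric decay of $\Delta(v)$ does not by itself tame $\sum_v |\Univ_v|/\ln|\Univ_v|$, since $|\Univ_v|$ can shrink by only one per bulk level while $\|\bm p_v^*\|_2^2$ is bounded below by $\|\bm p^*\|_2^2$. In short: the deterministic cost bound, the separation-probability estimate, and the Wald-style expected-number-of-cuts argument are all correct and match the paper's ingredients; what is missing is the reformulation of the per-node cost as a per-cut integral, without which the appeal to \Cref{lem:log-levels} and the claimed telescoping cannot be made precise.
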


\begin{proof}
  Consider a bulk node $v$ in the decision tree created by the
  algorithm: we generate a random number of bulk cuts $\cuts_v$ at
  this node, and each of these can cause an increase in cost. Let
  $Y_t$ be the following upper bound on the increase in cost due to
  the $t^{\text{th}}$ such cut $(i,\theta)$:
  \[ Y_t := \Delta(v)\cdot \ones_{[(i,\theta) \text{ seps } \bm 0,
    \bm p^* ]}\cdot \ones_{[\{\bm 0, \bp^*\} \sse B_v]} . \label{eq:6} \]
  Moreover, let $N$ be the number of such cuts, then the total
  expected cost is $\EE[\sum_{t = 1}^N Y_t ]$. Since the $Y_t$ variables
  are independent and $N$ is a stopping time, we can use Wald's
  equation to infer that the total expected cost due to these cuts is
  $\EE[N]\cdot\EE[Y_t]$. Taking expectations of the expression for
  $Y_t$ above (with respect to the distribution $D_2'(v)$), and using
  \Cref{lem:num-bulk} to bound $\EE[N]$, this is at most 
  \[ \bigg( O(s(v) \ln|\Univ_v|)\cdot \frac{L_2(v)}{\Delta(v)}\bigg) \cdot \Delta(v)\cdot \sum_{(i,[a,b]) \in
      \intervals_{\all}'(v)} \frac{(b-a)^2}{L_2'(v)}\int_a^b
    P_{a,b}(\theta)\cdot \ones_{[(i,\theta) \text{ seps } \bm 0,
    \bm p^* ]}\cdot \ones_{[\{\bm 0, \bm p^*\} \sse B_v]}
    \,d\theta. \label{eq:6b} \]
 Using \Cref{cor:rejection_scaling}, we know that 
  $L_2'(v)\geq L_2(v)/2$, 
  so the above expression is at most
  \[ O(s(v) \ln|\Univ_v|) \cdot \sum_{(i,[a,b]) \in
      \intervals_{\all}(v)} (b-a)^2 \int_a^b
    P_{a,b}(\theta)\cdot \ones_{[(i,\theta) \text{ seps } \bm 0,
    \bm p^* ]}\cdot \ones_{[\{\bm 0, \bm p^*\} \sse B_v]} \cdot \ones_{[(i,\theta)
    \in \supp(D_2'(v))]}
    \,d\theta. \label{eq:6c} \]
  Next, we observe that for any dimension $i$, we have
  \[ (b-a)^2 \cdot P_{a,b}(\theta) \cdot \ones_{[\bm p^* \in B_v]} \leq 4
    | \theta - \bm p_i^*|. \] Moreover, for each bulk node we have
  $s(v) \ln |\Univ_v| \leq |\Univ_v|/\ln |\Univ_v|$. This in turn is at most
  $k/\ln k$, since the function $\frac{x}{\ln x}$ is monotone and
  $|\Univ_v| \leq k$.  Substituting both these facts, we get
  \[ O(\nf{k}{\ln k}) \cdot \sum_i 
    \int_{-\infty}^\infty
    |\theta - \bm p_i^*| \cdot \ones_{[(i,\theta) \text{ seps } \bm 0,
    \bm p^* ]}\cdot \ones_{[\bm 0 \in B_v]} \cdot \ones_{[(i,\theta)
    \in \supp(D_2'(v))]}
    \,d\theta. \label{eq:6d} \]
  Next, we use \Cref{lem:log-levels} to get:
  \[ \sum_{v \in B(T')} \ones_{[\bm 0 \in B_v]} \cdot \ones_{[(i,\theta)
    \in \supp(D_2'(v))]} \leq O(\ln k). \]
  Now summing over all $v$, we get
  \[ O(k) \cdot \sum_i 
    \int_{-\infty}^\infty
    |\theta - \bm p_i^*| \cdot \ones_{[(i,\theta) \text{ seps } \bm 0,
    \bm p^* ]}
    \,d\theta = O(k) \cdot \|\bm p^*\|_2^2. \]
  This completes the proof.
\end{proof}

Finally, we turn our attention to solo cuts.
For solo node $v$ let $\bm p_v, \bm q_v$ be the two far nodes such
that their stretch is at least $\frac{|\Univ_v|}{(\ln |\Univ_v|)^2}$,
and define the distribution $D_2''(v)$ to be the distribution $D_2(v)$
conditioned on separating this far pair. 

\begin{lemma}[Ratio for Solo Cuts]
  \label{lem:solo-balance-lemma}
  For any solo node $v \in T'$, 
  \[
    \frac{\EE_{(i,\theta) \sim D_2''(v)}[\text{ cost increase at node
      $v$ } ]}{ \EE_{(i,\theta) \sim D_2''(v)}[\text{ size of smaller
        child of node $v$ }]} \leq 32\|\bm p^*\|_2^2\cdot
    \left(1+\ln\left(\frac{|\Univ_v|}{s(v)}\right) \right).
  \]
\end{lemma}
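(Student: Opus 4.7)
The plan is to bound the expected cost increase (the numerator) and the expected smaller-child size (the denominator) separately and then combine. The numerator is handled by a routine conditioning computation using \Cref{lem:cost_lowerbound}; the denominator is the technical heart and requires a dyadic level-set argument on the separating intervals.

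For the numerator, a cut sampled from $D_2''(v)$ raises the origin's cost only when it separates the origin from $\bm p^*$; in that case the increase is at most the squared diameter $\Delta(v)$ of $B_v$. Applied to $\bm p^*$, \Cref{lem:cost_lowerbound} bounds the \emph{unconditional} probability of such a separation under $D_2(v)$ by $2\|\bm p^*\|_2^2/L_2(v)$. A $D_2(v)$-cut separates $\bm p_v$ and $\bm q_v$ precisely when its coordinate falls in the projection gap between them, and the per-interval density $P_{a,b}$ integrates to one, so the unconditional probability of the conditioning event is exactly $d_2(\bm p_v,\bm q_v)/L_2(v)$. Since $\bm p_v,\bm q_v$ is a far pair, $d_2(\bm p_v,\bm q_v) = \|\bm p_v-\bm q_v\|_2^2/s(v) \ge \Delta(v)/(2\,s(v))$. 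Bayes' rule therefore gives
\[
    \Pr_{D_2''(v)}[\text{cut separates origin from }\bm p^*] \;\le\; \frac{2\|\bm p^*\|_2^2}{d_2(\bm p_v,\bm q_v)} \;\le\; \frac{4\, s(v)\,\|\bm p^*\|_2^2}{\Delta(v)},
\]
whence $\EE[\text{cost increase}] \le 4\, s(v)\,\|\bm p^*\|_2^2$.

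For the denominator, set $N := |\Univ_v|$. Within each separating interval $[x_j, x_{j+1}]$ in dimension $i$, the smaller-child size is constant in $\theta$ and equal to $\min(j, N-j)$, so
\[
    \EE[\text{size of smaller child}] \;=\; \frac{1}{d_2(\bm p_v,\bm q_v)} \sum_{(i,j)\,\text{sep}} (\Delta x_j)^2\,\min(j, N-j).
\]
The goal is to show this is at least $s(v)/(8(1+\ln(N/s(v))))$, which together with the numerator bound yields the stated ratio. Set $M := \lceil s(v)/(8(1+\ln(N/s(v))))\rceil$ and call a separating interval \emph{outer} if $\min(j, N-j) < M$. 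It suffices to show that outer intervals carry at most half of the total $d_2(\bm p_v,\bm q_v)$ mass, since then inner intervals contribute at least $M/2$ to the expectation. In each dimension $i$, the outer separating intervals form a ``low'' block of consecutive ranks starting at $r_p^{(i)}$ (the rank of $\bm p_v$'s projection) and a ``high'' block ending at $r_q^{(i)}-1$, each of size at most $M$; a per-dimension Cauchy--Schwarz estimate on $\le M$ consecutive intervals bounds their total squared length by the square of their total span. Aggregating these dimension-wise bounds via a dyadic decomposition of the spans, and using the identity $\sum_i L_i^2 = \|\bm p_v-\bm q_v\|_2^2 = s(v)\, d_2(\bm p_v,\bm q_v)$ (with $L_i$ the projection gap of $\bm p_v, \bm q_v$ in dim $i$), produces the claimed bound; the $O(\log(N/s(v)))$ dyadic scales contribute the logarithmic overhead.

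The main obstacle is the global aggregation in the denominator step: although the per-dimension Cauchy--Schwarz estimate is immediate, the outer mass need not be automatically controlled when $\bm p_v, \bm q_v$ happen to sit near the rank extremes in many dimensions simultaneously. The proof must trade the stretch $s(v)$ for rank balance across dimensions and absorb the unavoidable logarithmic slack via the union over dyadic scales, which is precisely what produces the $1+\ln(|\Univ_v|/s(v))$ overhead in the final bound.
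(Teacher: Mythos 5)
Your numerator bound matches the paper's: both condition on the separating event, use \Cref{lem:cost_lowerbound} for the unconditional separation probability, and lower-bound the conditioning probability by $\Delta(v)/(2\,s(v)\,L_2(v))$ via the definition of stretch, arriving at $\EE_{D_2''(v)}[\text{cost increase}] \le 4\,s(v)\,\|\bm p^*\|_2^2$. The divergence is entirely in the denominator, which the paper isolates as \Cref{clm:stretch-separation} and proves via a constrained minimization: it fixes the one-dimensional functional $\text{sep}_i=\sum_j a_j^2\min(j,\ell{+}1{-}j)$ subject to $\sum_j a_j=\sqrt{s_i}$, $\sum_j a_j^2=1$, solves the Lagrangian to get a closed-form per-dimension bound (\Cref{lem:expected_separation}), and then aggregates dimensions by Jensen using convexity of $x\mapsto x/(1+\ln(\alpha/x))$. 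Your proposal replaces this with a thresholding argument (``outer'' vs.\ ``inner'' intervals) plus a dyadic-plus-Cauchy--Schwarz aggregation.

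The thresholding argument has a genuine gap: the intermediate claim that ``outer intervals carry at most half of the total $d_2(\bm p_v,\bm q_v)$ mass'' is false. Take $d=1$, $N$ points, with the $\ell=N{-}1$ separating interval lengths $a_1=2\sqrt{N}$ and $a_2=\cdots=a_{N-1}=1$. Then $\sum_j a_j\approx N$, $\sum_j a_j^2\approx 5N$, so $s_i\approx N/5$, and your $M\approx s_i/(8(1+\ln(N/s_i)))\approx N/105$. The single interval of rank $j=1$ has $\min(1,N{-}1)=1<M$, so it is outer, yet $a_1^2=4N$ is already $4/5$ of the total squared mass. The conclusion of \Cref{clm:stretch-separation} still holds in this example (the bulk of $\sum_j a_j^2\min(j,N{-}j)$ comes from the roughly $N^2/4$ contribution of the unit-length intervals), but your sufficient condition is violated, so the proof as written does not go through. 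This is not a fixable constant issue: the outer squared mass can dominate the total whenever a single near-boundary interval is disproportionately long, and the dyadic aggregation you sketch has no mechanism to exclude this. You would need either to separate by rank \emph{and} interval scale simultaneously (effectively rediscovering the level-set structure the Lagrangian optimum exhibits), or to argue directly about the objective $\sum_j a_j^2\min(j,N{-}j)$ rather than through a mass-threshold proxy, which is what the paper's variational proof does.

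One additional minor mismatch: even where your plan does apply, it yields $\ge M/2$ with $M\approx s(v)/(8(1+\ln(N/s(v))))$, i.e.\ half the bound in \Cref{clm:stretch-separation}. That slack alone is absorbable by adjusting constants in the lemma statement, but it is worth noting that the paper's $32$ in \Cref{lem:solo-balance-lemma} already comes from $4\cdot 8$ (numerator and denominator constants), so an extra factor of $2$ would change the final constant.
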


\begin{proof}
  \Cref{lem:cost_lowerbound} implies that
  \[ \EE_{(i,\theta) \sim D_2(v)}[ \text{ cost increase } ] \leq \Delta(v)
    \cdot \frac{2\|\bm p^*\|_2^2}{L_2(v)}. \]
  Moreover, the probability of separating $\bm p_v, \bm q_v$ is at least
  \[ \frac{d_2(\bm p_v, \bm q_v\mid \Univ_v)}{L_2(v)}
    \geq \frac{\|\bm p - \bm q\|_2^2}{s(v) \cdot L_2(v)}
    \geq \frac{\Delta(v)}{2\cdot s(v) \cdot L_2(v)}\,. \]
  Since the cost increase is non-negative, we get that
  \begin{gather}
    \EE_{(i,\theta) \sim D_2''(v)}[ \text{ cost increase } ] \leq
    \Delta(v) \cdot \frac{2\|p^*\|_2^2}{L_2(v)} \cdot \frac{2\cdot s(v)
      \cdot L_2(v)}{\Delta(v)} \leq 4 s(v) \cdot \|\bm p^*\|_2^2. \label{eq:7}
  \end{gather}
  This bounds the numerator of the desired quantity; for the
  denominator, we prove the following claim in \Cref{sec:proof-stretch-vs-separation}:
  \begin{restatable}{claim}{StretchSep}
    \label{clm:stretch-separation}
    If for a cut $(i,\theta)$, we define $H^+ := \{x \mid x_i \geq \theta\}$ and $H^- = \RR^d \setminus
    H^+$, then 
    \[ \EE_{(i,\theta) \sim D_2''(v)}\bigg[ \min\big( |\Univ_v \cap H^+|, |\Univ_v \cap
      H^-|\big) \bigg] \geq \frac{s(v)}{8(1+\ln(\nicefrac{|\Univ_v|}{s(v)}))}~. \]
  \end{restatable}
  Using \Cref{clm:stretch-separation} with~(\ref{eq:7}) finishes the proof.
\end{proof}

\begin{lemma}[Cost for Solo Cuts]
\label{lem:solo_costinc}
  For any internal solo node $v \in T'$, the expected cost increase due
  to solo cuts made in the subtree $T'_v$ is at most \[32|\Univ_v|(1+ 2\ln
  \ln |\Univ_v|) \cdot \ones_{[\bm 0 \in B_v]} \cdot \| \bm p^* \|_2^2. \]  
\end{lemma}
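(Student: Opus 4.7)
The plan is to strengthen the claim so that it applies to \emph{every} internal node $v$ of $T'$ (solo or bulk), not only solo nodes, and prove this strengthening by strong induction on $|\Univ_v|$. With $\Phi(m) := 32m(1+2\ln\ln m)$, the target bound is
\[
\EE\big[\,\text{cost of solo cuts in } T'_v \cdot \ones_{[\bm 0\in B_v]}\,\big] \;\leq\; \Phi(|\Univ_v|)\cdot \|\bm p^*\|_2^2,
\]
which reduces to the lemma when $v$ is a solo node. The base case is a leaf ($|\Univ_v|=1$), for which both sides vanish, and I will use that $\Phi$ is monotone increasing in $m\geq 2$.

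\paragraph{Inductive step.}
For a bulk node $v$ with children $v_1,\dots,v_t$, no solo cut is made at $v$, and each child has $|\Univ_{v_i}|<|\Univ_v|$ (since the bulk step separates at least one far pair), so the inductive hypothesis applies; because the origin lies in at most one $B_{v_i}$, the sum of the children's bounds is at most $\Phi(|\Univ_v|)\|\bm p^*\|_2^2$ by monotonicity of $\Phi$. For a solo node $v$ with $m := |\Univ_v|$ and children $v_L, v_R$ of (random) sizes $n_L+n_R = m$, the solo-node condition $s(v)\geq m/(\ln m)^2$ yields $1+\ln(m/s(v))\leq 1+2\ln\ln m$, so the Ratio for Solo Cuts lemma bounds the cost paid at $v$ by
\[
32\|\bm p^*\|_2^2(1+2\ln\ln m)\cdot\EE\big[\min(n_L,n_R)\cdot\ones_{[\bm 0\in B_v]}\big].
\]
Applying the inductive hypothesis to $v_L$ and $v_R$ and using that only the side containing $\bm 0$ contributes bounds the recursive cost by $\EE[\Phi(n_{\bm 0})\cdot\ones_{[\bm 0\in B_v]}]\,\|\bm p^*\|_2^2$, which, since $\ln\ln n_{\bm 0}\leq \ln\ln m$, is at most $32\|\bm p^*\|_2^2(1+2\ln\ln m)\cdot \EE[n_{\bm 0}\cdot\ones_{[\bm 0\in B_v]}]$; here $n_{\bm 0}$ is the size of the side containing $\bm 0$.

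\paragraph{Key inequality and the main obstacle.}
Adding the two contributions, the induction closes provided the \emph{deterministic} pointwise inequality $\min(n_L,n_R) + n_{\bm 0} \leq m$ holds. Taking WLOG $n_L \leq n_R$, either $\bm 0$ is on the small side (and the sum is $2n_L\leq n_L+n_R = m$) or on the large side (and the sum equals exactly $m$), so the inequality holds in both cases. I expect this to be the main subtle point: $\Phi$ is convex, so Jensen's inequality pulls in the wrong direction, and the argument really needs to trade the cost at $v$ (proportional to the \emph{smaller} child) against the recursive cost (on whichever side $\bm 0$ actually falls). Once this observation is in place, the inductive step collapses to $32\|\bm p^*\|_2^2(1+2\ln\ln m)\cdot m = \Phi(m)\|\bm p^*\|_2^2$, completing the induction and hence the lemma.
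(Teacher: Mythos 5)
Your proof is correct and takes essentially the same approach as the paper: the same inductive structure, the same invocation of the Ratio-for-Solo-Cuts lemma, and the same crucial trade, namely that the cost at $v$ is proportional to $\min(n_L,n_R)$ while the recursion lives in the side containing $\bm 0$, so their sizes add to at most $m$. The only genuine (if cosmetic) difference is packaging: the paper inducts only over solo nodes and introduces $\chi(v)$, the set of solo descendants whose closest solo ancestor is $v$, whereas you strengthen the invariant to all internal nodes so the bulk nodes are absorbed by a trivial case — arguably cleaner. One small detail to tidy up: $\Phi(1)=32(1+2\ln\ln 1)$ is undefined, so you should explicitly set $\Phi(1):=0$ so that the inductive bound $\Phi(n_{\bm 0})\leq 32(1+2\ln\ln m)\,n_{\bm 0}$ makes sense when a child is a leaf (the paper sidesteps this with an explicit $|\Univ_v|\in\{2,3\}$ base case, noting that $|\Univ_v|=2$ cannot be a solo node since then $s(v)=1<2/\ln^2 2$).
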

\begin{proof}
  The proof is by induction. The base cases are when node $v$ has
  $|\Univ_v| \leq 3$. For $v$ to be an internal node,
  $|\Univ_v| \in \{2,3\}$. If it has two nodes, then
  $s(v) = 1 < \frac{2}{\ln^2 2}$, and hence $u$ cannot be a solo
  node. If we have $3$ points, then the only solo node in the subtree
  $T'_v$ is $u$ itself, so it suffices to argue that the expected cost
  increase due to this solo cut is at most
  $32|\Univ_v|(1+2\ln \ln |\Univ_v|)\cdot \|\bm p^*\|_2^2$. From
  (\ref{eq:7}) we know that the expected cost increase is at most
  $4s(u) \cdot \|\bm p^*\|_2^2$. Now using $s(u)\leq |\Univ_v|$ and
  $|\Univ_v|\leq 3$, we obtain the required bound.

  Else consider some node $v$ with $|\Univ_v| \geq 4$, and let
  $\chi(v)$ be the set of solo nodes whose closest solo ancestor is
  $v$: it follows that $\sum_{w\in \chi(v)} |\Univ_w| \leq
  |\Univ_v|$. Moreover, suppose the random solo cut
  $(i_v, \theta_v) \sim D_2''(v)$ partitions $\Univ_v$ into parts of size
   at least $\sigma(v)$, then each
  $|\Univ_v| - |\Univ_w| \geq \sigma(v)$. Using the induction hypothesis
  on each $w \in \chi(v)$, the total expected cost increase is at most 
  \begin{gather}
    \EE[ \text{ cost increase at $v$ }] + \sum_{w \in \chi(v)} 32|\Univ_w|(1+2\ln
    \ln |\Univ_w|) \cdot \ones_{[\bm 0 \in B_w]} \cdot \| \bm p^* \|_2^2.  
  \end{gather}
  Since the origin belongs to at most one of the sets $B_{w^*}$, the
  sum contributes at most
  \begin{gather}
    32|\Univ_{w^*}|(1+2 \ln \ln |\Univ_v|) \cdot \| \bm p^* \|_2^2 \leq
    32(|\Univ_{v}| - \EE[\sigma(v)])(1+2 \ln \ln |\Univ_v|) \cdot \| \bm p^*
    \|_2^2.
  \end{gather}
  (The RHS is non-negative, since $\sigma(v) \leq |\Univ_v|$, so the bound holds even when $\chi(v) = \emptyset$.)
  Now \Cref{lem:solo-balance-lemma} implies that the cost at $v$ is at most
  \[  \EE[ \sigma(v) ] \cdot 32(1+\ln(\nicefrac{|\Univ_v|}{s(v)}))  \cdot
    \|\bm p^*\|_2^2 . \]
  Finally, using that $s(v) \geq |\Univ_v|/(\ln |\Univ_v|)^2$ for a
  solo node, and summing the two terms, completes the proof.
\end{proof}
We can wrap up: Using \Cref{lem:bulk_costinc} and \Cref{lem:solo_costinc}, we know that the expected cost increase due to all the cuts used to separate the points in $U$ is at most $O(k\ln\ln k)\cdot \|\bm p^*\|_2^2$. If $\widehat{\bm p}$ is the unique point in $U$ in the region $B_u$ corresponding to the leaf node $u$ such that $\bm 0 \in B_u$, then the cost is upper bounded by 
\begin{align}
\label{eqn:kmeans_conclusion}
\|\widehat{\bm p}\|_2^2\leq 2\cdot \|\widehat{\bm p}-\bm p^*\|_2^2+2\cdot \|\bm p^*\|_2^2
\end{align}
using the generalized triangle inequality. Taking expectation on both sides of \Cref{eqn:kmeans_conclusion} and plugging in $\|\bm p^*\|_2^2=1$ gives $f_2(U)\leq O(k\ln \ln k)$.

\section{Tight Bounds for the Random Threshold Algorithm}
\label{sec:tight}

We now improve the bound of $(1+o(1)) \ln k$ from
\S\ref{sec:rand-threshold} to give an exact bound of $1 + H_{k-1}$ for
the \RT algorithm; we first show an example which achieves this bound,
and then give our precise analysis of the algorithm.

\subsection{A Lower Bound}
\label{sec:tight-analysis}

Consider an instance in $\RR^k$ given by a reference clustering having
one ``close'' center $\bmu^1 = e_1$, and $k-1$ ``far'' centers
$\bmu^i = M\, e_i$ for each $i \in \{2, \ldots, k\}$, where scalar $M \gg 1$. We
consider a single data point at the origin. (As always, we can imagine
there being many points colocated with each of the centers.) Let the
expected assignment cost due to the algorithm for the point at the
origin on an instance with $j$ far points be denoted by $g(j)$.  Then
we get a recurrence:
\[ \textstyle g(k) = \frac{1}{M(k-1)+1} \cdot M + \frac{M(k-1)}{M(k-1)+1} \cdot
  g(k-1), \]
and $g(\bm 0) = 1$. As $M \to \infty$, this gives us $g(k) \to 1 + H_{k-1}$,
as claimed. %
In the next section, we will prove a matching upper bound of
$1+H_{k-1}$. 

\subsection{Towards a Matching Upper Bound}
\label{sec:matching-upper-bound}

Our proof for this case is  technical, so the reader may want to
 keep three special cases in mind:
\begin{enumerate}
\item The ``axis-aligned'' case, inspired by the bad example: the
  points in $\Univ$ are $\{ \bp_i := d_i \bm e_i\}$, where $1 = d_1 \leq
  d_2 \leq \ldots \leq d_k$, 
\item the ``orthogonal closest-point'' case, where the closest point
  is $\bm e_1$, and all other points lie in the orthogonal subspace to
  it, and
\item the ``uniform'' case, where all points other than the closest
  are at the same distance $d \geq 1$.
\end{enumerate}
Many of our proofs become simpler in these special cases, and thinking
about these cases will give us crucial intuition. 

We start with a set of points $U \sse \RR^d$, recall that $f(U)$ was
defined to be $\EE[ \|\bf \widehat{p}\|_1 ]$, where $\bf \widehat{p}$ is the
unique point in the box containing the origin in the \RT algorithm. As
in \S\ref{sec:clocks-cost}, assume we have a dimension for each cut
$S \sse U$, and point $\bp \in U$ has value $z_S \ones[\bp \in S] \geq 0$
in this coordinate. Finally defining $\cC_S$ to be the collection of sets
that cross $S \sse U$, we get that for any $S \sse U$,
\begin{align}
	\label{eqn:definition_global}
	f(S)= \frac{\sum_{E\in \cC_S}z_E\cdot f(S\setminus
  E)}{\sum_{E \in \cC_S} z_E}, \qquad \text{and} \qquad  f(\{\bp\})=\|\bp\|_1.
\end{align}
Moreover, when $\sum_{E \in \cC_S} z_E = 0$, the value of $f(S) = 0$.
Let us denote by $\ell(\bp):= \|\bp\|_1$ and
$\beta_{k-1}:= 1+H_{k-1}$.  Using \Cref{eqn:definition_global}, we
think of the function $f(U)$ purely as an algebraic function of the
$z_S$ values, and our central goal in this section to prove the following:
 \begin{theorem}[Main Goal]
   \label{goal:goal_1}	
   For any point $\bp\in U$, the value $f(U)\leq \beta_{k-1}\cdot\ell(\bp)$.
\end{theorem}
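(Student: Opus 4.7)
The plan is to prove the theorem by induction on $k = |U|$, with the main challenge being the general (non-axis-aligned) case. Let $\bp^* \in U$ denote a point of minimum distance from the origin; by scaling, assume $\ell(\bp^*) = 1$ so the goal reduces to showing $f(U) \leq \beta_{k-1}$. The base cases $k=1$ (trivial) and $k=2$ (direct calculation using \Cref{eqn:definition_global}) give $f(U) \leq 1 + H_{k-1}$ easily. For the inductive step, I would first recover the axis-aligned calculation sketched in \S\ref{sec:tight-analysis} as the ``target'' bound: conditioning on which cut fires first and applying induction on the $k-1$ sub-instances of size $k-1$, one obtains the recurrence showing $f(U) \leq \beta_{k-1}$ when each $\bp \in U \setminus \{\bp^*\}$ sits on its own coordinate axis and all are at common distance $D$, with equality achieved as $D \to \infty$.

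The bulk of the work is to reduce the general case to this canonical axis-aligned uniform case. For this, I would view $f(U)$ as an algebraic function of the cut-weights $\{z_S\}_{S \subseteq U}$, subject to the constraints $\sum_{S\ni \bp^*} z_S = 1$ and $\sum_{S \ni \bp} z_S \geq 1$ for each other $\bp\in U$, with $z_S \geq 0$. The idea is to identify a set of local perturbations of the $z_S$'s that never decrease $f(U)$, and repeatedly apply them until we reach the canonical configuration. Since the naive derivative $\partial f(U)/\partial z_S$ is not sign-definite along such moves (this is where correlations among cuts bite), I would define a \emph{pseudo-derivative}: an explicit upper bound on the real derivative obtained by replacing certain sub-instance values $f(U \setminus E)$ appearing in \Cref{eqn:definition_global} with the inductive bound $\beta_{k-2} \cdot \ell(\cdot)$ whenever doing so is valid. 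The pseudo-derivative is algebraically cleaner: it is a rational function of the $z_S$ whose sign can be analyzed by direct manipulation.

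The key structural lemma I would establish is that the pseudo-derivative is nonnegative along any perturbation that (a) transfers weight from a cut $S$ that contains $\bp^*$ together with another point to a cut $S'$ that separates only $\bp^*$, or (b) equalizes the distances $\ell(\bp)$ of the far points towards a common value $D \gg 1$. Claim (a) lets us assume that every cut containing $\bp^*$ is the singleton cut $\{\bp^*\}$, i.e., $\bp^*$ lies on its own axis orthogonal to all other points; claim (b) lets us assume the remaining far points are at a common distance $D$, and then sending $D \to \infty$ can only increase the pseudo-derivative, yielding the extremal configuration. The hardest step will be verifying sign of the pseudo-derivative in case (b): because the far points may still share cuts with each other (non-separate-axis uniform case), the recurrence for $f(U)$ involves subtle cancellations, and this is where I expect a careful case analysis combining the inductive hypothesis (applied to the $k-1$ sub-instances) with estimates like $\sum_{E \in \cC_U, \bp^* \notin E} z_E \leq (k-1)D - (\text{overlap})$ will be needed.

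Finally, once the worst case is identified as the axis-aligned uniform configuration in the limit $D \to \infty$, the simple calculation from \S\ref{sec:tight-analysis} (solving $g(k) = \tfrac{1}{M(k-1)+1} \cdot M + \tfrac{M(k-1)}{M(k-1)+1}\cdot g(k-1)$ with $g(1)=1$ and sending $M \to \infty$) gives exactly $\beta_{k-1} = 1 + H_{k-1}$, closing the induction. The main obstacle throughout will be designing the pseudo-derivative so that it is simultaneously (i) a valid upper bound on the true derivative, (ii) monotone under the extremal moves (a) and (b), and (iii) matches the true value in the canonical configuration; finding such an expression requires balancing what one ``gives up'' via the inductive hypothesis against what one ``gains'' in algebraic tractability.
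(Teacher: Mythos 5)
Your high-level outline is in the right spirit and overlaps substantially with the paper's strategy: view $f(U)$ as an algebraic function of the cut weights $\{z_S\}$, control it via a surrogate ``pseudo-derivative'' that is algebraically tamer than the true derivative, and reduce the general case to a uniform-distance extremal configuration. However, the specific ingredients you propose diverge from what actually works, and the divergences are where the real difficulty lives.

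First, your pseudo-derivative is not the right object. You propose to build it by replacing the sub-instance values $f(U\setminus E)$ with the inductive bound $\beta_{k-2}\cdot\ell(\cdot)$. But the true derivative of $f(S)$ with respect to $z_T$ (for $T\in\cC_S$) contains the term $f(S\setminus T)-f(S)$; replacing \emph{both} of these with upper bounds does not give an upper bound on their difference, and replacing only $f(S\setminus T)$ with an upper bound and leaving $f(S)$ alone leaves the quantity not obviously non-negative or monotone. The paper's trick is the opposite: it replaces $f(S)$ by the explicit \emph{lower} bound $\sum_{E\supseteq S}z_E$ (the total weight of cuts containing all of $S$, i.e.\ the ``shift'' of $S$), and replaces the recursive derivative terms with their pseudo-derivative counterparts. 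This specific choice is what makes the resulting quantity simultaneously an upper bound on $\max(\partial f(S)/\partial z_T, 0)$ \emph{and} amenable to an induction on $|S|$. Your version is vague (which $\ell(\cdot)$?) and, as far as I can tell, would not satisfy the needed sign properties.

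Second, your move (a) -- transferring weight from cuts that contain $\bp^*$ together with other points onto the singleton cut $\{\bp^*\}$, so as to reach the ``orthogonal closest-point'' case -- is both unnecessary and unverified. To preserve the norms of the other points you would have to add compensating weight on their singleton cuts as well, and this compound move's effect on $f(U)$ is not easy to sign. The paper avoids this entirely: the uniform case it reduces to (Lemma \ref{lem:special_case}) requires only that all points in $S\setminus T$ have the same norm, \emph{not} that they lie on orthogonal axes. That weaker target is provable directly, because knowing all norms in $S\setminus T$ are equal pins down $f(S\setminus T)$ exactly, which is exactly what the pseudo-derivative inequality \eqref{eqn:6} needs. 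Your plan aims for a strictly stronger reduction target (axis-aligned and uniform), which is harder to reach and gains nothing.

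Third, your move (b) is closest in spirit to the paper's lifting operation (Lemma \ref{lem:lift_augmentation}): lift the set $S'$ of minimum-norm points in $U\setminus T$ by increasing all $z_{\{\bp\}}$ for $\bp\in S'$ uniformly, and show the pseudo-derivative is monotone nondecreasing along this path. You correctly flag this as the hard step, but you underestimate how delicate it is: the paper's proof (Lemma \ref{lem:final_blow}) is a multi-page case analysis over how each cut $E$ interacts with $T$, $S$, and the lifted set $S'$, and it crucially uses the precise algebraic form of the paper's pseudo-derivative (in particular the $\sum_{E\supseteq S}z_E$ term, whose lift-derivative vanishes for $|S|\geq 2$). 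With your version of the pseudo-derivative this monotonicity argument does not go through in any obvious way.

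In short: the shape of the plan is right -- pseudo-derivative plus reduction to uniform -- but the concrete definition of the surrogate is wrong, move (a) is an unnecessary detour with unclear correctness, and the monotonicity step needs the specific structure of the paper's pseudo-derivative to close. As stated, the plan has genuine gaps at each of these three points.
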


Since the ratio $f(U)/\ell(\bp)$ is difficult to argue about, we
instead focus on bounding the derivative
$\frac{\partial f(U)}{\partial z_E}$ by $\beta_{k-1}$.  The following
lemma shows that, by integrating along a path from the origin to the
point $\bp$, such a bound on the derivative suffices: (a formal proof
appears in \S\ref{sec:app-tight}) 
\begin{lemma}
  \label{lem:reduce-to-deriv}	
  For any $\bp \in U$, if
  $\frac{\partial f(U)}{\partial z_E} \leq \beta_{k-1}$ for all
  $E \sse U$ with $\bp \in E$, then
  $f(U)\leq \beta_{k-1}\cdot\ell(\bp)$.
\end{lemma}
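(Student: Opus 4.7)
My plan is a path-integration argument. For $t\in[0,1]$, I would consider the family of configurations obtained from the given $z$-values by replacing $z_E$ with $t\cdot z_E$ for every set $E\subseteq U$ containing $\bp$, while leaving all other $z_E$'s unchanged. Setting $g(t):=f(U)$ evaluated at this scaled configuration, I would note that \eqref{eqn:definition_global} makes $f(U)$ a rational function of the $z_E$'s, so $g$ is smooth (away from possible singularities) and the chain rule gives
\[
g'(t)\;=\;\sum_{E\subseteq U\,:\,\bp\in E} z_E\cdot\frac{\partial f(U)}{\partial z_E}\bigg|_{\text{scaled config at }t}.
\]

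The first main step is to establish $g(0)=0$. At $t=0$ every $z_E$ with $\bp\in E$ vanishes, so $\ell(\bp)=0$ and in the Last Point Process no cut containing $\bp$ is ever selected with positive probability. Every cut $E$ that is selected therefore satisfies $\bp\notin E$, and the update $V\leftarrow V\setminus E$ never removes $\bp$ from $V$. Thus $\bp$ remains in $V$ throughout, and the process terminates with $V_{\text{final}}=\{\bp\}$ (after a generic perturbation of the remaining $z$-values if needed to separate the other $k-1$ points, and passing to the limit by continuity). Hence $g(0)=\|\bp\|_1=0$.

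Next, I would invoke the hypothesis, interpreted as a bound holding functionally at every configuration along the path, that $\partial f(U)/\partial z_E\leq\beta_{k-1}$ for every $E$ with $\bp\in E$. Substituting into the chain-rule expression above yields
\[
g'(t)\;\leq\;\beta_{k-1}\sum_{E\ni\bp} z_E\;=\;\beta_{k-1}\cdot\ell(\bp),
\]
where the right-hand side is the $t$-independent value of $\ell(\bp)$ at $t=1$. Integrating from $0$ to $1$ gives $f(U)=g(1)=g(0)+\int_0^1 g'(t)\,dt\leq\beta_{k-1}\cdot\ell(\bp)$, which is the claim.

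The two delicacies I anticipate are (i) justifying the smoothness of $g$ along the path, i.e., ensuring that the denominators in \eqref{eqn:definition_global} do not vanish along the way—handled by a generic perturbation of the non-path $z$-values and a limit argument, or by observing that $f$ is an expectation (hence automatically continuous in the parameters)—and (ii) interpreting the hypothesis as a uniform functional bound on the rational function $\partial f/\partial z_E$, rather than a purely pointwise inequality at the given $U$; this is the content that the subsequent analysis of the \RT algorithm will have to deliver. Once both are in hand the integration step itself is routine.
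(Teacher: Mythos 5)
Your plan is essentially the paper's proof: the paper also integrates along the trajectory $z_E(t)=t\,z_E$ for $E\ni\bp$, $z_E(t)=z_E$ otherwise, notes that $\ell_t(\bp)=t\cdot\ell(\bp)$ so $\sum_{E\ni\bp} z_E \frac{\partial f}{\partial z_E}\big|_{\bm z(t)}\le\beta_{k-1}\ell(\bp)$, and integrates from $t=0$ to $t=1$.

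The one place you diverge is the base case $g(0)=0$. You argue it probabilistically (no cut containing $\bp$ is ever picked, so $\bp$ survives), with a caveat about generic perturbation and limits. This does go through, but the cleaner observation — which makes the perturbation unnecessary — is that any point $\bq$ remaining alongside $\bp$ at $t=0$ must itself have $\|\bq\|_1=0$: every cut $S\ni\bq$ with $\bp\notin S$ must have $z_S=0$ (else it would eventually be picked and remove $\bq$), and every cut $S\ni\bq$ with $\bp\in S$ has $z_S(0)=0$ by construction. The paper sidesteps all of this by a direct algebraic induction on $|U|$ using the recursion~\eqref{eqn:definition_global}: in the numerator, each term has either $z_E(0)=0$ (when $\bp\in E$) or $f_0(U\setminus E)=0$ (when $\bp\in U\setminus E$, by the inductive hypothesis), together with the convention $f(S)=0$ when all crossing cuts vanish. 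That route avoids any discussion of degeneracy or limits.

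Your flag on point (ii) is correct and worth keeping in mind: the hypothesis must hold at every configuration along the path $\bm z(t)$, not merely at $\bm z(1)$, so the lemma is really using a \emph{functional} bound on $\partial f/\partial z_E$. The paper delivers exactly this via \Cref{lem:final} and \Cref{lem:pseudoderivative_upperbound}, which bound the (pseudo-)derivative for \emph{all} non-negative $z$.
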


\subsection{Bounding the Derivative}
\label{sec:bound-deriv}

We start by taking definition~(\ref{eqn:definition_global}) and
calculating the derivative $\frac{\partial f(S)}{\partial z_T}$ for
the case $|S| \geq 2$:
\begin{align}
	\label{eqn:derivative}	\frac{\partial f(S)}{\partial z_T} = \frac{\sum_{E\in \cC_S}z_E\cdot \frac{\partial f(S\setminus E)}{\partial z_T}+ \mathbbm{1}[T\in \cC_S]\cdot \big(f(S\setminus T)-f(S)\big)}{\sum_{E\in \cC_S}z_E}.
\end{align}
When $|S|=1$, if $S=\{r\}$, then  $f(S)=\ell(r)$ by definition, and so $\frac{\partial f(S)}{\partial z_T}=\frac{\partial \ell(r)}{\partial z_T}$.
Henceforth, let us fix a set $S \sse U$ and some subset $T$.  %
The next lemma (in \S\ref{sec:app-tight}) follows by direct calculations: 
\begin{lemma} The partial derivatives satisfy:
  \label{lem:derivative_observations}
  \begin{enumerate}[nosep,label=(\roman*)]
  \item 	\label{lem:shift_derivative}
    If $T \supseteq S$, then $\frac{\partial f(S)}{\partial z_T}=1$.
  \item 	\label{lem:empty_derivative}
    If $T \cap S = \emptyset$, then $\frac{\partial f(S)}{\partial z_T}=0$. 
  \item 	\label{cor:shift_inequality}
    We have $f(S)\geq \sum_{E\supseteq S }z_E$.
  \end{enumerate}
\end{lemma}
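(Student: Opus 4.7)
All three statements are naturally proved by induction on $|S|$, using the recursive definition \eqref{eqn:definition_global} and direct algebraic manipulation of \eqref{eqn:derivative}. The key observation to deploy in each case is that $\mathcal{C}_S$ consists of subsets that properly cross $S$ (i.e., neither contain $S$ nor are disjoint from $S$), so certain candidates for $T$ fall outside $\mathcal{C}_S$ and the indicator term in \eqref{eqn:derivative} vanishes; moreover, the membership relations $T \supseteq S$ and $T \cap S = \emptyset$ are preserved when passing from $S$ to $S \setminus E$ for any $E \in \mathcal{C}_S$. This is what will make the induction go through cleanly.

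\textbf{Part (i): $T \supseteq S \Rightarrow \partial f(S)/\partial z_T = 1$.} For the base case $|S|=1$, write $S=\{r\}$; then $f(S) = \ell(r) = \sum_{E \ni r} z_E$, and $T \supseteq \{r\}$ means $r \in T$, so $z_T$ appears exactly once in this sum, giving derivative $1$. For the inductive step $|S|\ge 2$, note that $T \supseteq S$ implies $T \notin \mathcal{C}_S$ (since $S \setminus T = \emptyset$), so the indicator term in \eqref{eqn:derivative} vanishes. For each $E \in \mathcal{C}_S$ we have $S \setminus E \subsetneq S$ and $T \supseteq S \supseteq S \setminus E$, so by the induction hypothesis $\partial f(S \setminus E)/\partial z_T = 1$. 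Substituting into \eqref{eqn:derivative} gives a ratio equal to $1$.

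\textbf{Part (ii): $T \cap S = \emptyset \Rightarrow \partial f(S)/\partial z_T = 0$.} The base case $|S|=1$ with $S=\{r\}$ is immediate since $r \notin T$ means $z_T$ does not appear in $\ell(r) = \sum_{E \ni r} z_E$. For the inductive step, $T \cap S = \emptyset$ means $T \notin \mathcal{C}_S$ (as $T$ doesn't intersect $S$), so again the indicator term vanishes. For every $E \in \mathcal{C}_S$, $S \setminus E \subseteq S$ still satisfies $T \cap (S \setminus E) = \emptyset$, so the induction hypothesis gives $\partial f(S \setminus E)/\partial z_T = 0$, and the whole ratio in \eqref{eqn:derivative} is $0$.

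\textbf{Part (iii): $f(S) \geq \sum_{E \supseteq S} z_E$.} This admits both a probabilistic one-line argument and an inductive one; I would present the inductive one to stay purely algebraic and consistent with the rest of the section. Let $A := \sum_{E \supseteq S} z_E$. The base case $|S|=1$ with $S = \{r\}$ gives $f(S) = \sum_{E \ni r} z_E = A$. For the inductive step, observe that if $E \in \mathcal{C}_S$, then any $F \supseteq S$ also satisfies $F \supseteq S \setminus E$, so by induction $f(S \setminus E) \geq \sum_{F \supseteq S \setminus E} z_F \geq \sum_{F \supseteq S} z_F = A$. Plugging this into \eqref{eqn:definition_global} yields $f(S) \geq A$ as a weighted average of quantities all at least $A$. (Conceptually, the last point $\widehat p$ always lies in $S$, hence in every $E \supseteq S$, so $\|\widehat p\|_1 \geq \sum_{E \supseteq S} z_E$ pointwise, matching the algebraic proof.)

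None of the three parts involves a real obstacle: the only thing to be careful about is handling the degenerate case in \eqref{eqn:definition_global} where $\sum_{E \in \mathcal{C}_S} z_E = 0$ (in which $f(S) = 0$ by convention), but in that situation $\mathcal{C}_S$ effectively contributes nothing and the relevant statements either hold trivially or reduce to the base case via shrinking $S$ to a single point.
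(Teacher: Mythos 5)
Your proofs of parts (i) and (ii) match the paper's argument essentially verbatim: induction on $|S|$, base case via the expansion $\ell(r)=\sum_{E\ni r}z_E$, and for $|S|\ge 2$ noting that $T\notin\cC_S$ so the indicator term in \eqref{eqn:derivative} drops out and the ratio is a weighted average of the inductively-known sub-derivatives. Where you diverge is part (iii). You give a direct induction: the base case $|S|=1$ holds with equality, and for the inductive step each $f(S\setminus E)\geq\sum_{F\supseteq S\setminus E}z_F\geq\sum_{F\supseteq S}z_F$, so $f(S)$ is a weighted average of numbers each $\geq\sum_{E\supseteq S}z_E$. The paper instead derives (iii) as a corollary of (i): since $\frac{\partial f(S)}{\partial z_E}=1$ identically for every $E\supseteq S$, the function $f(S;\bm z)-\sum_{E\supseteq S}z_E$ is independent of those variables, and evaluating at $z_E=0$ for $E\supseteq S$ and using $f\geq 0$ gives the bound. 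Both are correct. Your version is more elementary and self-contained (it never leaves the recursion), while the paper's is slicker and reuses the derivative computation you already did — but it tacitly relies on the fact that the partial derivative is constant in $\bm z$ to justify the linearity step, which the direct induction avoids having to articulate. Both arguments share the same unspoken caveat at the degenerate boundary $\sum_{E\in\cC_S}z_E=0$, where $f(S)$ is defined to be $0$; you flag this, the paper does not, but neither resolves it, and it does not affect the use of the lemma downstream.
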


The partial derivative $\frac{\partial f(S)}{\partial z_T}$ is not
very well-behaved: e.g., it is not guaranteed to be non-negative
(which turns out to make an inductive proof difficult). To address this
issue, we define a surrogate \emph{pseudo-derivative} operator, and use
bounds on this pseudo-derivative to bound the derivative.

\begin{defn}[Pseudo-Derivative]
  \label{defn:pseudo_derivative} The pseudo-derivative of $f(S)$ with
  respect to the variable $z_T$ such that $T$ crosses $S$ (i.e., $T
  \in \cC_S$) is:
  \begin{align}
    \label{eqn:pseudo_derivative}
    \frac{\widehat{\partial} f(S)}{\widehat{\partial} z_T}
    &= %
    \frac{\sum_{E \in \cC_S
      }z_E\cdot {\color{purple} \frac{\widehat{\partial} f(S\setminus
          E)}{\widehat{\partial} z_T}}+ f(S\setminus T)-
      {{\color{purple} \sum_{E\supseteq S }z_E}}}{\sum_{E\in \cC_S}z_E}.
  \end{align}
        It is defined to be $1$ if $T \supseteq S$, and $0$ if $T \cap S = \emptyset$.
\end{defn}
Observe the differences with~(\ref{eqn:derivative}), which are marked
in red: when $T\in \cC_S$, each smaller derivative term
$\frac{\partial f(S\setminus E)}{\partial z_T}$ in the numerator of
the derivative is naturally replaced with the corresponding
pseudo-derivative term
$\frac{\smash{\widehat{\partial} f(S\setminus E)}}{\widehat{\partial}
  z_T}$, but crucially, the term $f(S)$ is replaced with
$\sum_{E\supseteq S}z_E$. These are the terms corresponding to the
``shift'' of the set $S$---i.e., the cuts that separate all points in
$S$ from the origin---and hence they form a lower bound on $f(S)$, the
expected distance to the closest point in $S$. This latter change
makes the following arguments easier (and indeed, possible), but still
maintains the intuition of the derivative being invariant under
translations. 
In \S\ref{sec:app-tight}, we prove the following lemma, showing it is
indeed an upper bound.

\begin{lemma} 
  \label{lem:pseudoderivative_upperbound}
  The pseudo-derivative is non-negative, and bounds the derivative from above. I.e.,
  \begin{align*}
    \max \left(	\frac{\partial f(S)}{\partial z_T} , 0 \right) \leq 	\frac{\widehat{\partial} f(S)}{\widehat{\partial} z_T}.
  \end{align*}
\end{lemma}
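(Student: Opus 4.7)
The plan is to prove both claims simultaneously by induction on $|S|$, exploiting the recursive definition of the pseudo-derivative in \eqref{eqn:pseudo_derivative} together with the shift lower bound $f(S) \geq \sum_{E \supseteq S} z_E$ from \Cref{lem:derivative_observations}\ref{cor:shift_inequality}. This bound is precisely the algebraic ``slack'' that we inserted into the pseudo-derivative in place of $f(S)$, so it should translate cleanly into the desired inequality.

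For the base case $|S|=1$, there is no set $T$ that crosses $S$, so $T$ must either contain $S$ or be disjoint from it. In both sub-cases, \Cref{lem:derivative_observations}\ref{lem:shift_derivative}--\ref{lem:empty_derivative} and the stipulated boundary values of the pseudo-derivative give the exact same value ($1$ or $0$), and the claim is immediate. For the inductive step, when $T \supseteq S$ or $T \cap S = \emptyset$, the same two parts of \Cref{lem:derivative_observations} show that the derivative matches the pseudo-derivative exactly, so both non-negativity and the upper-bound property are trivial.

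The heart of the argument is the case $T \in \cC_S$. Subtracting \eqref{eqn:derivative} from \eqref{eqn:pseudo_derivative} yields
\begin{align*}
\frac{\widehat{\partial} f(S)}{\widehat{\partial} z_T} - \frac{\partial f(S)}{\partial z_T} = \frac{\sum_{E\in \cC_S} z_E\left(\frac{\widehat{\partial} f(S\setminus E)}{\widehat{\partial} z_T} - \frac{\partial f(S\setminus E)}{\partial z_T}\right) + \Bigl(f(S) - \sum_{E\supseteq S} z_E\Bigr)}{\sum_{E\in \cC_S} z_E}.
\end{align*}
Every bracketed term in the sum is non-negative by the inductive hypothesis (applied to the proper subset $S\setminus E$ of $S$, using that $E\in \cC_S$ implies $E\cap S\neq \emptyset$), and the trailing term is non-negative by \Cref{lem:derivative_observations}\ref{cor:shift_inequality}. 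This establishes that the pseudo-derivative dominates the derivative.

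For non-negativity of the pseudo-derivative itself, the inductive hypothesis gives $\widehat{\partial} f(S\setminus E)/\widehat{\partial} z_T \geq 0$ for each $E$, so it suffices to show that $f(S\setminus T) \geq \sum_{E\supseteq S} z_E$. Here I would use the key observation that $S\setminus T \subseteq S$, so any $E$ with $E\supseteq S$ also satisfies $E\supseteq S\setminus T$; therefore $\sum_{E\supseteq S} z_E \leq \sum_{E\supseteq S\setminus T} z_E \leq f(S\setminus T)$ by another application of \Cref{lem:derivative_observations}\ref{cor:shift_inequality}. The main subtlety to watch for is keeping the three cases ($T\supseteq S$, $T\cap S=\emptyset$, $T\in \cC_S$) consistent across the induction for each sub-instance $S\setminus E$; but since these three cases are exhaustive for any pair $(T, S\setminus E)$, the inductive hypothesis applies to every term that appears in \eqref{eqn:pseudo_derivative}, and the argument closes.
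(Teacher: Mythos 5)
Your proposal is correct and takes essentially the same approach as the paper: induction on $|S|$, handling $T\notin\cC_S$ trivially, and in the crossing case reducing to the shift lower bound $f(S)\geq\sum_{E\supseteq S}z_E$ (for dominance) and $f(S\setminus T)\geq\sum_{E\supseteq S\setminus T}z_E\geq\sum_{E\supseteq S}z_E$ (for non-negativity). The only cosmetic difference is that you subtract the two recursive formulas explicitly and then check non-negativity separately, while the paper packages both into the single inequality $\max\bigl(f(S\setminus T)-f(S),\,0\bigr)\leq f(S\setminus T)-\sum_{E\supseteq S}z_E$; the two presentations rest on the same key facts.
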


Given \Cref{lem:pseudoderivative_upperbound}, it suffices to upper
bound the pseudo-derivative by $\beta_{k-1}$, which we do
next.
\begin{theorem}
  \label{lem:final}
  For any $S\subseteq U$ and any $T\neq \emptyset$, we have
  $\frac{\widehat{\partial} f(S)}{\widehat{\partial} z_T} \leq \beta_{|S\setminus T|}.$
\end{theorem}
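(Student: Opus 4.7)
I prove the bound by strong induction on $|S|$, bundled with a parallel induction on the Main Goal (\Cref{goal:goal_1}) for strictly smaller cardinalities. The base case $|S|=1$ and the easy cases $T\supseteq S$ and $T\cap S=\emptyset$ of the inductive step follow immediately from the boundary clauses of \Cref{defn:pseudo_derivative}. The interesting case is when $T$ \emph{properly crosses} $S$.

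\textbf{Reduction to an algebraic inequality.} For such $T$, write $k'=|S\setminus T|$ and $a_E=|E\cap(S\setminus T)|$ for each $E\in \cC_S$; note that $|(S\setminus E)\setminus T|=k'-a_E$. Expanding \eqref{eqn:pseudo_derivative}, the inductive hypothesis applied to each recursive pseudo-derivative gives $\tfrac{\widehat{\partial} f(S\setminus E)}{\widehat{\partial} z_T}\leq \beta_{k'-a_E}$, since $|S\setminus E|<|S|$. Substituting and rearranging, the target $\widehat{\partial}f(S)/\widehat{\partial}z_T\leq \beta_{k'}$ reduces to
\begin{equation}\label{eq:target-plan}
  f(S\setminus T)\ \leq\ \sum_{E\supseteq S} z_E\ +\ \sum_{E\in \cC_S} z_E\bigl(\beta_{k'}-\beta_{k'-a_E}\bigr).
\end{equation}

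\textbf{The main work and the obstacle.} Inequality \eqref{eq:target-plan} is the algebraic heart of the proof, and is where I expect the main difficulty to lie. I would telescope $\beta_{k'}-\beta_{k'-a_E}=\sum_{j=k'-a_E+1}^{k'}1/j$ and swap the summations to obtain the layered form $\sum_{E\supseteq S}z_E+\sum_{j=1}^{k'}\tfrac{1}{j}W_j$, where $W_j=\sum_{E\in \cC_S:\,a_E\geq k'-j+1}z_E$ is the $z$-mass of cuts that carry at least $k'-j+1$ points of $S\setminus T$ out from the origin. To bound $f(S\setminus T)$ from above I would invoke the Main Goal at cardinality $k'<k$ applied to a carefully chosen point of $S\setminus T$, together with one unfolding of the recursive identity \eqref{eqn:definition_global} on $f(S\setminus T)$ itself to redistribute mass across the layers $W_j$. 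The shift contribution $\sum_{E\supseteq S}z_E$ supplies exactly the ``$+1$'' that distinguishes $\beta_{k'}=1+H_{k'}$ from $H_{k'}$, matching the leading term in $f(S\setminus T)$. The central obstacle is that \eqref{eq:target-plan} must be tight on the axis-aligned uniform instance of \S\ref{sec:tight-analysis}, so essentially no slack is available; natural monotonicity arguments (such as ``farther points are less likely to be last'') fail in general because of correlations between overlapping cuts, forcing the proof to track the full $\{z_E\}$ structure to balance the harmonic tails against the cumulative cut mass $W_j$ layer by layer.

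\textbf{Closing the loop.} Once \Cref{lem:final} is established at cardinality $k$, \Cref{lem:pseudoderivative_upperbound} upgrades it to $\tfrac{\partial f(S)}{\partial z_T}\leq \beta_{k-1}$, and \Cref{lem:reduce-to-deriv} then integrates along any path from the origin to a point $\bp\in S$ to yield the Main Goal at cardinality $k$, completing the joint induction and hence \Cref{goal:goal_1}.
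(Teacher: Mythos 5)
Your reduction to the inequality
\begin{equation*}
  f(S\setminus T)\ \leq\ \sum_{E\supseteq S} z_E\ +\ \sum_{E\in \cC_S} z_E\bigl(\beta_{|S\setminus T|}-\beta_{|S\setminus(T\cup E)|}\bigr)
\end{equation*}
is exactly right and matches the paper's Equation~\eqref{eqn:6}. But that is where your proposal stops being a proof: you describe a plan (telescope the harmonic differences, invoke the Main Goal at cardinality $k'$, ``one unfolding'' of \eqref{eqn:definition_global} to redistribute mass across layers $W_j$) and then correctly observe that the inequality is tight on the uniform axis-aligned example so ``essentially no slack is available'' and monotonicity heuristics fail. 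You have identified the obstacle, but you have not cleared it. In particular, invoking \Cref{goal:goal_1} at cardinality $k'$ only gives $f(S\setminus T)\leq \beta_{k'-1}\cdot\ell(\bp)$ for your chosen $\bp$; when the far points have large norm $D$ this is roughly $\beta_{k'-1}\cdot D$, which is a factor $\approx\ln k'$ weaker than the true value $D$, so it does not feed into your layered bound with the needed precision. One recursive unfolding does not repair this, and you give no concrete mechanism by which it would.

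The paper's route is genuinely different and is what closes the gap. It first proves Equation~\eqref{eqn:6} \emph{only in the uniform case} where all points of $S\setminus T$ share the same norm (\Cref{lem:special_case}): there $f(S\setminus T)$ equals that common norm exactly, so $f(S\setminus T)=\sum_E z_E\cdot |(S\setminus T)\cap E|/|S\setminus T|$ and the inequality reduces to a coefficient-by-coefficient comparison with $\beta_{k'}-\beta_{k'-a_E}=\sum_{j=k'-a_E+1}^{k'}1/j$, each summand of which is at least $1/k'$. It then \emph{reduces the general case to the uniform case} by defining a lifting operation that simultaneously raises the $z_{\{\bp\}}$ coordinates of all minimum-norm points of $\Univ\setminus T$, and proving (\Cref{lem:lift_augmentation}, via the long case analysis of \Cref{lem:final_blow}) that $\frac{\widehat{\partial} f(S)}{\widehat{\partial}z_T}$ is non-decreasing along this lift. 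Integrating the lift until all norms in $S\setminus T$ coincide hands the problem to the uniform case. This lifting/monotonicity argument is the technical core of \S\ref{sec:tight} and is absent from your proposal; without it (or a concrete substitute), the induction you set up does not close.
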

\Cref{lem:final} implies our desired bound on the derivative, because
$|S\setminus T|\leq k-1$ for any $S\subseteq U , T\neq \emptyset $.
To prove \Cref{lem:final}, we first prove it for the special case when
all points in $S\setminus T$ have the same norm (the \emph{uniform}
case), and then reduce the general case to this uniform case.

\subsubsection{Proof of \Cref{lem:final}: the Uniform Case}
\label{subsection:special_case}
The main reason that it is easier to prove \Cref{lem:final} for the uniform case is because we know the value of $f(S\setminus T)$ exactly which will be equal to the norm of all the points in $S\setminus T$. Otherwise, it is hard to obtain any upper bound to $f(S\setminus T)$ in the general case. Moreover, we know that the uniform property holds true for all subsets of $S$. This enables us to use the upper bound of \Cref{lem:final} to derivative terms $\frac{\widehat{\partial} f(S\setminus
	E)}{\widehat{\partial} z_T}$ by $\beta_{|S\setminus (E\cup T)|}$.
\begin{lemma}
	\label{lem:special_case}
	If all points in $S\setminus T$ have the same norm, then $\frac{\widehat{\partial} f(S)}{\widehat{\partial} z_T} \leq \beta_{|S\setminus T|}$.
\end{lemma}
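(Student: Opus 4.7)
The plan is induction on $|S|$. The base case $S\subseteq T$ is immediate from the definition of the pseudo-derivative, which returns $1=\beta_0$. For the inductive step, two features of the uniform assumption will be crucial: (i) $f(S\setminus T)$ equals the common norm $d$ exactly, since the Closest Point Process restricted to $S\setminus T$ always terminates at a point of norm $d$; and (ii) for any $E$, the set $(S\setminus E)\setminus T\subseteq S\setminus T$ still consists of points of common norm $d$, so the inductive hypothesis applies to $\frac{\widehat{\partial} f(S\setminus E)}{\widehat{\partial} z_T}$ for every $E \in \cC_S$.

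Let $m := |S\setminus T|$, and for $0 \le j \le m$ group crossing cuts by how many uniform points they contain: set
\[
  Z_S^{(j)} := \sum\bigl\{\, z_E \;:\; E \in \cC_S,\ |E \cap (S\setminus T)| = j \,\bigr\},
  \qquad D := \sum_{E\supseteq S} z_E.
\]
For $E\in\cC_S$ with $|E\cap(S\setminus T)|=j$, the inductive hypothesis gives $\frac{\widehat{\partial} f(S\setminus E)}{\widehat{\partial} z_T} \le \beta_{m-j}$. Substituting these bounds together with $f(S\setminus T)=d$ into the pseudo-derivative recurrence \eqref{eqn:pseudo_derivative}, the desired inequality $\frac{\widehat{\partial} f(S)}{\widehat{\partial} z_T} \leq \beta_m$ rearranges to
\[
  d - D \;\le\; \sum_{j=1}^m Z_S^{(j)}\bigl(H_m - H_{m-j}\bigr),
\]
using $\beta_m-\beta_{m-j}=H_m-H_{m-j}$ and dropping the $j=0$ term which contributes zero.

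To close the argument I compute $d-D$ by double counting. Since $\|\bp\|_1=\sum_{E\ni \bp} z_E = d$ for each $\bp\in S\setminus T$, summing this identity over all $m$ points and splitting each $E$ according to whether $E\cap S=\emptyset$, $E\supseteq S$, or $E\in\cC_S$, yields $md = mD + \sum_{j=1}^m j\,Z_S^{(j)}$. Hence $d - D = \tfrac{1}{m}\sum_{j=1}^m j\,Z_S^{(j)}$, and the required inequality reduces to the pointwise bound $H_m - H_{m-j} \ge j/m$ for each $j\in[m]$, which is immediate since $H_m - H_{m-j} = \sum_{i=m-j+1}^m 1/i$ is a sum of $j$ terms, each at least $1/m$.

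The main obstacle is conceptual rather than computational: one must recognize that the pseudo-derivative was engineered precisely so that its ``shift term'' $\sum_{E\supseteq S} z_E$ (in place of the less tractable $f(S)$) combines with the identity $f(S\setminus T)=d$ to produce the clean quantity $d-D$, which is exactly what the double-counting identity $md = mD + \sum_j j\,Z_S^{(j)}$ expresses in terms of the grouped weights. Once this alignment is set up, the remaining inequality is an elementary property of harmonic sums.
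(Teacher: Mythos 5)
Your proof is correct and takes essentially the same approach as the paper's: induct on $|S|$, use the uniform assumption to pin down $f(S\setminus T)$ as the common norm, apply the inductive hypothesis $\beta_{|S\setminus(E\cup T)|}$ to the recursive pseudo-derivative terms, and reduce everything to the harmonic inequality $H_m-H_{m-j}\geq j/m$. Your grouping of the cuts $E\in\cC_S$ by the class $j=|E\cap(S\setminus T)|$ is merely a repackaging of the paper's per-cut coefficient comparison in equations (6) and (13); the double-counting identity $md=mD+\sum_j jZ_S^{(j)}$ is exactly the paper's expansion of $f(S\setminus T)$ via $\ell(r)=\sum_{E\ni r}z_E$. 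One small slip: your stated base case ``$S\subseteq T$'' should be the case $T\notin\cC_S$ (which also covers $T\cap S=\emptyset$, where the pseudo-derivative is $0\leq\beta_m$); since $\beta_m\geq 1$ for all $m\geq 0$, both boundary values are fine and the substance is unaffected.
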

\begin{proof}
	The proof is by induction on $|S|$. If $T\notin \cC_S$, we know that $	\frac{\widehat{\partial} f(S)}{\widehat{\partial} z_T}$ is either $0$ or $1$. But $\beta_{|S\setminus T|}\geq 1$ as $\beta_{m}\geq 1$ for any $m\geq 0$ which implies that we are done. Hence, for $|S|=1$, we know that $T\notin \cC_S$ and we are done. From now on, we can assume $|S|\geq 2$ and $T\in \cC_S$. So we use the definition of the pseudo-derivative from (\ref{eqn:pseudo_derivative})
	and use the upper bound of the lemma inductively for the
        recursive terms $\frac{\widehat{\partial} f(S\setminus
          E)}{\widehat{\partial} z_T}$ to get:
	\begin{align}
		\label{eqn:upperbound_pseudoderivative}
		\frac{\widehat{\partial} f(S)}{\widehat{\partial} z_T}\leq  	\frac{\sum_{E\in \cC_S }z_E \cdot \beta_{|S\setminus(E\cup T)|}+ f(S\setminus T)- \sum_{E\supseteq S}z_E}{\sum_{E\in \cC_S }z_E}.
	\end{align}
	In order to show that $	\frac{\widehat{\partial} f(S)}{\widehat{\partial}
		z_T} \leq \beta_{|S\setminus T|}$, it is sufficient
	to upper bound the RHS of \Cref{eqn:upperbound_pseudoderivative} by $\beta_{|S\setminus T|}$. Simplifying gives the following sufficient condition
	\begin{align}
		\label{eqn:6}	f(S\setminus T)\leq \sum_{E\supseteq S }z_E + \sum_{E\in \cC_S}z_E\left(\beta_{|S\setminus T|}- \beta_{|S\setminus (T\cup E)|} \right).
	\end{align}
	From here on, we will prove \Cref{eqn:6}. Since $T\in \cC_S$, we know that $|S\setminus T|\geq 1$. All points in $S\setminus T$ have the same norm, so we can write 
	\begin{align}
		\label{eqn:11}  f(S\setminus T)&=	\frac{1}{|S\setminus T|}\sum_{r\in S\setminus T}\ell(r) \\
		\label{eqn:12} &\stackrel{\eqref{eqn:norm_expansion}}{=} \frac{1}{|S\setminus T|}\sum_{r\in S\setminus T}\sum_{E:r \in E}z_E \\
		\label{eqn:13} &= \sum_{E}z_E\cdot \frac{|(S\setminus T)\cap E|}{|S\setminus T|}.
	\end{align}
	When $E\supseteq S$, the coefficient of $z_E$ in \Cref{eqn:6} is $1$. The coefficient of $z_E$ in \Cref{eqn:13} is also $1$ because in this case, $S\subseteq E$ and hence $|(S\setminus T)\cap E|=|S\setminus T|$. Otherwise, the coefficients of $z_E$ in \Cref{eqn:6} and \Cref{eqn:13} are $\beta_{|S\setminus T|}- \beta_{|S\setminus (T\cup E)|} $ and $|(S\setminus T)\cap E|/|S\setminus T|$ respectively. It remains to show
	\begin{align}
		\label{eqn:8}	\frac{|(S\setminus T)\cap E|}{|S\setminus T|} \leq 	\beta_{|S\setminus T|}- \beta_{|S\setminus (T\cup E)|}.
	\end{align}
	We can justify \Cref{eqn:8} because the left hand side is sum of $|(S\setminus T)\cap E|$ copies of $\frac{1}{|S\setminus T|}$, whereas the right hand side is the sum of $|(S\setminus T)\cap E|$ terms, the smallest of them equal to $\frac{1}{|S\setminus T|}$.
\end{proof}
\begin{cor}
	\label{lem:base_case}
	If $T \in \cC_S$ and $|S\setminus T|=1$, we have $\frac{\widehat{\partial} f(S)}{\widehat{\partial} z_T}\leq \beta_{1}=2$. 
\end{cor}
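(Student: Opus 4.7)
The plan is to obtain this as an immediate specialization of \Cref{lem:special_case}. Observe that when $|S\setminus T|=1$, the set $S\setminus T$ contains a single point, and hence the hypothesis ``all points in $S\setminus T$ have the same norm'' is vacuously true. Applying \Cref{lem:special_case} directly gives
\begin{align*}
    \frac{\widehat{\partial} f(S)}{\widehat{\partial} z_T} \;\leq\; \beta_{|S\setminus T|} \;=\; \beta_1.
\end{align*}
Finally, unpacking the definition $\beta_m = 1 + H_m$ with $H_1 = 1$ yields $\beta_1 = 2$, which is the claimed bound.

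No genuine obstacle is expected here, since the heavy lifting---the induction on $|S|$ together with the combinatorial inequality $|(S\setminus T)\cap E|/|S\setminus T|\leq \beta_{|S\setminus T|}-\beta_{|S\setminus(T\cup E)|}$---has already been done inside the proof of \Cref{lem:special_case}. The reason to isolate this statement as a corollary is presumably its role as a convenient base case for the forthcoming induction that reduces \Cref{lem:final} in full generality to the uniform case: when one peels off the closest point in $S\setminus T$, the residual subset shrinks by one, and one eventually hits the situation $|S\setminus T|=1$, at which point this corollary supplies the clean numerical bound $\beta_1 = 2$ to terminate the recursion.
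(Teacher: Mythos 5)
Your proof is correct and matches the paper's own argument exactly: both observe that a singleton $S\setminus T$ vacuously satisfies the uniform-norm hypothesis, so \Cref{lem:special_case} applies directly to give $\beta_{|S\setminus T|}=\beta_1=2$.
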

\begin{proof}
	Follows from \Cref{lem:special_case} because when there is only one point in $S\setminus T $, we can say that all points in $S\setminus T$ have the same norm.
\end{proof}

\subsection{Proof of \Cref{lem:final}: Reducing to the Uniform Case}

The case for points in $\Univ \backslash T$ have different norms is
the technical heart of the proof. In this case, we ``lift'' the points
in such a way that the value of the pseudo-derivative
$\frac{\hat{\partial}f(S)}{\hat{\partial}z_T}$ is monotonically
increasing, thereby reducing to the uniform case of
\Cref{subsection:special_case}. To begin, we give a few supporting
lemmas.
\subsubsection{Supporting lemmas}
\begin{obs}
	\label{obs:upperandlowerbound}
	$\max\limits_{b\in S}(\ell(b))\geq f(S)\geq \min\limits_{a\in S}(\ell(a))$
\end{obs}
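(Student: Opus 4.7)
The plan is to prove Observation~\ref{obs:upperandlowerbound} by induction on $|S|$, using the recursive definition of $f$ in~\eqref{eqn:definition_global}. The base case $|S|=1$ is immediate: if $S=\{p\}$, then $f(S)=\ell(p)=\max_{b\in S}\ell(b)=\min_{a\in S}\ell(a)$, so both inequalities hold with equality.

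For the inductive step with $|S|\geq 2$, the key observation is that \eqref{eqn:definition_global} expresses $f(S)$ as a convex combination of the values $\{f(S\setminus E) : E\in \cC_S\}$, with coefficients $z_E/\sum_{E'\in\cC_S}z_{E'}$. Recall that $E\in\cC_S$ means $E$ crosses $S$, i.e., $\emptyset\neq S\cap E\neq S$; consequently $S\setminus E$ is a non-empty proper subset of $S$, so the inductive hypothesis applies and yields
\[
\min_{a\in S\setminus E}\ell(a)\;\leq\; f(S\setminus E)\;\leq\; \max_{b\in S\setminus E}\ell(b).
\]
Since $S\setminus E\subseteq S$, we trivially have $\min_{a\in S}\ell(a)\leq \min_{a\in S\setminus E}\ell(a)$ and $\max_{b\in S\setminus E}\ell(b)\leq \max_{b\in S}\ell(b)$, so every term in the convex combination lies in the interval $[\min_{a\in S}\ell(a),\,\max_{b\in S}\ell(b)]$. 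Any convex combination of values in a common interval stays in that interval, which gives the desired bounds on $f(S)$.

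There is essentially no obstacle here; the only thing to double-check is that the degenerate case $\sum_{E\in\cC_S}z_E=0$ does not undermine the induction. But since the $k$ points of $\Univ$ are assumed distinct, for any $|S|\geq 2$ there must exist at least one coordinate (i.e., cut) $E$ that crosses $S$ with $z_E>0$, so the denominator in the recurrence is strictly positive and the convex-combination argument goes through unchanged.
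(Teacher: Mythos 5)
Your proof is correct. The paper disposes of this observation in one line by invoking the semantics of $f$: by construction, $f(S)$ is the expectation of $\ell(\widehat{\bp})$ for the random last point $\widehat{\bp}\in S$, and an expectation of values drawn from a finite set is sandwiched between their minimum and maximum. Your argument instead works purely with the algebraic recursion~\eqref{eqn:definition_global}, proving the sandwich bound by induction on $|S|$ using the fact that $f(S)$ is a convex combination of $f(S\setminus E)$ over crossing cuts $E\in\cC_S$, each of which falls under the inductive hypothesis since $S\setminus E$ is a non-empty proper subset. This is a genuinely self-contained derivation: it does not need the probabilistic interpretation of $f$, which is relevant because in \S\ref{sec:tight} the paper explicitly treats $f$ as an algebraic function of the $\{z_S\}$ variables. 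Your handling of the degenerate case is also sound: since the embedding $\varphi$ preserves $\ell_1$-distances between the distinct points of $\Univ$, any two distinct points of $S$ must differ in some coordinate $E$ with $z_E>0$, and such an $E$ necessarily crosses $S$, so $\sum_{E\in\cC_S}z_E>0$ whenever $|S|\geq 2$. The paper's proof is shorter; yours makes the underlying inductive structure explicit, which is arguably useful preparation for the more delicate inductions that follow (e.g., \Cref{lem:delete_min_element}, \Cref{lem:special_case}).
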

\begin{proof}
 The proof follows from the fact that $f(S)$ is an expectation of norms of points in $S$.
\end{proof}
Let us denote $S\backslash \{\bp\}$ simply as $S-\bp$. 
\begin{obs}
	\label{obs:boundary_point}
	For any point $\bp\in S$ such that $|S|\geq 2$, if $E \in \cC_S$ and $E \notin \cC_{S-\bp}$, then $E\cap S$ is equal to either $\{\bp\}$ or $S-\bp$. 
\end{obs}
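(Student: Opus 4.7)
The plan is to give a short direct case analysis based on the definition of $\cC_S$ as the collection of sets $E$ that ``cross'' $S$, i.e., for which $E \cap S$ is a proper nonempty subset of $S$. I do not expect a real obstacle here; the observation essentially says that if a cut $E$ crosses $S$ but not $S-\bp$, then the only boundary point of $E$ in $S$ is $\bp$ itself.

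Concretely, the hypothesis $E \notin \cC_{S-\bp}$ unfolds into two mutually exclusive situations: either $E \cap (S-\bp) = \emptyset$, or $S-\bp \subseteq E$. I would handle each in turn. In the first situation, $E \cap S \subseteq \{\bp\}$, and using $E \in \cC_S$ (so $E \cap S \neq \emptyset$) forces $E \cap S = \{\bp\}$. In the second situation, $E \cap S \supseteq S-\bp$, and using the other half of $E \in \cC_S$ (so $E \cap S \neq S$, equivalently $\bp \notin E$) forces $E \cap S = S - \bp$.

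The role of $|S| \geq 2$ is only to make the statement non-vacuous: for $|S|=1$ the collection $\cC_S$ is empty and there is nothing to say. Once the case split is written out, no further calculation is needed, so the ``proof'' is essentially a two-line unpacking of definitions that can be dropped in directly after the observation statement.
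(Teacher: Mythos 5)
Your proof is correct, and it is the natural one: the paper states this as an unproved observation, treating it as immediate from the definition of $\cC_S$. Your two-case unpacking of $E\notin\cC_{S-\bp}$ (either $E\cap(S-\bp)=\emptyset$ or $S-\bp\subseteq E$), combined with the two halves of $E\in\cC_S$, is exactly the intended argument. One tiny refinement: $|S|\geq 2$ does a little more than avoid vacuity --- it guarantees $S-\bp\neq\emptyset$, which is what makes your two cases genuinely disjoint (and, more to the point, what makes the two conclusions $E\cap S=\{\bp\}$ and $E\cap S=S-\bp$ meaningfully different rather than both collapsing to $\emptyset$ or $\{\bp\}$); but this has no effect on the correctness of your argument.
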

\begin{obs}
	\label{obs:extended_formula}
	Let $A = \sum_{j=1}^m w_j\cdot v_j/\sum_{j=1}^m w_j$ be a weighted average of $m$ items $t_1,\dots, t_m$ with
	values $v_1,\dots,v_m$ weighted by respective weights
	$w_1,\dots, w_m$. Adding any number $m'-m$ items of value $A$
	(with any weights) does not change the weighted average. I.e., the new
	weighted average is
	\begin{align*}
		A'&=\frac{\sum_{j=1}^{m}w_{j}\cdot v_{j}+
			\sum_{j=m+1}^{m'}w_{j}\cdot A}{\sum_{j=1}^{m}w_j + \sum_{j=m+1}^{m'}w_{j}}
		= A
	\end{align*}
\end{obs}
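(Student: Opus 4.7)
The plan is to verify this identity by direct algebraic substitution; no clever argument is needed, only bookkeeping. Starting from the defining identity of the original weighted average, $\sum_{j=1}^{m} w_j\, v_j = A \cdot \sum_{j=1}^{m} w_j$, I would substitute this into the first term in the numerator of $A'$ to obtain
\[
    A' \;=\; \frac{A\cdot \sum_{j=1}^m w_j \;+\; A\cdot \sum_{j=m+1}^{m'} w_j}{\sum_{j=1}^{m'} w_j} \;=\; \frac{A \cdot \sum_{j=1}^{m'} w_j}{\sum_{j=1}^{m'} w_j} \;=\; A,
\]
which is precisely the claim. The only edge cases worth flagging are trivial: if $m' = m$ the added sums are empty and the identity reduces to the definition of $A$; if some of the added weights $w_{m+1},\dots,w_{m'}$ vanish, the corresponding terms contribute nothing to numerator or denominator. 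Implicit in the definition of $A$ is that $\sum_{j=1}^{m} w_j \neq 0$, and consistent with the non-negative convention on weights used throughout \S\ref{sec:expon-clocks-proof} and \S\ref{sec:tight} (the coefficients $z_S \geq 0$), the denominator of $A'$ is also nonzero so the division is well-defined.

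There is no real obstacle here; the ``hard part,'' if one insists on finding one, is simply not to over-engineer the proof. The conceptual content of the observation is that a weighted average is invariant under the adjunction of data points whose value equals the current average, with arbitrary weights. Based on its placement among the supporting lemmas for the reduction to the uniform case in \S\ref{sec:tight}, I would expect the observation to be applied to the weighted-average representation of $f(S)$ itself, namely $f(S) = \sum_{E \in \cC_S} z_E\, f(S\setminus E) / \sum_{E \in \cC_S} z_E$ from \Cref{eqn:definition_global}: adjoining ``phantom'' contributions of value $f(S)$ (for instance, matching cuts from a uniform comparison configuration) lets two weighted averages with different support sizes be compared term-by-term without altering either one, which is exactly the kind of step needed when lifting the non-uniform case of \Cref{lem:final} to match the uniform case of \Cref{lem:special_case}.
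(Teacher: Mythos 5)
Your proof is correct and matches the paper's (the paper states the identity with the computation embedded in the observation itself, and your direct substitution of $\sum_{j=1}^m w_j v_j = A\sum_{j=1}^m w_j$ is exactly that). Nothing further is needed.
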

\Cref{obs:extended_formula} is simple but powerful. 
\begin{lemma}
	\label{lem:delete_min_element}
	Let $\bp$ be a point in $S$ having the minimum $\ell_1$ norm. If $S -\bp \neq \emptyset$, then
	\begin{align}
		\label{eqn:delete_min_element}
		f(S-\bp)\geq f(S) \geq f(\{\bp\})
	\end{align}
\end{lemma}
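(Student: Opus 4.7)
The right-hand inequality $f(S) \geq f(\{\bp\}) = \ell(\bp)$ is immediate from \Cref{obs:upperandlowerbound}, since $\bp$ achieves the minimum $\ell_1$-norm in $S$. The substantive claim is $f(S-\bp) \geq f(S)$, which I will prove by induction on $|S|$ using the recursion~\eqref{eqn:definition_global}. The base case $|S|=2$, say $S=\{\bp,\bq\}$ with $\ell(\bp) \leq \ell(\bq)$, is immediate: each cut $E \in \cC_S$ has $S \setminus E \in \{\{\bp\},\{\bq\}\}$, so $f(S)$ is a weighted average of $\ell(\bp)$ and $\ell(\bq)$, and is therefore at most $\ell(\bq) = f(S-\bp)$.

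For the inductive step, I partition $\cC_S$ into four groups according to whether $\bp \in E$ and whether $E$ also crosses $S-\bp$. \Cref{obs:boundary_point} handles the ``boundary'' cuts in $\cC_S \setminus \cC_{S-\bp}$: either $E \cap S = \{\bp\}$ (call this group $\cC_1$, contributing $f(S \setminus E) = f(S-\bp)$) or $E \cap S = S-\bp$ (group $\cC_3$, contributing $f(S \setminus E) = \ell(\bp)$). The remaining cuts all lie in $\cC_{S-\bp}$ and split into $\cC_2 = \{E \in \cC_{S-\bp} : \bp \in E\}$ (where $S \setminus E = (S-\bp) \setminus E$) and $\cC_4 = \{E \in \cC_{S-\bp} : \bp \notin E\}$ (where $(S-\bp) \setminus E = (S \setminus E) - \bp$, and $|S \setminus E|\geq 2$ with $\bp$ still the minimum-norm element). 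Let $A, B, C, D$ denote the total $z_E$-weights of $\cC_1, \cC_3, \cC_2, \cC_4$ respectively, so that $\sum_{E \in \cC_{S-\bp}} z_E = C+D$.

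The inductive hypothesis applied to each $E \in \cC_4$ gives $f((S \setminus E) - \bp) \geq f(S \setminus E)$. Unfolding $f(S-\bp)$ via~\eqref{eqn:definition_global} and substituting this bound yields
\[
f(S-\bp)\,(C+D) \;\geq\; \sum_{E \in \cC_2 \cup \cC_4} z_E \, f(S \setminus E).
\]
A second use of \Cref{obs:upperandlowerbound} (on $S-\bp$, whose points all have norm $\geq \ell(\bp)$) gives $f(S-\bp) \geq \ell(\bp)$, so $B \cdot f(S-\bp) \geq B \cdot \ell(\bp)$. Adding these two inequalities together with the identity $A \cdot f(S-\bp) = A \cdot f(S-\bp)$ produces
\[
f(S-\bp)\,(A+B+C+D) \;\geq\; A\,f(S-\bp) + B\,\ell(\bp) + \sum_{E \in \cC_2 \cup \cC_4} z_E\, f(S \setminus E),
\]
and the right-hand side is precisely $f(S)\cdot(A+B+C+D)$ by the four-way decomposition above. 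Dividing through gives $f(S-\bp) \geq f(S)$.

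I do not foresee a real obstacle: the whole argument is a careful unfolding of the recursive definition, with the key bookkeeping being the four-way partition of $\cC_S$ enabled by \Cref{obs:boundary_point}. The only places where something non-trivial is used are (i) the inductive application on residual sets $S \setminus E$ for $E \in \cC_4$, which requires the verification that $\bp$ remains the minimum and $|S \setminus E| \geq 2$, and (ii) the two uses of \Cref{obs:upperandlowerbound}, both of which follow from $\bp$ being the global minimum of $S$.
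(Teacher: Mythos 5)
Your proof is correct and follows essentially the same inductive route as the paper's: lower bound via \Cref{obs:upperandlowerbound}, upper bound by induction on $|S|$, with the cuts in $\cC_S$ split (using \Cref{obs:boundary_point}) into those crossing $S-\bp$ (handled by induction or by $(S-\bp)\setminus E = S\setminus E$) and the two boundary types (handled by $f(S-\bp)\geq\ell(\bp)$ and by equality). The only cosmetic difference is that the paper normalizes the two expansions to a common denominator via \Cref{obs:extended_formula} and then compares coefficients term-by-term, whereas you clear denominators and sum the weighted inequalities directly — the same computation, written out explicitly.
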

\begin{proof}
	The lower bound follows from \Cref{obs:upperandlowerbound}. Let us prove the upper bound by induction in $|S|$. If $|S|=2$, then $f(S-\bp)$ is the norm of the maximum norm point in $S$ which is more than $f(S)$ by \Cref{obs:boundary_point}. If $|S|\geq 3$, then $|S-\bp|\geq 2$ so we will use \Cref{eqn:definition_global} to expand both $f(S)$ and $f(S-\bp)$ as 
 \begin{align}
   f(S-\bp)=\frac{\sum_{E\in \cC_{S-\bp}}z_E\cdot f((S-\bp)\backslash E)}{\sum_{E\in \cC_{S-\bp}}z_E},\,
   f(S)=\frac{\sum_{E\in \cC_{S}}z_E\cdot f(S\backslash E)}{\sum_{E\in \cC_{S}}z_E}.
 \end{align}
 Using \Cref{obs:extended_formula} and \Cref{obs:boundary_point}, we can re-write $f(S-\bp)$ as 
 \begin{align}
 f(S-\bp)=\frac{\sum_{E\in \cC_{S-\bp}}z_E\cdot f((S-\bp)\backslash E)+z_{\{\bp\}}\cdot f(S-\bp)+z_{S-\bp}\cdot f(S-\bp)}{\sum_{E\in \cC_{S}}z_E}.
 \end{align}
 In the numerators of $f(S-\bp)$ and $f(S)$, the coefficient of $z_E$ in $f(S-\bp)$ is more than that of $f(S)$ either by induction on $|S|$ ($S\leftarrow S\backslash E$) or using the fact that $f(S-\bp)\geq f(\{\bp\})$ from \Cref{obs:upperandlowerbound}. 
\end{proof}
From now, many other lemmas will have their proofs very similar to that of \Cref{lem:delete_min_element} where we compare the function values of two different sets by expanding the functions and using \Cref{obs:extended_formula} to normalize the denominators and then use induction and other arguments to conclude. 
\subsubsection{The lifting operation}
Now let us define the lift operation. For a set $S'=\{\bp_1,\dots,\bp_{k'}\}\subseteq \Univ$, consider the following trajectory for the $\bm z$ vector 
\begin{align}
\label{eqn:liftingtrajectory}
 z_E(t):= \begin{cases} 
  z_E+ t & E=\{\bp\},\, \bp \in S' \\
  z_E & \text{ otherwise }.
\end{cases} 
\end{align}
Given this trajectory for the embedding, we are interested in how this changes the function value. Let us define 
\[\frac{\partial f(S)}{\partial S'}:= \frac{\mathrm{d}f_t(S)}{\mathrm{d}t}\bigg|_{t=0}\,.\]
\begin{lemma}
	\label{lem:lift_derivative}
	Let $S'$ be a subset of $\Univ$. Then,
	\begin{enumerate}[nosep]
		\item \label{eqn:disjoint_lifting} If $S\cap S'=\emptyset, \frac{\partial f(S)}{\partial S'}=0$.
		\item \label{eqn:single_point_lifting}If $|S|=1$, $S\subseteq S',  \frac{\partial f(S)}{\partial S'}=1$.
		\item \label{eqn:general_lifting}If $|S| \geq 2$, 	
		\begin{align}
			\label{eqn:derivativeS_simpler}	\frac{\partial f(S)}{\partial S'} = \frac{\sum_{E\in \cC_S}z_E\cdot \frac{\partial f(S\backslash E)}{\partial S'}+ \sum_{ \bp_{i}\in S\cap S'}\left(f(S-\{\bp_{i}\}) - f(S)\right)}{\sum_{E\in \cC_S}z_E}.
		\end{align}	
		Moreover, if points in $S'\cap S$ if any, have the least norm out of points in $S$, then
		$\frac{\partial f(S)}{\partial S'}\geq 0.$
		\end{enumerate}
\end{lemma}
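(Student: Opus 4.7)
The plan is to handle the three parts of the lemma sequentially, with parts (1) and (2) being essentially direct unwindings of the definitions, and part (3) having two sub-claims: the formula itself (which is a mechanical application of the quotient rule) and the non-negativity assertion (which requires induction and uses \Cref{lem:delete_min_element}).

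For part (1), the lifting trajectory modifies only the variables $z_{\{\bp\}}$ for $\bp \in S'$. When $S \cap S' = \emptyset$, for any such $\bp$ the singleton cut $\{\bp\}$ satisfies $\{\bp\} \cap S'' = \emptyset$ for every $S'' \subseteq S$, so it never appears in any $\cC_{S''}$ used by the recurrence \eqref{eqn:definition_global}, and furthermore $\bp \neq r$ for every $r \in S$, so $z_{\{\bp\}}$ does not appear in any base term $\ell(r) = \sum_{E \ni r} z_E$ either. Unrolling the recurrence shows that $f(S)$ is independent of these variables, giving derivative $0$. Part (2) is immediate: when $S = \{r\}$ with $r \in S'$, only $z_{\{r\}}$ among the variables $\{z_{\{\bp\}}\}_{\bp \in S'}$ contains $r$, so $\partial \ell(r)/\partial S' = 1$.

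For the formula in part (3), I would apply the quotient rule to the recurrence. Writing $N(t) = \sum_{E \in \cC_S} z_E(t)\, f_t(S\setminus E)$ and $D(t) = \sum_{E \in \cC_S} z_E(t)$, I note that $\frac{\partial z_E}{\partial S'}$ is $1$ when $E = \{\bp\}$ for some $\bp \in S'$ and $0$ otherwise. Since $|S| \geq 2$, such a singleton $\{\bp\}$ lies in $\cC_S$ precisely when $\bp \in S \cap S'$. Collecting terms gives
\[
\frac{\partial f(S)}{\partial S'}
= \frac{\sum_{E\in \cC_S} z_E \cdot \frac{\partial f(S\setminus E)}{\partial S'} + \sum_{\bp_i \in S\cap S'} \bigl(f(S\setminus\{\bp_i\}) - f(S)\bigr)}{\sum_{E \in \cC_S} z_E},
\]
where I used that $f(S\setminus\{\bp_i\})$ is the contribution $f_t(S\setminus E)$ for the singleton $E=\{\bp_i\}$, and the $-f(S)$ comes from the derivative of the denominator. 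This is exactly \eqref{eqn:derivativeS_simpler}.

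For non-negativity, I would proceed by induction on $|S|$, assuming the points of $S' \cap S$ all achieve $\min_{q \in S} \ell(q)$. The base case $|S| = 1$ is covered by parts (1) and (2). For the inductive step, the second sum is non-negative because each $\bp_i \in S \cap S'$ has minimum norm in $S$, so \Cref{lem:delete_min_element} gives $f(S\setminus\{\bp_i\}) \geq f(S)$. For the first sum, the key observation is that the ``least-norm'' hypothesis is preserved when passing from $S$ to $S \setminus E$: for any $E \in \cC_S$, either $S' \cap (S\setminus E) = \emptyset$ (so part (1) gives derivative $0$), or every remaining $\bp \in S' \cap (S\setminus E) \subseteq S' \cap S$ still attains the minimum norm over the smaller set $S \setminus E \subseteq S$, so the inductive hypothesis applies and yields $\partial f(S\setminus E)/\partial S' \geq 0$. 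Combining the two non-negative sums gives the conclusion. The main obstacle is spotting that this monotonicity-of-hypothesis step works cleanly; once seen, the proof is a straightforward induction leveraging \Cref{lem:delete_min_element} as the only non-trivial ingredient.
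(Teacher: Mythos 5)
Your proposal is correct and takes essentially the same approach as the paper: parts (1) and (2) by inspecting which $z_{\{\bp\}}$ variables change, part (3) by the quotient rule, and non-negativity by induction on $|S|$ using \Cref{lem:delete_min_element}. You are in fact a bit more careful than the paper in the inductive step of the non-negativity claim, where you explicitly verify that the least-norm hypothesis survives passing from $S$ to $S\setminus E$ (since $S'\cap(S\setminus E)\subseteq S'\cap S$ and $S\setminus E\subseteq S$); the paper leaves this observation implicit.
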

\begin{proof}
Since $f(S)$ is purely a function of points in $S$, and hence only depends on variables $z_E$ such that $\bp\in E$ for some $\bp\in S$. But the only variables that change with $t$ are $z_{\{\bp'\}}, \bp'\in S'$. Since $S'\cap S=\emptyset$, we can conclude statement \ref{eqn:disjoint_lifting}. When $S=\{\bp\}$ and $\bp\in S'$, $f_{t}(S)=\ell_t(\bp)$. Since $z_{\{\bp\}}$ is the only variable that is changing and $\ell(\bp)$ contains this, we can conclude statement \ref{eqn:single_point_lifting}. Once we know that $|S|=2$, using \Cref{eqn:definition_global} and applying the $\frac{\partial}{\partial S'}$ operator on both sides gives 
\begin{align}
\label{eqn:derivative_general_intermediate}
\frac{\partial f(S)}{\partial S'}= \frac{ \sum_{E\in \cC_S}\left(z_E\cdot \frac{\partial f(S\backslash E)}{\partial s'}+\frac{\partial z_E}{\partial S'}\cdot f(S\backslash E)\right )}{\sum_{E\in \cC_S}z_E}-\frac{f(S)}{\sum_{E\in \cC_S}z_E}\cdot \frac{\partial \sum_{E\in \cC_S}z_E}{\partial S'}. 
\end{align}
Using the fact that the only variables that are changing with $t$ are $z_{\{\bp\}}$ for $\bp\in S'$ in \Cref{eqn:derivative_general_intermediate} gives \Cref{eqn:derivativeS_simpler}. Let us argue that $\frac{\partial f(S)}{\partial S'}\geq 0$ when points in $S\cap S'$ have the least norm of points in $S$ using induction on $|S|$. For $|S|=1$, using statement \ref{eqn:single_point_lifting} or statement \ref{eqn:disjoint_lifting}, we are done. Otherwise, using \Cref{eqn:derivativeS_simpler}, we have that the recursive derivative terms $\frac{\partial f(S\backslash E)}{\partial S'}$ in the numerator of \Cref{eqn:derivativeS_simpler} are non-negative by inductive hypothesis. From \Cref{lem:delete_min_element}, we know that $f(S-\bp_i)-f(S)\geq 0$ for every $\bp_i \in S\cap S'$. Combining both these observations, we can conclude statement \ref{eqn:general_lifting} and hence the lemma. 
\end{proof}
\begin{lemma}
	\label{lem:lift_augmentation}
	If $S'$ is the set of points in $\Univ\backslash T$ of minimum norm, then
	\begin{align*}
		\frac{\partial }{\partial S'}\left(\frac{\hat{\partial} f(S)}{\hat{\partial}z_T} \right) \geq 0.
	\end{align*}
\end{lemma}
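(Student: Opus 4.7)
The proof of Lemma~\ref{lem:lift_augmentation} proceeds by induction on $|S|$. The cases $|S|\leq 1$, $T\supseteq S$, and $T\cap S=\emptyset$ are immediate, because in each the pseudo-derivative is a constant ($0$ or $1$) whose derivative along $S'$ vanishes. Fix $|S|\geq 2$ and $T\in \cC_S$. The plan is to differentiate the defining identity
\begin{align*}
\frac{\widehat{\partial} f(S)}{\widehat{\partial} z_T}\cdot \sum_{E\in \cC_S} z_E \;=\; \sum_{E\in \cC_S} z_E\cdot \frac{\widehat{\partial} f(S\setminus E)}{\widehat{\partial} z_T} \;+\; f(S\setminus T) \;-\; \sum_{E\supseteq S}z_E
\end{align*}
along the lifting trajectory~\eqref{eqn:liftingtrajectory}. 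Since $\partial z_E/\partial S'\neq 0$ only when $E=\{\bp\}$ with $\bp\in S'$, and no such singleton satisfies $E\supseteq S$ when $|S|\geq 2$, applying the product rule and rearranging gives
\begin{align*}
W_S\cdot\frac{\partial}{\partial S'}\frac{\widehat{\partial} f(S)}{\widehat{\partial} z_T} \;=\; \underbrace{\sum_{E\in \cC_S} z_E\cdot \frac{\partial}{\partial S'}\frac{\widehat{\partial} f(S\setminus E)}{\widehat{\partial} z_T}}_{(\mathrm{I})}\;+\;\underbrace{\frac{\partial f(S\setminus T)}{\partial S'}}_{(\mathrm{II})}\;+\;\underbrace{\sum_{\bp\in S\cap S'}\bigg(\tfrac{\widehat{\partial} f(S-\bp)}{\widehat{\partial} z_T}-\tfrac{\widehat{\partial} f(S)}{\widehat{\partial} z_T}\bigg)}_{(\mathrm{III})},
\end{align*}
with $W_S=\sum_{E\in \cC_S}z_E>0$.

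Term (I) is non-negative by the induction hypothesis applied to each $S\setminus E$ (note $|S\setminus E|<|S|$ for $E\in \cC_S$). Term (II) is non-negative by Lemma~\ref{lem:lift_derivative}, item~\ref{eqn:general_lifting}, applied to the subset $S\setminus T$: since $S'\subseteq \Univ\setminus T$, the set $S'\cap(S\setminus T)=S\cap S'$ consists of minimum-norm points of $\Univ\setminus T$ and hence \emph{a fortiori} of $S\setminus T$, which is exactly the hypothesis required by that lemma.

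The main obstacle is term (III): the difference $\frac{\widehat{\partial} f(S-\bp)}{\widehat{\partial} z_T}-\frac{\widehat{\partial} f(S)}{\widehat{\partial} z_T}$ need \emph{not} be individually non-negative (the base case $|S|=2$, $T\cap S=\{\bm q\}$ already witnesses this). The plan to handle it is to couple (III) with (II): expand each of the two pseudo-derivatives inside (III) via the recursive identity, use Observation~\ref{obs:extended_formula} to normalize the two weighted averages to a common denominator, and pair like $z_E$-terms across the two expansions. After cancellation, the surviving residues come in two types: $\partial/\partial S'$-terms on strictly smaller subsets (handled together with (I) by induction), and differences of the form $f((S\setminus T)-\bp)-f(S\setminus T)$ for $\bp\in S\cap S'$, which are non-negative by Lemma~\ref{lem:delete_min_element} since $\bp$ is a minimum-norm point of $S\setminus T$. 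The residual positive drift of $f(S\setminus T)$ that is still present in (II) is exactly what compensates for the per-$\bp$ subtractions of $\frac{\widehat{\partial} f(S)}{\widehat{\partial} z_T}$ in (III), giving $(\mathrm{II})+(\mathrm{III})\geq 0$ and hence $(\mathrm{I})+(\mathrm{II})+(\mathrm{III})\geq 0$, which completes the induction.
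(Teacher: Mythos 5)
Your high-level framing matches the paper's proof of \Cref{lem:lift_augmentation}: differentiate the defining recurrence, observe that no singleton $\{\bp\}$ with $\bp\in S'$ can satisfy $E\supseteq S$ when $|S|\ge 2$, handle the recursive term (I) by induction, and reduce to showing $(\mathrm{II})+(\mathrm{III})\ge 0$. This last inequality is exactly \Cref{eqn:final_blow}, which the paper isolates as \Cref{lem:final_blow}. Your observation that (II) alone is non-negative is correct but, as you note, not sufficient.

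The gap is in your treatment of $(\mathrm{II})+(\mathrm{III})$. You assert that after expanding both pseudo-derivatives in (III) by the recurrence, normalizing denominators via \Cref{obs:extended_formula}, and pairing $z_E$-terms against the expansion of (II), the surviving residues come in exactly two types: (a) recursive $\partial/\partial S'$-terms killed by the induction hypothesis, and (b) differences $f((S\setminus T)-\bp)-f(S\setminus T)\ge 0$ via \Cref{lem:delete_min_element}. This misses a third, essential residue. The boundary case $|S\setminus T|=1$, i.e.\ $S\setminus T=\{\bp^*\}\subseteq S'$ (which includes your own flagged base case $|S|=2$), produces
\begin{align*}
(\mathrm{II})+(\mathrm{III})=1+\frac{\widehat\partial f(S-\bp^*)}{\widehat\partial z_T}-\frac{\widehat\partial f(S)}{\widehat\partial z_T}=2-\frac{\widehat\partial f(S)}{\widehat\partial z_T},
\end{align*}
since $S-\bp^*\subseteq T$ forces $\frac{\widehat\partial f(S-\bp^*)}{\widehat\partial z_T}=1$. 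This is of neither type (a) nor type (b); its non-negativity requires the \emph{external} bound $\frac{\widehat\partial f(S)}{\widehat\partial z_T}\le 2=\beta_1$, which the paper imports as \Cref{lem:base_case}, itself a consequence of the uniform-case \Cref{lem:special_case}. Without that ingredient the induction does not close. A second, lesser issue: the claim that pairing leaves only residues of types (a) and (b) is asserted rather than verified. In the paper this pairing is a five-way case analysis (Cases 1a, 1b, 2a, 2b, 2c of \Cref{lem:final_blow}) driven by whether $E\in\cC_S\setminus\cC_{S\setminus T}$ and whether $E$ separates a single $\bp\in S\cap S'$; the cancellations are not automatic and the case $E\in\cC_S\setminus\cC_{S-\bp}$ in particular must be tracked carefully. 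So the blueprint is right, but two load-bearing steps --- the $\beta_1$ bound for the $|S\setminus T|=1$ boundary and the actual bookkeeping of the pairing --- are missing.
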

\begin{proof}
	The proof is by induction on $|S|$. If $T \notin \cC_S$, this is trivially true because $\frac{\hat{\partial} f(S)}{\hat{\partial}z_T}$ is either $0$ or $1$. So for $|S|=1$, we are done. Otherwise, we have $|S|\geq 2$ and $ T\in \cC_S$. Recall from \Cref{eqn:pseudo_derivative} that 
	\begin{align*}
		\frac{\hat{\partial} f(S)}{\hat{\partial} z_T} = \frac{\sum_{E \in \cC_S}z_E\cdot \frac{\hat{\partial} f(S\backslash E)}{\hat{\partial}z_T}+ f(S\backslash T)-\sum_{E\supseteq S }z_E}{\sum_{E\in \cC_S}z_E}.
	\end{align*}
	Applying the operator $\frac{\partial }{\partial S'}$ on both sides gives
	\begin{align*}
		\frac{\sum_{E\in \cC_{S}}z_{E}\cdot \frac{\partial }{\partial S'}\left(\frac{\hat{\partial} f(S\backslash E)}{\hat{\partial}z_T} \right) + \frac{\partial f(S\backslash T)}{\partial S'}- \frac{\partial }{\partial S'}\left(\sum_{E\supseteq S}z_E \right) + \sum_{\bp\in S'\cap S}\left( \frac{\hat{\partial} f(S-\bp)}{\hat{\partial} z_T}- \frac{\hat{\partial} f(S)}{\hat{\partial} z_T}\right) }{\sum_{E\in \cC_{S}}z_E}. 
	\end{align*}		
	Since $|S|\geq 2$, the term $\frac{\partial }{\partial S'}\left(\sum_{E\supseteq S}z_E \right)$ is zero. Using this, what remains is 
	\begin{align*}
		\frac{\sum_{E\in \cC_S}z_E\cdot	\frac{\partial }{\partial S'}\left(\frac{\hat{\partial} f(S\backslash E)}{\hat{\partial}z_T} \right) + \frac{\partial f(S\backslash T)}{\partial S'}+ \sum_{\bp \in S \cap S'}\left( \frac{\hat{\partial} f(S-\bp)}{\hat{\partial} z_T}- \frac{\hat{\partial} f(S)}{\hat{\partial} z_T}\right) }{\sum_{E \in \cC_S}z_E}. 
	\end{align*}
	This means, it is sufficient to show that when $T\in \cC_S$, 
	\begin{align}
		\label{eqn:final_blow}
		\frac{\partial f(S\backslash T)}{\partial S'}+ \sum_{\bp\in S\cap S' }\left( \frac{\hat{\partial} f(S-\bp)}{\hat{\partial} z_T}- \frac{\hat{\partial} f(S)}{\hat{\partial} z_T}\right) \geq 0.
	\end{align}
	Because, the recursive derivatives $\frac{\partial }{\partial
          S'}\left(\frac{\hat{\partial} f(S\backslash
            E)}{\hat{\partial}z_T} \right)$ are at least zero by
        induction on $|S|$. The proof of \Cref{eqn:final_blow} is now
        given in \Cref{lem:final_blow}.
\end{proof}
\begin{lemma}
	Let $S'=\{\bp_1,\dots,\bp_{k'}\}$ be the set of points in $\Univ\backslash T$ of minimum norm. For  $T \in \cC_S$, 
	\label{lem:final_blow}
	\begin{align*}
		\frac{\partial f(S\backslash T)}{\partial S'}+ \sum_{\bp\in S\cap S'}\left( \frac{\hat{\partial} f(S-\bp)}{\hat{\partial} z_T}- \frac{\hat{\partial} f(S)}{\hat{\partial} z_T}\right) \geq 0.
	\end{align*}
\end{lemma}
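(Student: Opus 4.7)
The plan is to prove \Cref{lem:final_blow} by induction on $|S|$. The base case $|S|\leq 1$ is vacuous: since $\cC_S = \emptyset$ the hypothesis $T\in \cC_S$ fails, the sum has no meaningful content, and the remaining term $\frac{\partial f(S\setminus T)}{\partial S'} \geq 0$ by the third part of \Cref{lem:lift_derivative}, using that $S'$ consists of minimum-norm points of $\Univ\setminus T$ and hence also of the subset $S\setminus T$. For the inductive step with $|S|\geq 2$ and $T\in\cC_S$, the key structural observation is that $S'\cap T=\emptyset$, so every $\bp\in S\cap S'$ lies in $S\setminus T$ and therefore $(S-\bp)\setminus T=(S\setminus T)-\bp$. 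Moreover, \Cref{obs:boundary_point} yields $\cC_{S-\bp}\subseteq\cC_S$ with complement $\cC_S\setminus\cC_{S-\bp}=\{E:E\cap S=\{\bp\}\}\cup\{E:E\cap S=S-\bp\}$.

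The next step is to expand both sides and match terms. First, expand $\frac{\partial f(S\setminus T)}{\partial S'}$ via \Cref{lem:lift_derivative}, producing non-negative recursive terms $\frac{\partial f((S\setminus T)\setminus E)}{\partial S'}$ plus a ``direct'' sum $\sum_{\bp\in(S\setminus T)\cap S'}\bigl(f((S\setminus T)-\bp)-f(S\setminus T)\bigr)$. Second, for each $\bp\in S\cap S'$, expand $\frac{\hat\partial f(S-\bp)}{\hat\partial z_T}-\frac{\hat\partial f(S)}{\hat\partial z_T}$ via \Cref{defn:pseudo_derivative} and use \Cref{obs:extended_formula} to bring both pseudo-derivatives over the common denominator $\sum_{E\in\cC_S}z_E$ by inserting virtual items for the boundary cuts in $\cC_S\setminus \cC_{S-\bp}$. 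The resulting expression contains a term $f((S-\bp)\setminus T)-f(S\setminus T)=f((S\setminus T)-\bp)-f(S\setminus T)$. Since $(S\setminus T)\cap S' = S\cap S'$, summing over $\bp\in S\cap S'$ makes these ``direct'' $f$-differences cancel exactly against the corresponding sum from step one. What remains is, for each $E\in\cC_S$, a bundle of terms of the form $\frac{\partial f((S\setminus T)\setminus E)}{\partial S'}+\sum_{\bp\in(S\setminus E)\cap S'}\bigl(\frac{\hat\partial f((S\setminus E)-\bp)}{\hat\partial z_T}-\frac{\hat\partial f(S\setminus E)}{\hat\partial z_T}\bigr)$, which is exactly \Cref{lem:final_blow} applied to the strictly smaller set $S\setminus E$, and hence non-negative by induction.

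The main obstacle will be the boundary-cut bookkeeping. For cuts $E\in \cC_S\setminus \cC_{S-\bp}$ with $E\cap S\in \{\{\bp\},S-\bp\}$, the pseudo-derivative $\frac{\hat\partial f(S-\bp)}{\hat\partial z_T}$ does not involve $E$, and the shift correction $\sum_{E\supseteq S}z_E$ in $\frac{\hat\partial f(S)}{\hat\partial z_T}$ differs from its $S-\bp$ counterpart $\sum_{E\supseteq S-\bp}z_E$ (with extra contributions from $E=S-\bp$ and $E=S$). Inserting virtual items via \Cref{obs:extended_formula} generates residues involving $f(S-\bp)$ and the boundary weights. These must be reconciled carefully: the minimum-norm status of $\bp$ within $\Univ\setminus T$ ensures, via a suitable decomposition and \Cref{lem:delete_min_element}, that the residues either cancel against corresponding pieces in $\frac{\partial f(S\setminus T)}{\partial S'}$ or are manifestly non-negative, so no negative sign leaks into the final balance. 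This careful accounting of boundary contributions, and the verification that each per-$E$ bundle matches the inductive hypothesis, is the technical heart of the argument and mirrors the case analyses in the proofs of \Cref{lem:delete_min_element} and \Cref{lem:special_case}.
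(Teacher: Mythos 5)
Your overall strategy — induction on $|S|$, expand $\frac{\partial f(S\setminus T)}{\partial S'}$ and the pseudo-derivative differences, normalize to a common denominator via \Cref{obs:extended_formula}, and group by cut $E$ — is indeed the paper's strategy. But three specific claims in your outline are wrong or missing and would prevent the argument from closing.

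First, the direct $f$-difference terms do \emph{not} cancel: both $\frac{\partial f(S\setminus T)}{\partial S'}$ and each $\frac{\hat{\partial} f(S-\bp)}{\hat{\partial} z_T} - \frac{\hat{\partial} f(S)}{\hat{\partial} z_T}$ contribute $+\big(f((S\setminus T)-\bp) - f(S\setminus T)\big)$ to the combined numerator, with the \emph{same} sign. Since $\bp$ is a minimum-norm point of $S\setminus T$, \Cref{lem:delete_min_element} gives $f((S\setminus T)-\bp)\geq f(S\setminus T)$, so these terms are non-negative and the paper simply discards them in the lower bound; they reinforce rather than annihilate each other. Your plan to cancel them is a sign error that leaves the bookkeeping inconsistent.

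Second, your base case is incomplete in a way that cannot be patched by induction. When $|S\setminus T|=1$ (in particular the true base case $|S|=2$), after using $S\setminus T = \{\bp^*\}\subseteq S'$ the expression collapses to $2 - \frac{\hat{\partial} f(S)}{\hat{\partial} z_T}$, and one must invoke the explicit bound $\frac{\hat{\partial} f(S)}{\hat{\partial} z_T}\leq 2$ from \Cref{lem:base_case} (itself proved via the uniform case, \Cref{lem:special_case}). This is a genuinely non-inductive ingredient; nothing in your outline supplies it.

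Third, your claim that every per-$E$ bundle is literally the lemma applied to $S\setminus E$ is too optimistic. In the paper's case analysis, only the generic cuts (Cases 1a and 2c) have this form; the boundary cuts behave differently: when $\emptyset\neq S\setminus E\subseteq S\cap T$ (Case 1b) the pseudo-derivative differences vanish and the bundle reduces to $\frac{\partial f(S\setminus T)}{\partial S'}\geq 0$ by \Cref{lem:lift_derivative}; when $S\setminus E = \{\bp^*\}$ (Case 2a) the bundle simplifies to $\frac{\hat{\partial} f(S-\bp^*)}{\hat{\partial} z_T}\geq 0$ by non-negativity of the pseudo-derivative; and when $S\setminus E = S-\bp^*$ (Case 2b) one applies the induction with $S\leftarrow S-\bp^*$, not $S\leftarrow S\setminus E$. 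Your final paragraph hand-waves that the residues are "manifestly non-negative," but the distinct boundary subcases — and in particular the appeal to the factor-$2$ bound — are exactly where the substance of the proof lies, so the proposal as written has a genuine gap.
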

\begin{proof}
	The proof is by induction on $|S|$. Let us work out boundary cases first. If $S\cap S' = (S\backslash T) \cap S' =\emptyset$, then $\frac{\partial f(S\backslash T)}{\partial S'}=0$ and the summation is empty which makes the entire expression equal to $0$ in which case, we are fine. So from now, we can assume $S \cap S' \neq \emptyset$. If $|S\backslash T|=1$,  then $S\backslash T = \{\bp^{*}\} \subseteq S'$, which implies $\frac{\partial f(S\backslash T)}{\partial S'}=1$. The expression in this case is simply 
	\begin{align*}
		1+\frac{\hat{\partial} f(S- \bp^*)}{\hat{\partial} z_T}- \frac{\hat{\partial} f(S)}{\hat{\partial} z_T} =2-\frac{\hat{\partial} f(S)}{\hat{\partial} z_T}
	\end{align*}
	because $S-\bp^*\in T$. For this boundary case, it remains to prove that $\frac{\hat{\partial} f(S)}{\hat{\partial} z_T}\leq 2$. This is taken care of \Cref{lem:base_case}. For the base case $|S|=2$, we have $|S \backslash T|=1$ in which case, we are done from the preceding arguments. Otherwise, if $|S\backslash T|\geq 2$ we can expand all the derivative terms. But before we do that, we need \Cref{obs:extended_formula} so that we have the same denominator for all the three derivatives. Now using \Cref{eqn:derivativeS_simpler} and the fact that $(S\backslash T)\cap S'= S\cap S'$,
	\begin{align}
		\frac{\partial f(S\backslash T)}{\partial S'}\nonumber &=  \frac{\sum_{E\in \cC_{S\backslash T}}z_E\cdot \frac{\partial f(S\backslash( T\cup E))}{\partial S'}+ \sum_{\bp\in S\cap S'}\left(f(S\backslash (T\cup \{\bp\})) - f(S\backslash 	T)\right)}{\sum_{E\in \cC_{S\backslash T}}z_E}  \nonumber\\
		&= \frac{ \sum_{E\in \cC_{S\backslash T}}z_E\cdot \frac{\partial f(S\backslash( T\cup E))}{\partial S'}+ \sum_{E\in \cC_{S}\backslash \cC_{S\backslash T}}z_E\cdot \frac{\partial f(S\backslash T)}{\partial S'}+ \sum_{\bp\in S\cap S'}\left(f(S\backslash (T\cup \{\bp\})) - f(S\backslash T)\right) }	{\sum_{E \in \cC_S}z_E}\nonumber \\
		\label{eqn:liftderivative_lowerbound}	& \geq \frac{ \sum_{E\in \cC_{S\backslash T}}z_E\cdot \frac{\partial f(S\backslash( T\cup E))}{\partial S'}+ \sum_{E\in \cC_S\backslash \cC_{S\backslash T}}z_E\cdot \frac{\partial f(S\backslash T)}{\partial S'}  }	{\sum_{E\in \cC_S}z_E}.
	\end{align}
	Using \Cref{eqn:pseudo_derivative} and the fact that $T \in \cC_{S-\bp}$ for any $\bp\in S\cap S'$,
	\begin{align*}
		\frac{\hat{\partial} f(S-\bp)}{\hat{\partial} z_T} & = \frac{\sum_{E \in \cC_{S-\bp}}z_E\cdot \frac{\hat{\partial} f(S\backslash( E \cup \{\bp\}))}{\hat{\partial} z_T}+ f(S\backslash (T\cup \{\bp\}))- \sum_{E\supseteq S-\bp }z_E}{\sum_{E\in \cC_{S-\bp}}z_E}  \\
		&= \frac{\sum_{E\in \cC_{S-\bp} }z_E\cdot \frac{\hat{\partial} f(S\backslash( E \cup \{\bp\}))}{\hat{\partial} z_T}+  \sum_{E\in \cC_S\backslash \cC_{S-\bp}}z_E\cdot \frac{\hat{\partial} f(S\backslash \{\bp\})}{\hat{\partial} z_T} + f(S\backslash (T\cup \{\bp\}))- \sum_{E\supseteq S-\bp}z_E}{\sum_{E\in \cC_S}z_E}.
	\end{align*}
	Using \Cref{eqn:pseudo_derivative}, 
	\begin{align*}
		&\frac{\hat{\partial} f(S)}{\hat{\partial} z_T} =	\frac{\sum_{E\in \cC_S }z_E\cdot \frac{\hat{\partial} f(S\backslash E)}{\hat{\partial} z_T}+ f(S\backslash T)- \sum_{E\supseteq S}z_E}{\sum_{E\in \cC_S}z_E}.
	\end{align*}
	The numerator of $	\frac{\hat{\partial} f(S-\bp)}{\hat{\partial} z_T} - \frac{\hat{\partial} f(S)}{\hat{\partial} z_T}$ is 
	\begin{align}
		&\sum_{E\in \cC_{S-\bp}}z_E\left(\frac{\hat{\partial} f(S\backslash( E \cup \{\bp\}))}{\hat{\partial} z_T}-  \frac{\hat{\partial} f(S\backslash E)}{\hat{\partial} z_T}\right) + \sum_{E\in \cC_S\backslash \cC_{S-\bp}}z_E\left(\frac{\hat{\partial} f(S\backslash \{\bp\})}{\hat{\partial} z_T} - \frac{\hat{\partial} f(S\backslash E)}{\hat{\partial} z_T}\right)\nonumber \\ &-\sum_{E : S\backslash E = \{\bp\}}z_E+\left(  f(S\backslash (T\cup \{\bp\}))- f(S\backslash T)\right) \nonumber \\
		\label{eqn: liftderivative_summationlowerbound}	& \geq \sum_{E\in \cC_{S-\bp}}z_E\left(\frac{\hat{\partial} f(S\backslash( E \cup \{\bp\}))}{\hat{\partial} z_T}-  \frac{\hat{\partial} f(S\backslash E)}{\hat{\partial} z_T}\right) + 	\sum_{E\in \cC_S\backslash \cC_{S-\bp}}z_E\left(\frac{\hat{\partial} f(S\backslash \{\bp\})}{\hat{\partial} z_T} - \frac{\hat{\partial} f(S\backslash E)}{\hat{\partial} z_T}\right) \nonumber \\ &-\sum_{E: S\backslash E= \{\bp\}}z_E.
	\end{align}
	Note that the condition $E \in \cC_{S}\backslash \cC_{S-\bp}$ holds only when $\bp\in S$ and $E\cap S$ is either $S\backslash \{\bp\}$ or $\{\bp\}$ from \Cref{obs:boundary_point}. When $E\cap S=\{\bp\}$, the second derivative term in \Cref{eqn: liftderivative_summationlowerbound} is zero. Otherwise, it is $\frac{\hat{\partial} f(S\backslash \{\bp\})}{\hat{\partial} z_T} - \frac{\hat{\partial} f(\{\bp\})}{\hat{\partial} z_T}= \frac{\hat{\partial} f(S\backslash \{\bp\})}{\hat{\partial} z_T}$. Using these observations, \Cref{eqn: liftderivative_summationlowerbound} can be simplified and can be re-written as
	\begin{align}
		\label{eqn:3}&\sum_{E\in \cC_{S-\bp}}z_E\left(\frac{\hat{\partial} f(S\backslash( E \cup \{\bp\}))}{\hat{\partial} z_T}-  \frac{\hat{\partial} f(S\backslash E)}{\hat{\partial} z_T}\right)+ \sum_{E : S\backslash E = \{\bp\}}z_E\left(\frac{\hat{\partial} f(S\backslash \{\bp\})}{\hat{\partial} z_T} -1 \right).
	\end{align}
	The remaining numerator of $\frac{\partial f(S\backslash T)}{\partial S'}$ from \Cref{eqn:liftderivative_lowerbound} is 
	\begin{align}
		\label{eqn:4}\sum_{E\in \cC_{S\backslash T}}z_E\cdot \frac{\partial f(S\backslash( T\cup E))}{\partial S'}+ \sum_{E\in \cC_{S}\backslash \cC_{S\backslash T}}z_{E}\cdot \frac{\partial f(S\backslash T)}{\partial S'}.
	\end{align}
	It remains to prove that the sum of \Cref{eqn:4}, and \Cref{eqn:3} summed over all $\bp\in S\cap S'$ is non-negative. Let us do that by carefully partitioning all the $E$ into groups and do a case analysis. 
	\begin{description}
		\item[Case 1: ($E\in \cC_{S}\backslash \cC_{S\backslash T}$)]
		First, observe that this happens only when either $\emptyset \neq S\cap E \subseteq S\cap T$ or $\emptyset \neq S\backslash E \subseteq S\cap T$ as shown in figures \Cref{fig:case1a} and \Cref{fig:case1b} respectively.
		\begin{figure}[ht]
			\begin{subfigure}{.5\textwidth}
				\centering
				\includegraphics[width=.8\linewidth,page=8]{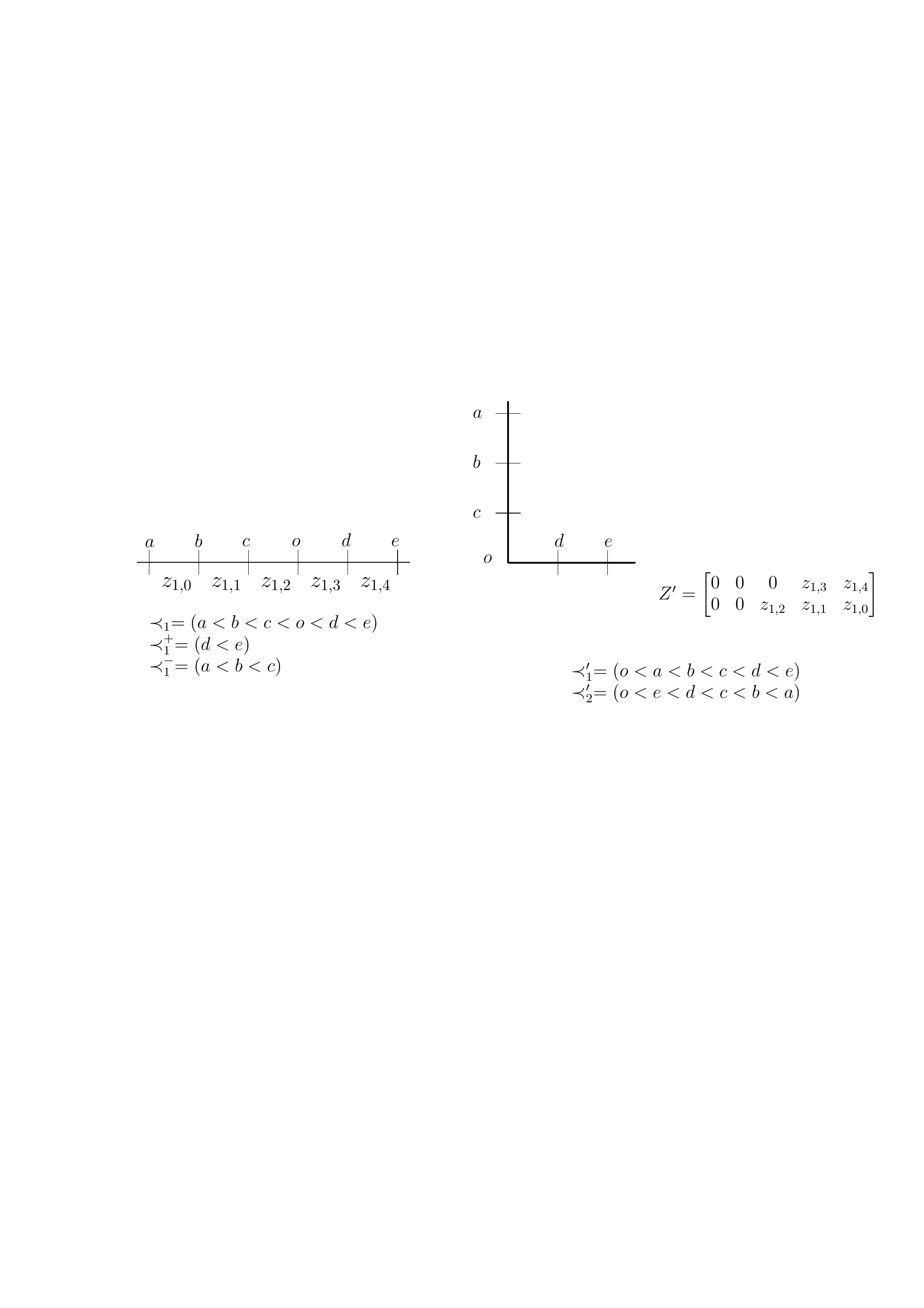}  
				\caption{Case 1a}
				\label{fig:case1a}
			\end{subfigure}
			\begin{subfigure}{.5\textwidth}
				\centering
				\includegraphics[width=.8\linewidth,page=9]{Figures/figures.pdf}  
				\caption{Case 1b}
				\label{fig:case1b}
			\end{subfigure}
			\caption{Case 1}
			\label{fig:case1}
		\end{figure}
		Observe that in either case, for that $E$, we have $E\in \cC_{S-\bp}$ for any $\bp\in S\cap S'$. 
		\begin{description}
			\item[Case 1a: ($\emptyset \neq S\cap E \subseteq S\cap T$ )]
			In this case, the coefficient of $z_E$ in the sum of \Cref{eqn:4} and, \Cref{eqn:3} summed over all $\bp\in S\cap S'$ is 
			\begin{align}
				\label{eqn:5}
				\frac{\partial f(S\backslash T)}{\partial S'} + \sum_{\bp\in S\cap S'}\left (\frac{\hat{\partial} f(S\backslash( E \cup \{\bp\}))}{\hat{\partial} z_T}-  \frac{\hat{\partial} f(S\backslash E)}{\hat{\partial} z_T} \right).
			\end{align}
			Which we can re-write as 
			\begin{align*}
				\frac{\partial f((S\backslash E)\backslash T)}{\partial S'} + \sum_{\bp_r\in (S\backslash E)\cap S'}\left (\frac{\hat{\partial} f((S\backslash E) \backslash \{\bp\})}{\hat{\partial} z_T}-  \frac{\hat{\partial} f(S\backslash E)}{\hat{\partial} z_T} \right).
			\end{align*}
			This, we can argue is non-negative by induction on $|S|$ by setting $S\leftarrow S\backslash E$. But note that before using inductive hypothesis, we need $T\in \cC_{S\backslash E}$ which is not guaranteed when $E=S\cap T$. But in that case, the terms inside the summation of \Cref{eqn:5} are zero because $T\cap (S\backslash E)=\emptyset $ which implies that the coefficient is  $\frac{\partial f(S\backslash T)}{\partial S'}= \frac{\partial f(S\backslash (T\cup E))}{\partial S'}\geq 0$ from \Cref{lem:lift_derivative}. 
			\item[Case 1b: ($\emptyset \neq S\backslash E \subseteq S\cap T$)]
			In this case, the coefficient of $z_E$ is the same expression as \Cref{eqn:5} but observe that $S\backslash (E\cup \{\bp\})=S\backslash E$ which implies that the coefficient is simply $\frac{\partial f(S\backslash T)}{\partial S'}$ which we know is non-negative from \Cref{lem:lift_derivative}.
			
		\end{description}
		\item[Case 2: ($E \in \cC_{S\backslash T}$)] 
		Let us branch based on whether there exists $\bp\in S\cap S'$ such that $E \notin \cC_{S-\bp}$. This can happen in two ways. First, there exists a point $\bp^{*}\in S\cap S'$ such that $S\backslash E=\{\bp^{*}\}$. Second, there exists a point point $\bp^{*}\in S\cap S'$ such that $S\backslash E= S\backslash \{\bp^{*}\}$
		\begin{figure}[ht]
			\begin{subfigure}{.5\textwidth}
				\centering
				\includegraphics[width=.8\linewidth,page=10]{Figures/figures.pdf}  
				\caption{Case 2a}
				\label{fig:case2a}
			\end{subfigure}
			\begin{subfigure}{.5\textwidth}
				\centering
				\includegraphics[width=.8\linewidth,page=11]{Figures/figures.pdf}  
				\caption{Case 2b}
				\label{fig:case2b}
			\end{subfigure}
			\caption{Case 2}
			\label{fig:case2}
		\end{figure}
		\begin{description}
			\item[Case 2a: ($\exists \bp^{*}\in S\cap S', S\backslash E=\{\bp^{*}\}$)]
			First, note that $E \in \cC_{S\backslash T}$ for such a cut (see \Cref{fig:case2a}).
			Observe that for any $\bp\neq \bp^{*} \in S\cap S'$, we have $S\backslash (E\cup \{\bp\})= S\backslash E$ and there cannot be a different $\bp$ such that $S\backslash E=\{\bp\}$ for the same $E$. This implies that the coefficient of $z_E$ in the sum of \Cref{eqn:4} and, \Cref{eqn:3} summed over all $\bp\in S\cap S'$ is 
			\begin{align*}
				&\frac{\partial f(S\backslash (T\cup E))}{\partial S'}+ \frac{\hat{\partial} f(S\backslash \{\bp^{*}\})}{\hat{\partial} z_T} -1 \\
				&= \frac{\partial f(\{\bp^{*}\})}{\partial S'}+\frac{\hat{\partial} f(S\backslash \{\bp^{*}\})}{\hat{\partial} z_T} -1 \\
				&= \frac{\hat{\partial} f(S\backslash \{\bp^{*}\})}{\hat{\partial} z_T} \geq 0
			\end{align*}
			\item[Case 2b: ($\exists \bp^{*}\in S\cap S', S\backslash E=S\backslash \{\bp^{*}\}$)]
			First, note that $E \in \cC_{S\backslash T}$ for such a cut (see \Cref{fig:case2b}). Observe that for any $\bp\neq \bp^{*} \in S\cap S'$, we have $E\in \cC_{S-\bp}$. This implies that the coefficient of $z_E$  in the sum of \Cref{eqn:4} and, \Cref{eqn:3} summed over all $\bp\in S\cap S'$ is 
			\begin{align*}
				&\frac{\partial f(S\backslash( T\cup E))}{\partial S'}+ \sum_{\bp\neq \bp^{*} \in S\cap S'}\left(\frac{\hat{\partial} f(S\backslash( E\cup \{\bp\}))}{\hat{\partial} z_T}-  \frac{\hat{\partial} f(S\backslash E)}{\hat{\partial} z_T}\right) \\
				&= \frac{\partial f((S\backslash \{\bp^{*}\}) \backslash T )}{\partial S'}+ \sum_{\bp \in (S\backslash \bp^{*})\cap S'}\left(\frac{\hat{\partial} f((S\backslash \{\bp^{*}\}) \backslash \{\bp\})}{\hat{\partial} z_T}-  \frac{\hat{\partial} f(S\backslash \{\bp^{*}\})}{\hat{\partial} z_T}\right).
			\end{align*}
			This, we can argue is non-negative by induction on $|S|$ ($S\leftarrow S\backslash \{\bp^{*}\}$). Note that $T\in \cC_{S-\bp^{*}}$ holds so we can use induction hypothesis.
			\item[Case 2c: ($E \in \cC_{S-\bp}, \forall \bp \in S\cap S'$ )]
			In this case, the coefficient of $z_E$ is simply,
			\begin{align*}
				& \frac{\partial f(S\backslash( T\cup E))}{\partial S'} + \sum_{\bp\in S\cap S'}\left(\frac{\hat{\partial} f(S\backslash( E \cup \{\bp\}))}{\hat{\partial} z_T}-  \frac{\hat{\partial} f(S\backslash E)}{\hat{\partial} z_T}\right) \\
				&= \frac{\partial f((S\backslash E)\backslash T)}{\partial S'} + \sum_{\bp\in (S\backslash E)\cap S'}\left(\frac{\hat{\partial} f((S\backslash E) \backslash \{\bp\}))}{\hat{\partial} z_T}-  \frac{\hat{\partial} f(S\backslash E)}{\hat{\partial} z_T}\right).
			\end{align*}
			This, we can again argue is non-negative by induction on $|S|$ ($S\leftarrow E$). Note that if $T\notin \cC_{S\backslash E}$, then $E\supseteq S\cap T$ in which case, the terms in the summation are zero and we are done directly without induction. 
		\end{description}
	\end{description}
	This completes the proof of \Cref{lem:final_blow}.
      \end{proof}
We can now wrap up: \Cref{lem:final_blow} was the missing piece in the
proof of \Cref{lem:lift_augmentation}. In turn,
because of \Cref{lem:lift_augmentation} we can assume all points in
$S\backslash T$ to have the same norm. \Cref{lem:special_case}
shows the desired bound of \Cref{lem:final} for this uniform case, completing
the proof.

\appendix

\section{Proofs from \Cref{sec:expon-clocks-proof}}
\label{app:clocks}

\begin{proof}[Proof of~\Cref{lem:single-point}]
  For any client $\bm x \in \bm X$. Since the \RT algorithm is
  translation and scaling invariant, we imagine that $\bm x = \bm
  0$. Now the expected cost incurred by this client is at most the
  distance to the unique center in its leaf region in the tree
  produced by the \RT algorithm, which is at most $\alpha(|\cU|)$. The
  claim now follows by scaling by the true distance to the closest
  center $\| \bm x - \pi(\bm x) \|$, summing over all
  $\bm x \in \bm X$, and using linearity of expectations.
\end{proof}

\subsection{Going from $\ell_1$ Metrics to Cut Metrics}
\label{sec:cut-metrics}

It is known that point set in $\ell_1$ can be written as a
non-negative sum of cut metrics~\cite{DL97}; we give the details here
for completeness. Given a point set $V \in \RR^d$, define
$\ell_i := \min_{\bm v \in V} v_i$ and $u_i = \max_{\bm v \in V}
v_i$. Then $L_1(V) := \sum_{i = 1}^d (u_i - \ell_i)$, and $D_1(V)$ is
the uniform distribution over
$\{(i, \theta) \mid \theta \in [\ell_i, u_i]\}$. Define for any
$S \sse V$ the non-negative quantity
\[ z_S = L_1(V)\cdot \Pr_{(i,\theta) \sim D_1(V)}[ (V \cap \{ \bm x
  \mid \text{sign}(\theta) \cdot x_i \geq \theta \}) = S ] . \] This
is a scaled version of the probability that for a random threshold
cut, the points of $V$ in the halfspace not containing the origin
equals $S$.  A direct calculation shows that for all
$\bm p, \bq \in V$, we have
\[ \| \bm p - \bm q \|_1 = \sum_S z_S \ones_{(|S \cap \{\bm p, \bm
    q\}| = 1)}. \] Now define an $\ell_1$-embedding
$\varphi: V \to \RR^{2^{|V|}}_{\geq 0}$ by setting, for any $S \sse V$,
\[ \varphi(\bm p)_S = \sum_S z_S \ones_{(\bm p \in S)}. \] Again,
$\| \bm p - \bm q \|_1 = \| \varphi(\bm p) - \varphi(\bm q)
\|_1$. Moreover, if the origin belongs to $V$, we get $\varphi(\bm 0)
= \bm 0$. 

\subsection{Proof of \Cref{cor:last-last}}
\label{app:last-last}

Let $\cS = \langle S_{1},\dots,S_{2^k} \rangle$ be the sequence of
cuts in increasing order of their sample values $X_{S}$. (We add the
subsets with $z_S = 0$ at the end of the sequence in some fixed but
arbitrary order.) However, it is not true that we remove points from
$U$ in this order: we need to reject cuts that do not cross the
remaining set $U$. (We say a set $A$ \emph{crosses} $B$ if $ B\cap A $
is a non-empty proper subset of $B$: i.e., if both $B \setminus A$ and
$B \cap A$ are non-empty.) Hence, we recast the ``last-point'' process
again as follows. Given any subset of points $U \sse V$:
\begin{enumerate}
\item Define $U^0 := U$ and $\cS^{0} = \langle\rangle$. In general,
  $U^{r}$ is the set of points remaining after considering sets
  $S_{1},\dots,S_{r}$, and let $\cS^{r}$ is a sequence of cuts
  selected until this point from the sequence~$\cS$. 
\item We define $\cS^{r+1} \gets \cS^{r} \circ \langle S_{r+1}\rangle$ if $S_{r+1}$
  crosses $U^{r}$, else we define $\cS^{r+1} \gets \cS^{r}$. Either
  way, $U^{r+1}= U \backslash \bigcup_{S\in \cS^{r+1}}S$.
\end{enumerate}
Note that $U^r$ and $\cS^r$ are both functions of $(U,\cS)$. Call the
cuts in $\cS^{2^k}$ to be the \emph{valid} cuts for set $U$. 
Given the same sequence of cuts $\cS$ we may get different
subsequences for each subset $U$ of $V$.  So it is not necessarily true
that $(U\backslash T)^r = U^r \backslash T$, because the set of valid
cuts can differ when considering point sets $U$ and $U\backslash
T$. But it turns out that we can still relate $(U\backslash T)^r$ and
$U^r \backslash T $ in some settings.

\begin{lemma}
  \label{lem:p_is_last}
  Given  sequence $\cS$ and index $0\leq r\leq 2^k$ such that
  $U^r \backslash T \neq \emptyset$, we have
  $(U\backslash T)^r=U^r\backslash T$.
\end{lemma}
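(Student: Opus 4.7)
The plan is to prove the claim by induction on $r$, with the induction hypothesis being: for every $s \leq r$ such that $U^s \setminus T \neq \emptyset$, we have $(U \setminus T)^s = U^s \setminus T$. Note that $U^s \supseteq U^r$ for $s \leq r$, so whenever $U^r \setminus T \neq \emptyset$ we also have $U^s \setminus T \neq \emptyset$ for all $s \leq r$, hence the hypothesis applies consistently.

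The base case $r=0$ is immediate since $(U\setminus T)^0 = U\setminus T = U^0 \setminus T$. For the inductive step, I would assume $(U\setminus T)^r = U^r \setminus T$ and $U^{r+1}\setminus T \neq \emptyset$, then compare how the next cut $S_{r+1}$ is treated by each of the two processes. There are two cases depending on whether $S_{r+1}$ crosses $U^r$.

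If $S_{r+1}$ does \emph{not} cross $U^r$, then either $S_{r+1} \cap U^r = \emptyset$ or $U^r \subseteq S_{r+1}$. In the first sub-case, $S_{r+1} \cap (U^r \setminus T) = \emptyset$, and in the second, $U^r \setminus T \subseteq S_{r+1}$; either way, $S_{r+1}$ does not cross $U^r \setminus T = (U\setminus T)^r$, so neither process advances and the two sides stay equal. If $S_{r+1}$ \emph{does} cross $U^r$, then $U^{r+1} = U^r \setminus S_{r+1}$, hence $U^{r+1} \setminus T = (U^r \setminus T) \setminus S_{r+1}$. The assumption $U^{r+1}\setminus T\neq \emptyset$ rules out $U^r \setminus T \subseteq S_{r+1}$; if additionally $S_{r+1} \cap (U^r\setminus T) \neq \emptyset$, then $S_{r+1}$ crosses $(U\setminus T)^r$, so $(U\setminus T)^{r+1} = (U^r\setminus T)\setminus S_{r+1} = U^{r+1}\setminus T$; otherwise $S_{r+1} \cap U^r \subseteq T$, so removing $S_{r+1}$ from $U^r$ removes only elements of $T$, giving $U^{r+1}\setminus T = U^r \setminus T = (U\setminus T)^r = (U\setminus T)^{r+1}$ (since $S_{r+1}$ does not cross $(U\setminus T)^r$ in this sub-case).

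The only subtle point is that the claim does \emph{not} hold without the hypothesis $U^r \setminus T \neq \emptyset$: a cut $S_{r+1}$ could contain $U^r$ entirely (so be rejected by the $U$-process) while properly crossing $U^r \setminus T$, and symmetrically cuts may be rejected by the $(U\setminus T)$-process but not by the $U$-process once all of $U\setminus T$ has been removed. The role of the hypothesis $U^{r+1}\setminus T\neq \emptyset$ in the inductive step is precisely to exclude the ``$U^r \setminus T \subseteq S_{r+1}$'' scenario, which is where the two processes could otherwise desynchronize. This case analysis is the main (and only) obstacle, and it is routine once laid out as above.
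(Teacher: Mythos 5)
Your proof is correct and uses essentially the same inductive argument as the paper; the only cosmetic difference is that you case-split first on whether $S_{r+1}$ crosses $U^r$, whereas the paper splits first on whether it crosses $(U\setminus T)^r$, but the resulting case analysis and the use of the hypothesis $U^{r+1}\setminus T\neq\emptyset$ to exclude the desynchronizing sub-case are identical in substance.
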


\begin{proof}[Proof of \Cref{lem:p_is_last}]
  We prove this by induction on $r$. For $r=0$, we know that
  $U^{0}\backslash T=U\backslash T = U^0\backslash T$. Suppose the
  claim holds for $r=t$, then we want to show it holds for $r=t+1$.
  Suppose $U^{t+1} \backslash T \neq \emptyset$, then since
  $U^{t+1} \sse U^t$ we have $U^{t} \backslash T \neq \emptyset$, and
  by the inductive hypothesis we get that
  $(U\backslash T)^t=U^t\backslash T$. In particular, we get that
  $(U\backslash T)^t \sse U^t$. Hence if the new cut $S_{t+1}$ crosses
  $(U\backslash T)^t$, then it also crosses $U^t$, and therefore
  \[ (U\backslash T)^{t+1}= (U\backslash T)^{t} \setminus S_{t+1} =
    (U^t\backslash T) \setminus S_{t+1} = (U^{t+1}\backslash T). \] So
  suppose the new cut $S_{t+1}$ does not cross $(U\backslash T)^t$,
  and thus $(U\backslash T)^{t+1}= (U\backslash T)^{t}$. There are two
  cases: either $(U\backslash T)^t \subseteq S_{t+1}$ or
  $(U\backslash T)^t \cap S_{t+1}=\emptyset$. In the first case, if
  $S_{t+1}$ crosses $U^t$, then
  $U^{t+1}\backslash T = (U^t \setminus S_{t+1}) \setminus T =
  \emptyset$, and hence there is nothing to prove. Else if $S_{t+1}$
  does not cross $U^t$, then $U^{t+1} = U^t$ and also
  $(U\backslash T)^{t+1}= (U\backslash T)^{t}$, so we are done using
  the inductive hypothesis.

  In the second case,
  $(U\backslash T)^{t}\cap S_{t+1}= (U^{t}\backslash T)\cap
  S_{t+1}=\emptyset$, so we get that $(U^t \cap S_{t+1}) \subseteq T$.
  This means that $U^{t+1}\backslash T = U^{t}\backslash T $ regardless
  of whether $S_{t+1}$ crosses $U^t$. Since $U^{t}\backslash T
  =(U\backslash T)^{t} = (U\backslash T)^{t+1}$, we are done.
\end{proof}

As discussed above, $|U^{2^k}| = 1$, and we define this unique point
$p \in U^{2^k}$ to be the ``last'' point in $U$, and we call this
event ``$p$ is last in $U$''.

\LastPointLemma*

\begin{proof}
Using the definition of the event \textnormal{``$p$ is last in $U$''}, we know that $U^{2^{k}}=\{p\}$ and since $p\notin T$, we have $U^{2^k}\backslash T \neq \emptyset$. Using \Cref{lem:p_is_last}, we can say that $(U\backslash T)^{2^k}=\{p\}$.  
\end{proof}

\section{Proofs from Section~\ref{sec:tight-lower-bounds}}

\SetSystem*

\begin{proof}[Proof of~\Cref{lem:random-construction}]
  For some parameter $p \in (0,\nf12)$, %
  consider $k$ independently chosen sets, each obtained by adding in
  each element of $[k]$ independently with probability $p$. The
  expected size of each such set is $\bar{s} := pk$; moreover, each
  element of $[k]$ should hit a $p$ fraction of the sets, so hitting
  all the $k$ sets should require
  $\bar{h} := \ln_{1-p}(1/k) \approx (1/p) \ln k$ sets, giving
  $\bar{s} \bar{h}/k = \ln k$. We now show that with non-zero
  probability, there does exist a set system with parameters close to
  these.

  Define $\cB_1$ to be the bad event that some set $S_i$ has size
  smaller than $s := (1-\eps)pk$, and $\cB_2$ to be the bad event that the
  hitting set has size at most some parameter $h$. We now show that
  for suitable choices of $\eps$ and $h$, we have
  $\Pr[ \cB_1 ] < \nf12$ and $\Pr[ \cB_2 ] \leq \nf12$, which
  completes the proof. We consider the event $\cB_2$ first: a union
  bound shows that
  \[ \Pr[ \cB_2 ] \leq \sum_{H: |H| = h} \Pr\left[ \forall i \in [k] : S_i
    \cap H \neq \emptyset \right] = {k \choose h} (1-(1-p)^h)^k \leq
    \frac{(2k)^h}{2} \cdot e^{-k(1-p)^h}. \] Setting this upper bound
  to equal $\nf12$, we get
  \begin{gather}
    \ln h-h\ln(1-p) = \ln\left( \frac{k}{\ln 2k}\right).
  \end{gather}
  We now use that $-p-p^2 \leq \ln (1-p)\leq -p$ for $p\in
  [0,\nf12]$ to get
  \begin{gather}
    \ln h +hp\leq \ln \left(\frac{k}{\ln 2k} \right)\leq \ln h
    +hp(1+p). \label{eq:4}
  \end{gather}
  Since $h \geq 1$, the left-most inequality of~(\ref{eq:4}) gives
  $h \leq (1/p) \ln (\frac{k}{\ln 2k})$. However, we want a lower
  bound on $h$, so we  substitute this into the
  right-most inequality of~(\ref{eq:4}) to get
  \begin{gather}
    \frac1{1+p} \cdot \ln \underbrace{\left(\frac{k}{\ln 2k} - \frac1p \ln
      \bigg(\frac{k}{\ln 2k}\bigg) \right)}_{(\star)}\leq p\,h = \frac{s\,h}{(1-\eps)k}.
  \end{gather}
  We can now set $p := \frac{2 \ln 2k}{k^{1/3}}$ (which ensures that the second
  term in $(\star)$ is at most half the first for a large enough $k$) and get 
  \[ sh/k \geq \frac{1-\eps}{1+p} \cdot \ln \left(\frac{k}{2\ln 2k}
    \right). \]
  Now setting $\eps := 1/k^{1/3}$ and using a Chernoff bound and a union bound,
  \[ \Pr[ \cB_1 ] < k \cdot e^{-\eps^2 pk/2} = \frac12. \] Taking a
  union bound over the two bad events, we get that with non-zero
  probability our sets in $\cS$ are of size $\approx k^{2/3} \ln
  k$, the hitting set is of size $\approx k^{1/3}$, and $hs/k \geq (1-
  \frac{2\ln 2k}{k^{2/3}}) (\ln k - O(\ln \ln k))$.  
\end{proof}

\section{Proofs from~\Cref{sec:kmeans}}
\label{sec:app-kmeans}

\subsection{Proof of the Stretch-vs.-Separation \Cref{clm:stretch-separation}}
\label{sec:proof-stretch-vs-separation}

\begin{wrapfigure}{R}{0.5\textwidth}
  \centering
  \includegraphics[width=0.7\linewidth,page=16]{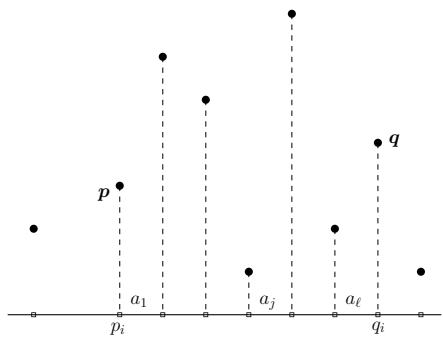}
  \caption{projection of points onto an axis}
  \label{fig:proj}
\end{wrapfigure}
We prove a lemma about point sets that immediately implies
\Cref{clm:stretch-separation}. Consider a set $S \sse \RR^d$ of
points, and focus on $p,q \in S$. Consider some dimension $i$, and
consider the projection of the points onto that dimension (as in the
figure). Let $a_1,\dots,a_\ell$ be the lengths of intervals into which
the line segment joining $p_i$ and $q_i$ is partitioned by projections
of other points in $S$ onto the $i^{th}$ dimension. Any cut
$(i,\theta)$ intersecting the $j^{th}$ interval splits the centers
into two groups with at least $\min(j,\ell+1-j)$ centers on either
side.

Define the \emph{stretch}
$s_i$ between $p$ and $q$ in the $i^{th}$ dimension, and the
\emph{expected separation} ${\text{sep}}_i$ after choosing a random cut that cuts
the $j^{th}$ interval with probability proportional to $a_j^2$ as
\begin{align*}
s_i:= \frac{\big(\sum_{j\in [\ell]}a_j\big)^2}{\sum_{j\in
  [\ell]}a_j^2} \qquad\text{and}\qquad 
\text{sep}_i:=  \frac{\sum_{j\in [\ell]}a_j^2\cdot \min(j,\ell+1-j)}{\sum_{i\in [\ell]}a_j^2}.
\end{align*}

\begin{lemma}
  \label{lem:expected_separation}
  $\text{sep}_i\geq \frac{s_i}{8(1+\ln(\nicefrac{(2|S|)}{s_i}) )}$.
\end{lemma}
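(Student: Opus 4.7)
Let $Q := \sum_j a_j^2$ and $L := \sum_j a_j$, so that $s_i = L^2/Q$ and $\text{sep}_i \cdot Q = \sum_j a_j^2\, m_j$ where $m_j := \min(j,\ell+1-j) \geq 1$. The small-$s_i$ range is immediate: when $s_i \leq 8$ we have $\text{sep}_i \geq 1 \geq s_i/8$, which already beats the claimed right-hand side $s_i/(8(1+\ln(2|S|/s_i)))$. For $s_i \geq 8$ the plan is to apply Cauchy-Schwarz \emph{twice} at a threshold $T := \lfloor s_i/8 \rfloor \geq 1$, splitting the indices into a near-boundary set $\cN := \{j : m_j \leq T\}$ (of size at most $2T$) and a central set $\cC := \{j : m_j > T\}$, with $L = L_1 + L_2$ the corresponding partial sums.

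The first Cauchy-Schwarz estimate controls the boundary mass: since $|\cN| \leq 2T$,
\[
  L_1^2 \;\leq\; 2T \sum_{j \in \cN} a_j^2 \;\leq\; 2TQ \;\leq\; (s_i/4)\,Q \;=\; L^2/4,
\]
so $L_1 \leq L/2$ and hence $L_2 \geq L/2$. The second, weighted, Cauchy-Schwarz extracts $\text{sep}_i$ from the central mass, writing $a_j = a_j\sqrt{m_j}\cdot 1/\sqrt{m_j}$:
\[
  L_2^2 \;\leq\; \Big(\sum_{j\in\cC} a_j^2 m_j\Big) \Big(\sum_{j\in\cC} \tfrac{1}{m_j}\Big) \;\leq\; \text{sep}_i\cdot Q \cdot 2\ln\!\big(\ell/(2T)\big),
\]
where the last step uses that for each value $m \in (T,\ell/2]$ at most two indices have $m_j = m$, together with the harmonic estimate $\sum_{m=T+1}^{\lceil \ell/2\rceil} 1/m \leq \ln(\ell/(2T))$. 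Combining with $L_2 \geq L/2$ yields $\text{sep}_i \geq s_i/\big(8\ln(\ell/(2T))\big)$, and using $\ell \leq 2|S|$ together with $T \geq s_i/16$ gives $\ln(\ell/(2T)) \leq \ln(16|S|/s_i) = O(1)\cdot(1+\ln(2|S|/s_i))$, which matches the claim up to an absolute constant.

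The main obstacle is purely cosmetic: obtaining exactly the constant $8$ in the denominator (as opposed to an $O(1)$ constant like $24$) requires a slightly finer choice of $T$ or a more careful estimate for absorbing the extra $\ln 8$ when converting $\ln(16|S|/s_i)$ into $1+\ln(2|S|/s_i)$. Conceptually the argument is robust: the first Cauchy-Schwarz asserts that at the scale $T \asymp s_i$ the boundary indices cannot carry more than half of the total length $L$, while the second converts this surplus in the central region into a $\text{sep}_i$ lower bound with precisely the harmonic logarithm $\ln(|S|/s_i)$ that the claim requires.
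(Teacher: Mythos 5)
Your argument is correct (up to the absolute constant, which you already flag), but it takes a genuinely different route from the paper's. The paper treats the problem as a constrained minimization, $\min \sum_j a_j^2 m_j$ subject to $\sum_j a_j = \sqrt{s_i}$ and $\sum_j a_j^2 = 1$, passes to the Lagrangian, solves the stationarity condition to get $a_j = \lambda/(\gamma + m_j)$, bounds $\sum_j 1/(\gamma+m_j)$ by a logarithm, and then tunes the multiplier $\gamma$ via the auxiliary \Cref{obs:x=ylogy}. That Lagrangian path avoids any case split and lands exactly on the constant $8$. Your path instead picks a threshold $T \asymp s_i$, splits indices into a boundary zone $\{m_j\le T\}$ and a central zone $\{m_j>T\}$, and applies Cauchy--Schwarz twice: unweighted on the boundary (exploiting $|\cN|\le 2T$) to push half the length mass into the central zone, and weighted on the central zone (writing $a_j = a_j\sqrt{m_j}\cdot m_j^{-1/2}$) to convert that length into a $\text{sep}_i$ lower bound, with the same harmonic sum appearing as $\sum_{m>T} 1/m$. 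This is more elementary and makes the ``$T\asymp s_i$'' scale explicit, at the cost of the preliminary case $s_i\le 8$ and a slightly larger constant after converting $\ln(16|S|/s_i)$ into $1+\ln(2|S|/s_i)$. One small point of care: the harmonic estimate should be against $\ln(\lceil\ell/2\rceil/T)$ rather than $\ln(\ell/(2T))$; this is harmless after the final absorption since $\ell \le |S|-1$, but it is worth keeping the correct upper limit. Both proofs bottom out at the same combinatorial fact, namely that each value $m\in\{1,\dots,\lceil\ell/2\rceil\}$ occurs at most twice among the $m_j$, so the harmonic sum is $O(\log)$; the choice is between optimizing $\gamma$ by hand (paper, exact constant) and a clean two-scale Cauchy--Schwarz (yours, cleaner but with a worse constant).
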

Before we prove this, let us generalize this to higher dimensions:
\begin{cor}
  \label{cor:stretch-separation}
  Consider a set $S \sse \RR^d$ and two points $p,q \in S$ having
  stretch $s$. If we choose a threshold cut $(i,\theta)$ from the
  distribution $D_2(S)$ and condition on separating $p,q$, the expected
  number of points in each side of the cut decreases by at least
  $\frac{s}{8(1+\ln (\nicefrac{(2|S|)}{s}))}$.
\end{cor}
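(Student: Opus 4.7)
The plan is to decompose the conditional distribution (call it $D_2''(S)$, namely $D_2(S)$ conditioned on separating $p$ from $q$) across dimensions and intervals, apply the one-dimensional \Cref{lem:expected_separation} in each dimension, and combine the per-dimension bounds via a Jensen-type inequality.

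First I would unfold the conditional distribution. Let $a_{i,1},\ldots,a_{i,\ell_i}$ denote the interval lengths of $\cI_i(p_i,q_i)$, and write $d_2^{(i)}(p,q) := \sum_j a_{i,j}^2$ for the contribution of dimension $i$ to $d_2(p,q)$. A direct computation from the definition of $D_2(S)$ shows that, under $D_2''(S)$, dimension $i$ is selected with probability $w_i := d_2^{(i)}(p,q)/d_2(p,q)$, and given dimension $i$ the $j$th interval between $p_i$ and $q_i$ is selected with probability $a_{i,j}^2/d_2^{(i)}(p,q)$. The precise location of $\theta$ inside the chosen interval, governed by $P_{a,b}(\theta)$, is irrelevant to our bound: any $\theta$ in the $j$th interval places at least $\min(j,\ell_i+1-j)$ points of $S$ on each side of the hyperplane, regardless of its position.

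Next I would apply \Cref{lem:expected_separation} dimension by dimension. Writing $s_i := |p_i-q_i|^2/d_2^{(i)}(p,q)$ for the dimension-$i$ stretch, the 1D lemma yields
\begin{align*}
\EE_{(i,\theta)\sim D_2''(S)}\bigl[\min(|S\cap H^+|,|S\cap H^-|)\,\bigm|\,\mathrm{dim}=i\bigr] \;\geq\; \frac{s_i}{8(1+\ln(2|S|/s_i))},
\end{align*}
so the unconditional expected separation is at least $\sum_i w_i\cdot s_i/(8(1+\ln(2|S|/s_i)))$. To aggregate this in terms of the global stretch $s := \|p-q\|_2^2/d_2(p,q) = \sum_i w_i s_i$, define $\mu_i := w_i s_i/s$ (a probability measure on dimensions), $L := 1+\ln(2|S|/s)$, and $\alpha_i := s_i/s$. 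Dividing through by $s/8$, it suffices to show
\begin{align*}
\EE_\mu\!\left[\frac{1}{L-\ln\alpha_i}\right] \;\geq\; \frac{1}{L}.
\end{align*}
Convexity of $x\mapsto 1/(L-x)$ on $x<L$ combined with Jensen gives $\EE_\mu[1/(L-\ln\alpha_i)]\geq 1/(L-\EE_\mu[\ln\alpha_i])$, and convexity of $x\ln x$ together with $\sum_i w_i\alpha_i=1$ yields $\EE_\mu[\ln\alpha_i]=\sum_i w_i(\alpha_i\ln\alpha_i)\geq 0$, which finishes the argument.

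The main technical obstacle I anticipate is the interaction between the intricate density $P_{a,b}(\theta)$ and the conditioning on separation; fortunately this is benign because the quantity we lower-bound depends only on which interval contains $\theta$, so the $\theta$-density integrates out trivially. A secondary point is verifying that Jensen applies in the valid range, which reduces to the bound $s_i\leq \ell_i \leq |S|$ that follows from Cauchy--Schwarz on the interval lengths $a_{i,j}$ and gives $\ln\alpha_i \leq \ln(|S|/s) < L$.
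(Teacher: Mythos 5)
Your argument is correct and follows the same overall structure as the paper's: decompose the conditional distribution $D_2''(S)$ over dimensions (your $w_i$ coincides exactly with the paper's selection probability for the random dimension $I$), apply \Cref{lem:expected_separation} in each dimension, then aggregate via Jensen. Where you diverge is in the aggregation step: the paper states that $h(x)=\frac{x}{1+\ln(\alpha/x)}$ is convex on $(0,\alpha]$ and applies Jensen once to the measure with weights $w_i$, while you change measure to $\mu_i = w_i\alpha_i$ and reduce the claim to $\EE_\mu\!\left[\frac{1}{L-\ln\alpha_i}\right]\geq \frac{1}{L}$, which you then establish with two applications of Jensen for the standard convex functions $x\mapsto 1/(L-x)$ and $x\mapsto x\ln x$ together with the normalization $\sum_i w_i\alpha_i = 1$. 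The substance is the same, but your version avoids verifying the second derivative of the slightly unusual function $h$, at the modest cost of a change of measure and the domain check $\ln\alpha_i < L$ (which you correctly dispatch via $s_i\leq\ell_i\leq|S|$ from Cauchy--Schwarz). You also correctly observe that the exact density $P_{a,b}(\theta)$ inside a chosen interval is irrelevant because the separation count $\min(j,\ell_i+1-j)$ depends only on which interval $\theta$ lands in — the same implicit simplification the paper uses. Both routes are sound; yours is slightly more self-contained.
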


\begin{proof}
  The stretch between $p,q$, and the expected separation conditioned
  on separating the two centers is
  \begin{align*}
    \overline{s}:= \frac{\sum_{i\in [d]}\big( \sum_{j\in
    [\ell_i]}a_{i,j}\big)^2}{\sum_{i\in [d]}\sum_{j\in
    [\ell_i]}a_{i,j}^2} \qquad \text{and} \qquad  \overline{sep}:= \frac{\sum_{i\in [d]}\big( \sum_{j\in
    [\ell_i]}a_{i,j}^2\cdot \min(j,\ell_i+1-j)\big)}{\sum_{i\in
    [d]}\sum_{j\in [\ell_i]}a_{i,j}^2},
  \end{align*}
  where $a_{i,j}$ is the width in the partition defined above along
  dimension $i$.  Define a random variable $I \in [d]$ on the
  dimensions, that takes on value $i$ with probability
  \[ \frac{ \sum_{j\in [\ell_i]}a_{i,j}^2}{\sum_{i\in [d]}\sum_{j\in
        [\ell_i]}a_{i,j}^2}. \] Then we have $\overline{s}=\EE_I[s_I]$ and
  $\overline{\text{sep}}=\EE_I[\text{sep}_I]$. Finally, the function
  $h(x)=\frac{x}{1+\ln(\alpha/x))}$ being convex for
  $0\leq x\leq \alpha$, we can use Jensen's inequality to get
  \[
    \overline{\text{sep}}= \mathbb{E}_I[\text{sep}_I] \stackrel{(\text{Lem.\ref{lem:expected_separation}})}{\geq}
    \mathbb{E}_I\left[\frac{s_I}{8\left(1+\ln\left(\frac{2|S|}{s_I}\right)\right)}\right]
    \geq \frac{\overline{s}}{8\left(1+\ln\left(\frac{|S|}{\overline{s}}\right)\right)}. \qedhere
  \]
\end{proof}
Finally, translating to the language of \S\ref{sec:kmeans} shows that \Cref{clm:stretch-separation} is just a reformulation of
\Cref{cor:stretch-separation}. So it suffices to prove
\Cref{lem:expected_separation}, which we do next.

\begin{proof}[Proof of~\Cref{lem:expected_separation}]
  Let us look at the following constrained minimization problem
  \begin{align}
    \min \text{sep}_i &= \textstyle \sum_j a_j^2 \cdot \min(j,\ell+1-j) \label{eqn:primal} \\ 
    s.t.\quad &\textstyle \sum_j a_j = \sqrt{s_i} \nonumber \\  
             &\textstyle \sum_{j} a_j^2 = 1.  \nonumber
  \end{align}
  The Lagrangian dual of the above primal program is
  \begin{align*}
    \mathcal{L}(a,\lambda,\gamma) = \sum_{j}a_{j}^2\cdot
    \min(j,\ell+1-j) -2\lambda \bigg( \sum_{j}a_j-\sqrt{s_i}
    \bigg)+\gamma \bigg(\sum_{j}a_{j}^2-1 \bigg), 
  \end{align*}
  and setting the gradient to zero means the minima for $a$ occur when
  $a_{j} = \frac{\lambda}{\gamma+\min(j,\ell+1-j)}$. (We assume
  $\gamma\geq 0$.) Substituting and simplifying gives
  \begin{align*}
    2\lambda\sqrt{s_i}-\sum_{j}\frac{\lambda^2}{\gamma+\min(j,\ell+1-j)}-\gamma . 
  \end{align*}
  We can maximize over $\lambda$ which happens when 
  $\sum_{j}\frac{\lambda}{\gamma+\min(j,\ell+1-j)}=\sqrt{s_i}$; substituting gives
  \begin{align*}
    \frac{s_i}{\sum_{j}\frac{1}{\gamma+\min(j,\ell+1-j)}}-\gamma \geq
    \frac{s_i}{2\ln(1+\frac{\ell+1}{2\gamma})}-\gamma  \geq
    \frac{1}{2}\bigg(\frac{s_i}{\ln(1+\frac{|S|}{2\gamma})}-2\gamma
    \bigg). 
  \end{align*}
  It remains to choose $\gamma$. For convenience, we set the above
  expression to $\gamma$; this means 
  \begin{align*}
    \frac{s_i}{\ln(1+\frac{|S|}{2\gamma})}=4\gamma \qquad \iff \qquad
    \frac{2|S|}{s_i}=\frac{|S|/2\gamma}{\ln(1+\frac{|S|}{2\gamma})}. 
  \end{align*}
  Using \Cref{obs:x=ylogy} below, we get $\gamma \geq \frac{s_i}{8(1+\ln(\nicefrac{(2|S|)}{s_i}) )}$.
\end{proof}

\begin{obs}
  \label{obs:x=ylogy}
  For any $x \geq 0$ and $y \geq 1$ such that $y=\frac{x}{\ln(1+x)}$, we have $x\leq 2y(1+\ln y)$. 
\end{obs}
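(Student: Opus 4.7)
The plan is to eliminate the implicit relationship between $x$ and $y$ by the substitution $u := \ln(1+x) \geq 0$, which yields $x = e^u - 1$ and, from the defining equation $y = x/\ln(1+x)$, the clean expression $y = (e^u - 1)/u$. (The hypothesis $y \geq 1$ is automatic since $e^u - 1 \geq u$ for all $u \geq 0$; the boundary case $x = 0$ is handled as the limit $u \to 0^+$, where $y \to 1$ and the inequality $0 \leq 2$ is trivial.) Writing $x = yu$ and dividing the target inequality $x \leq 2y(1+\ln y)$ by $y > 0$, the goal reduces to $u \leq 2 + 2\ln y$, or equivalently $y \geq e^{u/2 - 1}$.

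Next I will substitute the formula $y = (e^u - 1)/u$, so that what remains to show is $e^u - 1 \geq u \cdot e^{u/2 - 1}$. Multiplying both sides by $e^{1 - u/2}$ and regrouping, this is equivalent to
\[
  2e\,\sinh(u/2) \;=\; e\bigl(e^{u/2} - e^{-u/2}\bigr) \;\geq\; u.
\]
This last inequality follows from the elementary bound $\sinh(t) \geq t$ for $t \geq 0$ (immediate from the Taylor expansion, or from $\sinh(0)=0$ together with $\sinh'(t) = \cosh(t) \geq 1$): applied at $t = u/2$, it gives $2e\,\sinh(u/2) \geq 2e \cdot (u/2) = eu \geq u$.

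The only ``obstacle'' is recognizing the right change of variables; once $u = \ln(1+x)$ is used, the implicit relation becomes the one-variable identity $y = (e^u-1)/u$, and the target collapses to $\sinh(t) \geq t$ with a factor of $e$ left to spare, so no further case analysis or calculus is needed.
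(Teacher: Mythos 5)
Your proof is correct, and it takes a genuinely different route from the paper's. The paper exploits the monotonicity of $x \mapsto x/\ln(1+x)$ to reduce the claim to the single-variable inequality $\ln\bigl(1+2y(1+\ln y)\bigr) \leq 2(1+\ln y)$ for $y \geq 1$, which it then dispatches by a derivative computation (checking the value at $y=1$ and that the difference has nonnegative derivative). You instead rationalize the implicit relation by the substitution $u = \ln(1+x)$, which turns $y = x/\ln(1+x)$ into the explicit formula $y = (e^u-1)/u$ and $x = yu$; after dividing by $y$ the target collapses to $2e\sinh(u/2) \geq u$, a consequence of the textbook bound $\sinh t \geq t$. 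Your version is more self-contained and elementary: nothing is left to an unstated derivative check, the key estimate is a standard one-liner, and there is slack (the spare factor of $e$) --- in fact your chain $y = (e^u-1)/u \geq e^{u/2}$ already proves the slightly sharper $x \leq 2y\ln y$ before the extra factor is even used. The paper's version, by contrast, reasons directly in the $y$ variable, which is arguably more natural given how the observation is invoked in the proof of Lemma~\ref{lem:expected_separation}, but at the cost of a messier calculus step.
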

\begin{proof}
  Since $x/\ln(1+x)$ is an increasing function, it is sufficient to prove that 
  \begin{align*}
    y\leq \frac{2y(1+\ln y)}{\ln(1+2y(1+\ln y))}.
  \end{align*}
  Finally, taking derivatives shows $\ln(1+2y(1+\ln y))\leq 2(1+\ln
  y)$ for $y \geq 1$. %
\end{proof}

\section{Proofs from~\Cref{sec:tight}}
\label{sec:app-tight}

\begin{proof}[Proof of  \Cref{lem:reduce-to-deriv}]
  For a point $\bp\in U$, define
  \begin{align}
    \label{eqn:trajectory}
    z_E(t):= \begin{cases} 
      z_E\cdot t & \bp \in E \\
      z_E & \bp \notin E.
    \end{cases}
  \end{align}
  Let $f_t(U)$ and $\ell_t(\bp)$ be the function value of points in
  $U$ and norm of $\bp$ at time $t$ when the embedding is changing
  according to the trajectory given by \Cref{eqn:trajectory}.

  We first claim that $\ell_0(\bp)=0$ and $f_0(U)=0$.  Indeed, We know
  that
  $\ell_t(\bp)=\sum_{S:\bp\in S}z_S(t)=\sum_{S:\bp\in S}z_S\cdot t
  =t\cdot\ell(\bp)$, which implies that $\ell_0(\bp)=0$. The second
  part is proved by induction on $|U|$: If $|U|=1$, then
  $f_0(U)=\ell_0(\bp)=0$. If $|U|\geq 2$ and $z_E(0)=0$ for all
  $E\in \cC_{\Univ}$, then $f_0(U)=0$ by definition. Otherwise, using
  \Cref{eqn:definition_global}, we can write
  \begin{align*}
    f_{0}(U)= \frac{\sum_{E\in \cC_{\Univ}}z_E(0)\cdot f_0(U\backslash U_E )}{\sum_{E\in \cC_{\Univ}}z_E(0)}. 
  \end{align*}
  In the numerator of $f_0(U)$, either $z_E(0)=0$ when $\bp\in E$,
  or $f_0(U\backslash U_E)=0$ when $\bp\in U\backslash U_E$ by
  inductive hypothesis which concludes the proof.

  Now using the chain rule and the assumed bound on the derivative,
  \begin{align*}
    \frac{\mathrm{d}f_t(U)}{\mathrm{d}t}=
    \sum_{E\subseteq [k]}\frac{\partial f(U)}{\partial z_E}\Big|_{\bm z=\bm z(t)}\cdot
    \frac{\mathrm{d}z_E(t)}{\mathrm{d}t}=& \sum_{E: \bp\in E}z_E\cdot \frac{\partial f(U)}{\partial z_E}\Big|_{\bm z=\bm z(t)}\\
                                         &\leq \sum_{E: \bp\in E}z_E\cdot \beta_{k-1}=\beta_{k-1}\cdot \ell(\bp).
  \end{align*}
  Integrating gives $f_1(U)-f_0(U)=f(U)-f_0(U) = f(U) \leq \beta_{k-1}\cdot \ell(\bp)$. 
\end{proof}

\begin{proof}[Proof of~\Cref{lem:derivative_observations}]
  The proof is by induction on $|S|$.
	The base case is when $|S|=1$. If $S=\{r\}$, then we have $\frac{\partial f(S)}{\partial z_T}=\frac{\partial \ell(r)}{\partial z_T}$. We know that $\ell(r)$ can be written as 
	\begin{align}
		\label{eqn:norm_expansion}
		\ell(r)= \sum_{E\ni r}z_E.
	\end{align}
	The derivative $\frac{\partial \ell(r)}{\partial z_T}$ is equal to $1$ if $z_T$ appears as a term in the expansion of $\ell(r)$ as in \Cref{eqn:norm_expansion} and is equal to $0$ otherwise. In the case of statement \ref{lem:shift_derivative}, since $T\in \cC_S^{\ell}$, we have $S\subseteq T \implies r\in T$ this concludes the base case for statement \ref{lem:shift_derivative}. Similarly, if statement \ref{lem:empty_derivative} holds, $z_T$ does not appear as a term in the expansion of $\ell(r)$ and hence the partial derivative of $f(S)=\ell(\mu_r)$ with respect to $z_T$ is zero.
	
	For $|S|\geq 2$, we have $\mathbbm{1}[T\in \cC_S]=0$ in both the cases of statements \ref{lem:shift_derivative} and \ref{lem:empty_derivative}. Using this in \Cref{eqn:derivative} gives
	\begin{align}
		\label{eqn:derivative_boundarycase}
		\frac{\partial f(S)}{\partial z_T} = \frac{\sum_{E\in \cC_S}z_E\cdot \frac{\partial f(S\backslash E)}{\partial z_T}}{\sum_{E\in \cC_S}z_E}. 
	\end{align}
	For the inductive step, since
    $S\subseteq T$ in statement \ref{lem:shift_derivative}, we have $(S\backslash E) \subseteq T $. By the inductive
	hypothesis, the recursive derivative terms
	$\frac{\partial f(S\backslash E)}{\partial z_T}$ are
	equal to $1$. Using \Cref{eqn:derivative_boundarycase}, we are
	done. The inductive step for the case of statement
	\ref{lem:empty_derivative} follows similarly where all terms $\frac{\partial f(S\backslash E)}{\partial z_T}$ are equal to $0$. Finally, using the fact that
	$f(S)$ is always non-negative for any $z$, and the fact that
	$\frac{\partial f(S)}{\partial z_E}=1$ for any
	$S\subseteq E $, we have
	$f(S)\geq \sum_{E\supseteq S}z_E$.
\end{proof}

\begin{proof}[Proof of~\Cref{lem:pseudoderivative_upperbound}]
  The proof is by induction on $|S|$. Observe that for
  $T\notin \cC_S$, either $T\supseteq S$ or $T\cap S=\emptyset $, and then $\frac{\widehat{\partial}f(S)}{\widehat{\partial}z_T}=\frac{\partial f(S)}{\partial z_T}$ by \Cref{defn:pseudo_derivative} and \Cref{lem:derivative_observations}. The fact that the pseudo-derivative is non-negative in this case is immediate from the \Cref{defn:pseudo_derivative}. For $|S|=1$,
  since $T\notin \cC_S$, we are done. If
  $|S|\geq 2$ and $T\in \cC_S$, the inductive hypothesis implies that $\frac{\widehat{\partial}f(S\setminus E)}{\widehat{\partial}z_T}\geq \max \left(\frac{\partial f(S\setminus E)}{\partial z_T},0 \right)$. It remains to prove that
  \begin{align*}
    \max\left(f(S\setminus T)-f(S),0  \right)\leq  f(S\setminus T)-\sum_{E\supseteq S }z_E.
  \end{align*}
  Using \Cref{lem:derivative_observations}, we get $f(S\setminus T)-f(S)\leq f(S\setminus T)-\sum_{E\supseteq S}z_E$. The other inequality $\sum_{E\supseteq S }z_E \leq f(S\setminus T)$ follows from the below argument, again using statement \ref{cor:shift_inequality} of \Cref{lem:derivative_observations}:
  \[
    f(S\setminus T)\geq \sum_{E \supseteq S\setminus T }z_E\geq
    \sum_{E\supseteq S}z_E \]
  This concludes the proof of the lemma.
\end{proof}

{\small
\bibliographystyle{alpha}
\bibliography{Writeup/bibdb}
}

\end{document}